\newif\ifreviewmode
\reviewmodefalse

\ifreviewmode
    \documentclass[12pt,journal,onecolumn]{IEEEtran}
    \newcommand{\mywidth}{0.7\columnwidth}

    \usepackage{xparse}
    \NewDocumentEnvironment{mymultline*}{+b}
        {
            \begin{equation*}
            \begingroup
            \renewcommand{\\}{\space}
            #1
            \endgroup
            \end{equation*}
        }
        {}
  
\else
    \documentclass[lettersize,journal]{IEEEtran}
    \newcommand{\mywidth}{\columnwidth}

    \usepackage{xparse}
    \NewDocumentEnvironment{mymultline*}{+b}
        {
            \begin{multline*}
            #1
            \end{multline*}
        }
        {}
\fi

\usepackage{amsmath,amsfonts, mathtools, amsthm}
\usepackage{algorithmic}
\usepackage{algorithm}
\usepackage[dvipsnames]{xcolor}
\usepackage{array}
\usepackage[caption=false,font=normalsize,labelfont=sf,textfont=sf]{subfig}
\usepackage{textcomp}
\usepackage{stfloats}
\usepackage{url}
\usepackage{verbatim}
\usepackage{graphicx}
\usepackage{cite}
\newcommand{\citeasnoun}[1]{Ref.~\citenum{#1}}
\usepackage{hyperref}
\usepackage[textwidth=0.55in,textsize=scriptsize]{todonotes}
\usepackage{cleveref}

\newtheorem{definition}{Definition}
\newtheorem{theorem}{Theorem}
\newtheorem{corollary}{Corollary}
\newtheorem{lemma}{Lemma}

\newtheorem{proposition}{Proposition}
\newtheorem{assumption}{Assumption}

\newcommand{\N}{\mathbb{N}}
\newcommand{\Z}{\mathbb{Z}}

\newcommand{\R}{\mathbb{R}}

\newcommand{\Law}{\mathrm{Law}}
\newcommand{\E}{\mathbb{E}}
\newcommand{\Var}{\mathrm{Var}}
\newcommand{\Ind}[1]{\mathbf{1}_{\{#1\}}}

\DeclareMathOperator{\Ber}{Ber}

\DeclareMathOperator{\Unif}{Unif}
\DeclareMathOperator{\BetaDist}{Beta}
\newcommand{\Normal}{\mathcal{N}}
\newcommand{\Tr}{\operatorname{tr}}
\newcommand{\Diag}{\operatorname{diag}}

\begin{document}

\title{Fundamental Bounds and Efficient Estimation for Dead-Time-Constrained Event Detection, with Application to Single-Photon Lidar}
\author{Frederic J. N. Jorgensen and Steven G. Johnson%
\thanks{This work was supported in part by the Singapore-MIT Alliance for Research and Technology (SMART) Wafer-scale Integrated Sensing Devices based on Optoelectronic Metasurfaces (WISDOM) Interdisciplinary Research Group, and by a grant from the Simons Foundation.}%
\thanks{F. J. N. Jorgensen and S. G. Johnson are with the Department of Mathematics, Massachusetts Institute of Technology, Cambridge, MA 02139 USA. Corresponding author: F. J. N. Jorgensen; e-mail: fjorgen@mit.edu.}

}



\maketitle

\begin{abstract}
We develop an asymptotic statistical theory for parameter estimation from a class of non-i.i.d.\ periodic binary event-detection processes subject to nonparalyzable dead time and gating, which we call ``dead-time event detection'' (DED) processes. Such processes arise in single-photon lidar, fluorescence lifetime imaging, X-ray astronomy, and particle or radiation flux  measurements in nuclear physics, where each detection renders the radiation/particle detector inactive for a recovery interval.  Our theory quantifies how dead time and gating affect the fundamental lower bounds of estimation and identifies practical estimators that attain these bounds. First, we identify a sufficient statistic, showing in particular that activation counts can carry statistically useful information discarded by conventional histogramming hardware. We then prove local asymptotic normality and derive the corresponding Fisher-information rate, thereby obtaining fundamental lower bounds for estimation from DED processes. We prove that the maximum likelihood estimator (MLE), widely used in DED applications, attains these lower bounds. Since computing the MLE typically requires solving a nonconvex optimization problem, we also propose Le Cam one-step estimators, which attain the same asymptotic bounds with only a single local correction rather than iterative optimization.  We illustrate the validity of our asymptotic theory and the practical usefulness of one-step estimators  through the example of single-photon lidar in both simulations and real-data experiments. 
\end{abstract}
\begin{IEEEkeywords}
Dead time, refractory time, Fisher information, asymptotic efficiency, single-photon lidar. 
\end{IEEEkeywords}

\section{Introduction}
We develop an asymptotic theory of parameter estimation for periodic binary detection processes subject to dead time. Our framework encompasses a wide range of particle and radiation measurement problems in science and engineering (see \Cref{section:examples_problem}), and we demonstrate its usefulness through an application to lidar (light detection and ranging)~\cite{fiocco1963detection,dong2017lidar,  kirmani2014first}, including an example with real (experimental) data.   Concretely, the systems we study conduct periodic statistical experiments, each with a binary outcome indicating whether a particle or radiation quantum was detected. From the observed outcomes, one wishes to estimate a physical parameter of interest, such as a pulse time-of-flight in lidar~\cite{dong2017lidar} or a lifetime in fluorescence microscopy~\cite{becker2004fluorescence, becker2012fluorescence}. A common property of such systems is that, whenever a detection event occurs, the underlying device enters a ``dead-time'' period of duration $t_d$ during which no further outcomes can be registered. Such a dead-time interval arises in practice when, for example, particle or radiation detectors are operated in Geiger mode and require a short recovery upon detection before returning to their sensitive state~\cite{muller1973dead, buller2009single, isbaner2016dead, rapp2019dead, oconnor1984time}. To mitigate its effect, experiments are often deliberately activated only during selected time bins, a design choice called \emph{gating}, which trades off conducting the most informative experiments  against maximizing the number of experiments~\cite{buttafava2014time, buttafava2015non, wang1991two, tosi2011fast, reilly2014high, gupta2019asynchronous, po2022adaptive, isbaner2016dead, rapp2019dead}.  Together, dead time and gating introduce a systematic discrepancy between the observed and the underlying experimental statistics. In many regimes of practical importance this distortion is statistically significant, and a substantial literature across application domains has proposed empirical correction schemes~\cite{coates1968correction, muller1973dead, lee2000new,
	taguchi2011modeling, isbaner2016dead, pediredla2018signal,
	gupta2019photon, rapp2019dead, huppenkothen2022accurate,
	larsen2009simple, kitichotkul2025free, wang2026synchronous} (see \Cref{section:existing_approaches}).  Previous statistical treatments, however, do not account for the non-i.i.d.\ stochastic process resulting from dead time and gating (see \Cref{section:related_work}), leaving open both the fundamental lower bounds for this class of problems (which can guide the principled design of estimators and gating schemes) and the efficiency of commonly used estimators. This paper fills that gap.

In \Cref{section:model_formulation}, we precisely define periodic ``\textbf{d}ead-time \textbf{e}vent \textbf{d}etection (\textbf{DED})'' processes to formalize the general class of problems described above, depicted schematically in \Cref{fig:ded_process}. We identify a sufficient statistic of this process, showing that activation counts can carry information about the physical parameters of interest that hardware-side histogramming typically discards in practice~\cite{rapp2019dead, gupta2019photon, po2022adaptive, isbaner2016dead}. This observation can inform  future hardware histogramming design. In \Cref{section:asymptotic_theory}, we develop an asymptotic theory of estimation for DED processes, characterizing the fundamental lower bounds of estimation from dead-time--constrained observations. Using the local asymptotic normality (LAN) framework, we derive the appropriate notion of Fisher information for this non-i.i.d.\ setting and obtain matching lower bounds through H\'ajek convolution and local asymptotic minimax theorems. A key consequence of our Fisher-information-rate formula is that the gating rule, even when fully history-dependent, affects asymptotic information only through its limiting phasewise sampling frequencies and, weighted by these frequencies, each sampled phase contributes the usual Bernoulli information. 
In \Cref{section:efficiency}, we prove that the maximum likelihood estimator (MLE), of which several implementations are commonly used in practice \cite{coates1968correction, pediredla2018signal, rapp2019dead, kitichotkul2025free, maus2001experimental, chessel2013maximum, bajzer1991maximum, wang2026synchronous
}, is asymptotically efficient. While prior work has only treated specific gating rules, we further show how to compute the MLE for general gating rules in the dead-time setting. 
As an alternative to the MLE for estimation from DED processes, we propose a Le~Cam one-step update, yielding a class of estimators known as one-step estimators. 
This class of estimators decouples global parameter localization from locally efficient estimation, where a pilot estimator handles the former and can be designed for stability, robustness, or computational simplicity, while a single update step provides the efficient estimation, which we establish formally. 
This avoids the pitfalls of nonconvex MLE optimization while retaining its asymptotic efficiency. 
To illustrate the practical usefulness of these ideas, in \Cref{section:lidar} we apply the theory to single-photon lidar. We verify that the widely used MLE is asymptotically efficient in this setting, but demonstrate empirically on simulated and real data that one-step estimators (built from a new robust matched-filter pilot we introduce) match the MLE's efficiency while avoiding both its sensitivity to initialization and its degraded performance under model misspecification.  For the specific task of ranging, we apply one-step updates to several estimators from the single-photon lidar literature \cite{gupta2019photon, kitichotkul2025free, pawlikowska2017single, mccarthy2013kilometer}  and show that these corrections substantially improve their performance. We also find that our robust one-step estimator (with only log-linear cost) is competitive with, and arguably outperforms, these corrected literature baselines.

\begin{figure*}[!tb]
    \centering
    \includegraphics[width=\textwidth]{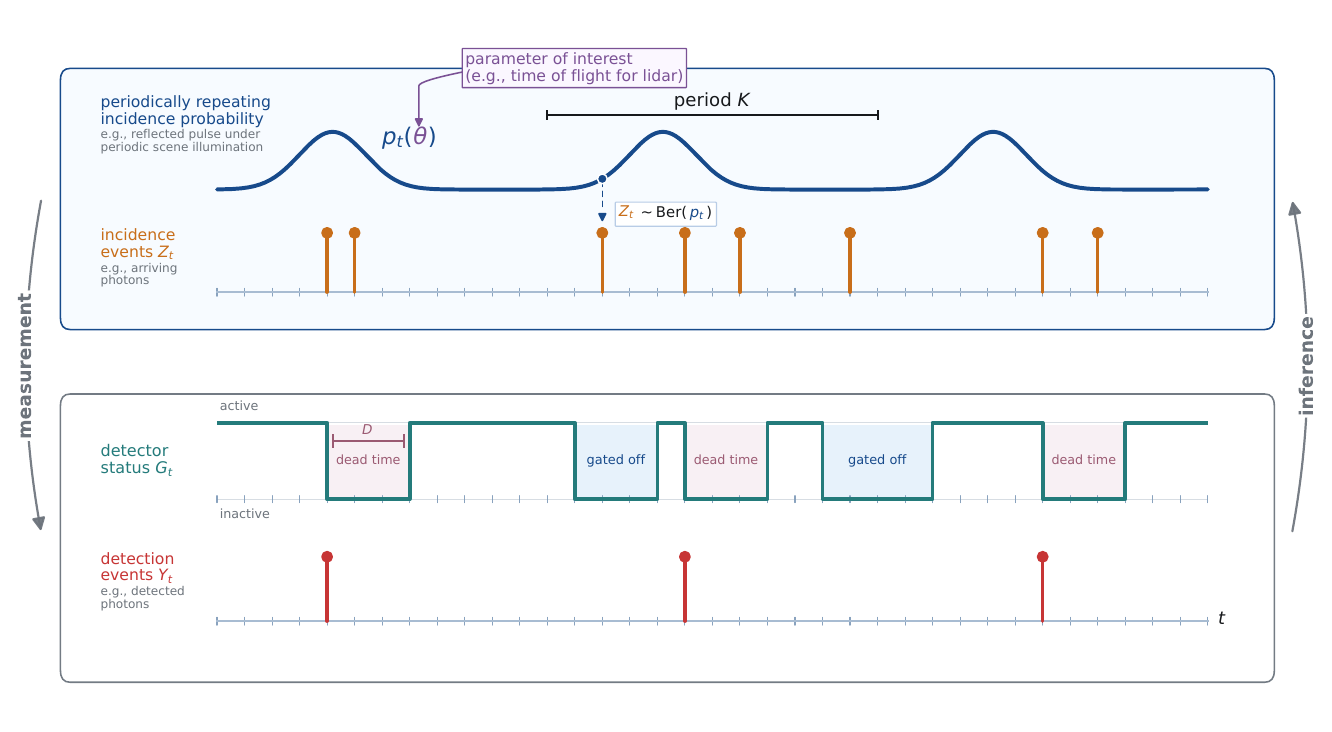}
    \caption{Schematic of a dead-time event detection (DED) process. The periodic incidence probability \(p_t(\theta)\) generates latent Bernoulli incidence events \(Z_t\). The detector status \(G_t\) is determined by the gating scheme and dead time. Only incidences occurring while the detector is active are recorded as detections \(Y_t=G_t Z_t\), and dead time occurs after every $Y_t=1$ detection. The goal is to infer \(\theta\) from the detector data \(\{(G_t,Y_t)\}_{t\geq0}\). In lidar, for example, \(p_t(\theta)\) is the photon incidence probability resulting from a repeating return-pulse profile, with \(\theta\) including the time-of-flight to be estimated from the data.}
    \label{fig:ded_process}
    \end{figure*}

\subsection{Examples of event detection with dead time}
\label{section:examples_problem}
DED estimation problems arise in a wide range of physical settings, several of which we briefly describe below.  In these applications, the binary event stream records whether at least one particle or radiation quantum was detected in the corresponding time interval.
\paragraph{Single-photon imaging at optical wavelengths}

As single-photon detectors have become more capable and less expensive~\cite{buller2009single}, time-correlated single-photon counting (TCSPC) has become a widely used technique in a range of imaging applications. In TCSPC, a system is illuminated by a periodic pulse and the statistics of photon arrivals in response to each pulse are used to infer physical parameters of interest~\cite{oconnor1984time}. Three representative examples are the following. In lidar~\cite{dong2017lidar, kirmani2014first}, each event records the detection of at least one returning photon from an illuminated scene point, and the resulting arrival-time histogram is used to estimate the scene depth and reflectivity at that point. In fluorescence lifetime imaging (FLIM)~\cite{becker2004fluorescence, becker2012fluorescence}, the recorded events are fluorescence photons whose arrival-rate function decays exponentially after each excitation pulse, and the parameter of interest is an exponential decay constant that corresponds to the fluorescence lifetime of a molecular species. In non-line-of-sight (NLOS) imaging~\cite{drost2015dead, rapp2020seeing, faccio2020non}, the detector is scanned over a diffuse relay surface and the photon-arrival rate at each scan position is used to reconstruct a scene hidden around a corner. In all three settings, dead time can systematically distort the observed arrival-time histogram relative to the underlying photon-arrival distribution~\cite{rapp2019dead, li2017influence, xu2017signal}.
\paragraph{Single-photon detection at non-optical wavelengths}
The same model applies to single-photon detectors operating outside
the optical spectrum. A prominent example is the estimation of a periodic
photon rate using X-ray detectors in astronomy, where the periodicity
arises from physical processes such as pulsar rotation
\cite{weisskopf2002overview, tuo2024revisiting} or orbiting exoplanets
\cite{poppenhaeger2013transit}.

\paragraph{Particle detection}
In particle and nuclear physics, and in astroparticle physics, an
important application is the estimation of a constant particle or
radiation flux from individual detection events
\cite{knoll2010radiation, gesualdi2022estimation, aab2016pierre,
reines1954detection, pal2012some, lee2000new}.

\paragraph{Closely related problems}
Several related problems share structural features but violate at least one of the hypotheses of our setting (single detector, event-based, dead-time, periodic). Medical imaging modalities such as positron emission tomography (PET) and computed tomography (CT) employ photon and particle counters described by the same single-detector model~\cite{vicente2013improved, taguchi2011modeling}, but use \emph{arrays} of detectors whose outputs are jointly processed to reconstruct an activity or attenuation distribution inside a body, yielding a multidimensional generalization of our problem.  A similar  multi-detector setup appears in recent work on vibration sensing via  frequency-entangled two-photon interference~\cite{lualdi2026quantum, lualdi2025fast}, where photon pairs from a  Poisson source are routed through an interferometer and their  coincidence and anti-coincidence rates at two single-photon detectors  encode a time-varying optical delay, which is the parameter to be  inferred.  Another related problem arises in the neuroscience literature on neural
encoding, where a spike-train response (e.g.~the action potentials of retinal
ganglion cells) is modeled parametrically as a function of a known stimulus,
with the goal of estimating the parameters of this model.  These models extend our setting in several ways: {refractory} times, the analogue of dead time in this context, are often nonconstant, and the measured spike-train response  may
exhibit additional forms of history dependence, referred to in this literature
as burstiness and adaptation. Moreover, they frequently, though not always, use
aperiodic or stochastic stimuli rather than the periodic setting considered here \cite{pillow2008spatio, keat2001predicting, mcintosh2016deep,berry1997refractoriness}. 
A further related direction concerns {refractory}  periods in models of recurrent event data in medical studies, which are mathematically equivalent to dead time. These  typically lack the periodic rate function central to our setting~\cite{jahn2015simulating}. 
Refractory-like constraints also  arise in queueing applications such as databases and aircraft landings~\cite{galliher1958nonstationary}, but there the constraint does not  restrict what the observer can record and so does not induce the  estimation problem studied here.
Finally, variability in electromagnetic radiation emitted from accreting galactic black holes and similar astrophysical sources gives rise to a closely related inference problem, but with a stochastic rather than periodic rate function~\cite{huppenkothen2022accurate}.

\subsection{Existing approaches and their limitations}
\label{section:existing_approaches}
A variety of strategies have been proposed to mitigate dead-time--induced  bias in parameter estimation. Although most of these ideas have been developed in  the context of lidar and single-photon avalanche diode (SPAD) imaging,  the underlying ideas are general and apply to any  dead-time--constrained detection system. Throughout, we use the term \emph{detector} for the device that reads out the binary event of interest, refer to it as \emph{active} when it is able to register an event and \emph{inactive} otherwise. 

In low-event-rate regimes, dead-time effects are small and are often
simply neglected at the inference stage~\cite{oconnor1984time,valeur2013molecular,
kirmani2014first}. Alternatively, computational correction methods can
compensate for the dead-time--induced distortion at the inference
stage~\cite{coates1968correction, muller1973dead, lee2000new,xu2017signal,hou2021full,
taguchi2011modeling, isbaner2016dead, pediredla2018signal,
gupta2019photon, rapp2019dead, huppenkothen2022accurate,
larsen2009simple, kitichotkul2025free}.

A complementary hardware-level approach is \emph{gating}, in which the  detector is deliberately activated and deactivated at strategically  chosen times to counter pile-up and, potentially, improve temporal  resolution and reduce reconstruction error. Representative strategies  include activating the detector only over a sub-interval in which  signal events are expected~\cite{buttafava2014time, buttafava2015non},  and shifting the active window across repeated measurement periods~\cite{wang1991two, tosi2011fast, reilly2014high, pifferi2008time, gupta2019asynchronous}.  \emph{Adaptive} gating uses information from previous measurements to  decide when to open the active window~\cite{po2022adaptive}. Finally,  event-driven or \emph{free-running} acquisition, in which the detector  is active whenever it is not in a detection-induced dead-time period,  has been studied as an alternative to explicit gating~\cite{isbaner2016dead, rapp2019dead}.

\subsection{Related work}
\label{section:related_work}
Despite their empirical success, existing inference methods and gating schemes are motivated by heuristic considerations rather than derived from an explicit statistical framework. Several questions have been left open.  To begin with, what notion of Fisher information governs the asymptotic lower bounds for estimation in this setting and do commonly used estimators achieve them? The answers to these questions will also inform how gating schemes should be designed under physical constraints (\Cref{section:conclusion}). The asymptotic theory of estimation for periodic binary processes \emph{without dead time}, which reduce to product-binomial models, is well-studied~\cite{berk1972consistency, efron2022exponential, van2000asymptotic}. To the best of our knowledge, however, no analogous theory exists for the dead-time--constrained setting considered here. The dead-time constraint and the gating feedback induce a dependency between successive observations, ruling out a direct application of classical i.i.d.\ asymptotics. We instead apply local asymptotic normality (LAN) theory~\cite{le2000asymptotics, van2000asymptotic}, which accommodates this dependence structure while still yielding sharp lower bounds and a well-defined notion of asymptotic efficiency.

A related but separate line of work derives lower bounds under the restrictive assumption of \emph{unbiased} estimators, rarely satisfied in practice, either neglecting dead-time effects~\cite{kitichotkul2023role, pediredla2018signal, gupta2019asynchronous, lu2013adaptive} or restricting to specific forms of gating~\cite{rebafka2011information, wu2025performance}, whereas our LAN-based bounds apply to possibly biased estimators with arbitrary gating.

\subsection*{Notation}
We write $\N=\{0,1,2,\ldots\}$. For $a,b\in\R$, write $a\wedge b\coloneq   \min\{a,b\}$.  For $d\in\N$, $\R^d$ denotes $d$-dimensional Euclidean space. We write $\|\cdot\|_2$ for the Euclidean norm and $\|\cdot\|_{\text{F}}$ for the Frobenius norm. For a matrix $A$, $\Tr(A)$ denotes its trace and $\Diag(a_1,\ldots,a_d)$ the diagonal matrix with diagonal entries $a_1,\ldots,a_d$. For a finite sequence $x_0,\ldots,x_{T-1}$, we use the shorthand $x_{0:T-1}\coloneq (x_0,\ldots,x_{T-1})$. The law of a random variable $X$ is denoted by $\Law(X)$, and expectation is denoted by $\E$. For probability measures $P\ll Q$, \(D_{\mathrm{KL}}(P\|Q)\coloneq \E_P[\log(dP/dQ)]\) denotes the Kullback--Leibler divergence. We use $\Rightarrow$ for convergence in distribution and $\xrightarrow{P}$ for convergence in probability under $P$. $X_T = o_P(1)$ if $X_T \xrightarrow{P} 0$, and $X_T = O_P(1)$ if $X_T$ is bounded in probability.

\section{Model formulation}
\label{section:model_formulation}

We formalize the task of inferring a physical parameter $\theta\in\Theta\subseteq\R^d$ from a periodically repeating sequence of Bernoulli experiments, representing particle or radiation detection events, in which each positive event forces the detector into a \emph{dead-time} period of duration $t_d$ during which no further experiments can be conducted. A second feature we want to capture is \emph{gating}: subject to the dead-time constraint, the experimenter is free to choose \emph{whether} to run an experiment at any given time. The design of gating schemes is important in practical applications because skipping an uninformative experiment preserves the detector's sensitivity for later, potentially more informative, time intervals~\cite{buttafava2014time, buttafava2015non, wang1991two, tosi2011fast, reilly2014high, gupta2019asynchronous, po2022adaptive, isbaner2016dead, pifferi2008time, rapp2019dead}.  

The remainder of this section is organized as follows. \Cref{subsection:observation_model} defines our observation model and the class of stochastic processes we study, which we call \textbf{d}ead-time \textbf{e}vent \textbf{d}etection (DED) processes. \Cref{subsection:physical_interpretation} connects this model to concrete applications. \Cref{subsection:likelihood_suff} establishes useful properties of the likelihood, including a sufficient statistic, and connects these results to hardware data compression in bandwidth-limited systems.

\subsection{Observation model}
\label{subsection:observation_model}
We work in discrete time, $t\in\{0,1,2,\ldots\}=\N$, with period $K$ a positive integer (see \Cref{subsubsection:implicit_modeling_assumptions} for a discussion of this choice). We write \(D\in\N\) for the dead-time length in discrete time bins. The underlying physical process has $K$-periodic event probabilities
$$
p_t(\theta)=p_{t+K}(\theta)\in[0,1] \quad\text{for all }t\in \N \, ,
$$
depending on the physical parameter  $\theta\in\Theta\subseteq\R^d$ that the experimenter wishes to estimate. The corresponding \emph{physical} Bernoulli experiments are sampled independently as
\begin{equation}
\label{eq:bernoulli_experiments}
Z_t \sim \Ber(p_t(\theta)),
\qquad t\in \N.
\end{equation}
In practice, these events correspond to the impingement of at least one particle/radiation quantum. Because of dead time and gating, the experimenter does \emph{not} actually observe the entire sequence $\{Z_t\}_{t\ge 0}$. Instead, the experimental data consist of a stochastic process $\{(G_t,Y_t)\}_{t\ge 0}$ in which $G_t\in\{0,1\}$ indicates whether an experiment is observed at time $t$, and the observed outcome is
\begin{equation}
\label{eq:observed_experiment}
Y_t = G_t Z_t.
\end{equation}
Thus, $Y_t$ coincides with $Z_t$ whenever the detector is active ($G_t=1$), while it is identically zero and uninformative about $Z_t$ whenever the detector is inactive ($G_t=0$).

To fully define the law of $\{(G_t,Y_t)\}_{t\ge 0}$, we must specify how the random variables $G_t$ are chosen. We impose three physical conditions:
\paragraph{Dead-time constraint} A detected event at time $t$ ($Y_t = 1$) prevents any experimental outcome from being observed over the following $D$ time steps:
\begin{equation}
\label{eq:dead_time}
Y_t = 1  \Longrightarrow  G_s = 0
\text{ for } s \in \{t+1, t+2, \dots, t+D \}
\end{equation}
almost surely (a.s.). We refer to stochastic processes $\{(G_t,Y_t)\}_{t\ge 0}$ satisfying \eqref{eq:dead_time} as \emph{dead-time--constrained}. In Geiger-mode detection experiments, this reflects the recovery time required by the detector after a detection~\cite{muller1973dead}.

\paragraph{Gating causality} Subject to the dead-time constraint, the experimenter is free to choose at any time whether to run an experiment, and a substantial literature has developed proposing schemes for doing so~\cite{buttafava2014time, buttafava2015non, wang1991two, tosi2011fast, reilly2014high, gupta2019asynchronous, po2022adaptive, isbaner2016dead, rapp2019dead}. To include any such scheme while ruling out an unphysical dependency on future information, we require that for every $t$ there exist $U_t\sim\Unif([0,1])$, independent of the physical event sequence $\{Z_s\}_{s\ge0}$ and of the past observations $\{(G_s,Y_s)\}_{s=0}^{t-1}$, and a measurable map $\Phi_t:\{0,1\}^{2t}\times[0,1]\to\{0,1\}$ such that, a.s.,
\begin{equation}
\label{eq:causality}
G_t = \Phi_t\bigl(G_0,Y_0,\dots,G_{t-1},Y_{t-1},U_t\bigr).
\end{equation}
Equivalently, $G_t$ is drawn from a causal Markov kernel $\pi_t(\,\cdot\mid G_0,Y_0,\dots,G_{t-1},Y_{t-1})$ defined as the law of $\Phi_t$. Subject to \eqref{eq:dead_time}, the choice of the family $\{\Phi_t\}_{t\ge 0}$ constitutes the experimenter's design freedom. In the following examples, we use the convention that \(Y_t=0\) for \(t<0\). The following two choices are widely used in practice:
\begin{itemize}
	\item The \emph{free-running} scheme~\cite{rapp2019dead, gupta2019asynchronous}, $\Phi_t^{\mathrm{fr}} = 1 - \sum_{s=1}^{D} Y_{t-s}$, which observes an experiment whenever allowed by the dead-time constraint.
\item The \emph{synchronous} scheme~\cite{buttafava2014time, buttafava2015non}, in which the detector is activated at the beginning of each period unless still in dead time, and deactivated after the first detection within the period. Writing \(m_t\coloneq t\bmod K\), this rule is $\Phi_t^{\mathrm{syn}} = \left(1-\sum_{s=m_t+1}^{m_t+D}Y_{t-s}\right) \left(1-\sum_{s=1}^{m_t}Y_{t-s}\right),$ with the second sum interpreted as empty when \(m_t=0\). 
\end{itemize}
The trade-off underlying gating design is that not running an experiment reduces the amount of observed data, but can preserve the detector's sensitivity for subsequent, potentially more informative, time intervals. The statistical framework developed in this paper provides a basis for principled comparison and design of gating schemes \(\{\Phi_t\}_{t\ge 0}\) in future work (\Cref{section:conclusion}).
\paragraph{Gating-frequency convergence} We require that the empirical frequency of detector activations along each phase stabilizes, meaning that for every $\theta\in\Theta$, there exists a limiting ``gating-frequency'' vector $\gamma(\theta)\in[0,1]^K$ such that for each $r=0,\dots,K-1$,
\begin{equation}
\label{eq:phase_convergent}
\frac{1}{L}\sum_{\ell=0}^{L-1} G_{r+\ell K}
\xrightarrow{P_\theta}
\gamma_r(\theta)
\qquad\text{as } L\to\infty.
\end{equation}
When the choice of $\theta$ is clear from context, we write $\gamma$ for $\gamma(\theta)$.  Gating-frequency convergence holds for all practical examples we know of and is verified for the free-running and synchronous rules in \Cref{app:gating_frequency_convergence_acquisition}.  It could likely be relaxed at the cost of more involved proofs but little additional statistical insight, and schemes that rely on such a relaxation are likely undesirable in practice (see \Cref{subsubsection:circumventing_gating_frequency_convergence}).

The following definition collects these requirements and names the class of processes studied in this paper.
\begin{definition}
\label{def:DED}
A stochastic process $\{(G_t, Y_t)\}_{t\geq 0}$ with $G_t, Y_t \in \{0, 1\}$ is called a \textbf{d}ead-time \textbf{e}vent \textbf{d}etection (DED) process if it is causal as in \eqref{eq:causality}, gating-frequency convergent as in \eqref{eq:phase_convergent}, dead-time--constrained as in \eqref{eq:dead_time}, and satisfies the observation relations \eqref{eq:bernoulli_experiments}--\eqref{eq:observed_experiment}.
\end{definition}
The law of a DED process is fully specified by the periodic event probabilities $p_r(\theta)$ and the gating rules $\{\Phi_t\}_{t\ge 0}$ through a simple recursion.  The latter determine the conditional law of $G_t$ given the past, and \eqref{eq:observed_experiment} then determines $Y_t$ given $G_t$.

\subsection{Physical examples}
\label{subsection:physical_interpretation}
We now connect the abstract model above to concrete physical problems, using single-photon detection as the primary example. Detection of massive particles exhibits the same Poissonian statistics and admits the same argument. In a single-photon counting experiment, $p_t(\theta)$ is the probability that at least one photon arrives in the time bin $t$. Photon arrival processes are standardly modeled semiclassically as a Poisson point process~\cite{goodman2015statistical, snyder2012random}, with a nonnegative photon rate $\lambda_t(\theta)\ge 0$ that is proportional to the classical light intensity. This model is also widely used in the single-photon detection literature~\cite{rapp2019dead,xu2017signal, gupta2019photon}. Because a single-photon detector can only register the binary event ``at least one photon arrived'' (and not the photon number)~\cite{buller2009single}, the detection probability in bin $t$ is
\begin{equation}
\label{eq:reparameterization}
p_t(\theta)  =  1 - \exp\bigl(-\lambda_t(\theta)\bigr).
\end{equation}
The dependence of $p_t$ on $\theta$ through the photon rate $\lambda_t(\theta)$ is what makes single-photon detectors useful for parameter $\theta$ estimation. We list three important examples.
\begin{itemize}
    \item \textbf{Lidar:} The  photon rate $\lambda_t(\theta)$ is a pulse-shaped function of time whose position and amplitude encode, respectively, the round-trip time-of-flight to a scene point and the scene reflectivity, on top of a constant ambient background. The parameter $\theta$ collects the time-of-flight, return amplitude, and background level~\cite{buller2009single,kirmani2014first}. A detailed treatment is given in \Cref{section:lidar}.
    \item \textbf{Fluorescence lifetime imaging (FLIM):} Following a short excitation pulse, fluorescent molecules emit photons at an exponentially decaying rate $\lambda_t(\theta)$.   The parameter of interest $\theta$  in this application  is the exponential decay constant called ``fluorescence lifetime,'' which serves as an important imaging modality in biological experiments~\cite{becker2004fluorescence,wang1991two}.
    \item \textbf{X-ray timing in astronomy:} For a periodically varying astrophysical source such as a pulsar or a transiting-exoplanet system, the  photon rate $\lambda_t(\theta)$ is a periodic pulse profile whose parameters (e.g.~the orbital period, pulse shape, and pulsed fraction) are the inference targets~\cite{weisskopf2002overview, tuo2024revisiting, poppenhaeger2013transit}.
\end{itemize}
We remark that, in practice, the detection probabilities in \eqref{eq:reparameterization} also depend on the photon detector efficiency~\cite{goodman2015statistical}, which acts multiplicatively on the photon rate. Without loss of generality, this factor can be absorbed directly into the rate function $\lambda_t(\theta)$, and we omit it from the notation.

\subsection{Likelihood, sufficient statistic, and Rao--Blackwellization}
\label{subsection:likelihood_suff}
We now derive the likelihood of a DED process up to a finite time horizon $T\in\N$ and identify a sufficient statistic for~$\theta$. Writing $X_t \coloneq  (G_t, Y_t)$, with a feasible realized value $x_t=(g_t,y_t)$, denoting the finite-horizon law by \(P_{\theta,T}\coloneq \Law(X_{0:T-1})\), writing \(\E_{\theta,T}\) for expectation with respect to \(P_{\theta,T}\) and suppressing \(T\) as in \(\E_\theta\) when the horizon is clear, the chain rule combined with \eqref{eq:observed_experiment}, and the definition of $\pi_t$ yields the factorization
\begin{subequations}\label{eq:factorization}
\begin{align}
P_{\theta,T}(X_{0:T-1} = x_{0:T-1})  = \prod_{t=0}^{T-1}\pi_t(g_t\mid x_{0:t-1})   \\ 
 \cdot \underbrace{\prod_{t=0}^{T-1}
\bigl[p_t(\theta)^{y_t} (1-p_t(\theta))^{1-y_t}\bigr]^{g_t}}_{\theta\text{-dependent}},
\end{align}
\end{subequations}
where only the second factor depends on $\theta$, and where we use the convention that the bracketed term equals $1$ when $g_t = 0$. Define the phasewise aggregate counts
\begin{subequations}
\label{eq:phase_counts}
\begin{align}
N_r(T)& \coloneq  \sum_{\substack{0\le t<T\\ t\bmod K=r}} G_t,
\\
S_r(T)& \coloneq  \sum_{\substack{0\le t<T\\ t\bmod K=r}} Y_t,
\end{align}
\end{subequations}
for $r = 0, \dots, K-1,$
i.e.~the number of active bins and the number of detections within phase $r$ up to horizon $T$. Substituting \eqref{eq:phase_counts} into \eqref{eq:factorization} gives
\begin{subequations}
\label{eq:phase_loglikelihood_intro}
\begin{align}
\log P_{\theta,T}&(X_{0:T-1} = x_{0:T-1})
 =\sum_{r=0}^{K-1}\Bigl[
S_r(T)\log p_r(\theta)\\
 & + \bigl(N_r(T) - S_r(T)\bigr)\log\bigl(1-p_r(\theta)\bigr)
\Bigr]\\
&
+ C(x_{0:T-1}),
\end{align}
\end{subequations}
where $C(x_{0:T-1})$ collects $\theta$-independent terms. The following proposition is immediate from the form of  \eqref{eq:phase_loglikelihood_intro} and the Neyman--Fisher factorization theorem; see, e.g., \citeasnoun{casella2024statistical}.
\begin{proposition}
\label{prop:suff_active}
For every DED process and every $T\in\N$, the vector of phasewise counts
$$
H_T  \coloneq   \bigl(N_r(T), S_r(T)\bigr)_{r=0,\dots,K-1}
$$
is sufficient for $\theta$, meaning that the conditional distribution of the full data $X_{0:T-1}$ given $H_T$ does not depend on $\theta$. 
\end{proposition}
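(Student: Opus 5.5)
The plan is to apply the Neyman--Fisher factorization theorem on the finite sample space $\{0,1\}^{2T}$ of $X_{0:T-1}$. In the discrete setting the theorem needs no measure-theoretic care. Fix $T$ and a realization $x_{0:T-1}$. I will show that the probability mass function factors as
\[
P_{\theta,T}(X_{0:T-1}=x_{0:T-1}) = h(x_{0:T-1})\, g_\theta\bigl(H_T(x_{0:T-1})\bigr),
\]
where $h$ does not depend on $\theta$. Then $H_T$ is sufficient.

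First, I will justify the factorization \eqref{eq:factorization} carefully. By the chain rule, the joint mass is a product over $t$ of $P(G_t=g_t\mid x_{0:t-1})\,P(Y_t=y_t\mid g_t,x_{0:t-1})$. The first factor is $\pi_t(g_t\mid x_{0:t-1})$ by the causality condition \eqref{eq:causality}. Because $U_t$ is independent of the $Z$-sequence and of the past, and $\Phi_t$ is fixed, this kernel is a $\theta$-free function of the past observations. For the second factor, if $g_t=0$ then $Y_t=0$ deterministically, giving the factor $1$. If $g_t=1$ then $Y_t=Z_t$, and we need $Z_t$ to be independent of the conditioning event $\{X_{0:t-1}=x_{0:t-1},G_t=1\}$. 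This holds because that event is measurable with respect to $Z_0,\dots,Z_{t-1}$ and $U_0,\dots,U_t$, all of which are independent of $Z_t$.

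This independence is the one delicate point, and I expect it to be the main obstacle. The causality assumption as stated only gives independence of each $U_t$ from the $Z$-sequence and the past observations. To conclude that the joint collection $(U_0,\dots,U_t)$ is independent of $Z_t$, I will argue inductively on $t$. At each step I will use that $U_s$ is independent of $\{Z_s\}$ jointly with the past, which lets me build $X_{0:t-1}$ and $G_t$ as a function of variables independent of $Z_t$. I will read the assumption as independence from the whole physical sequence jointly with the past, since that is how it is stated. Along the way I will handle zero-probability realizations by noting that the conditional probabilities are only needed on feasible paths; on infeasible paths both sides of the factorization vanish or are set by convention.

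With the factorization in hand, I will group the $\theta$-dependent factor by phase using $p_t=p_{t\bmod K}$. For each phase $r$, the exponents of $p_r(\theta)$ sum to $S_r(T)$ and those of $1-p_r(\theta)$ sum to $N_r(T)-S_r(T)$, since $y_t\le g_t$. So the $\theta$-dependent part is
\[
g_\theta(H_T)=\prod_r p_r(\theta)^{S_r}\bigl(1-p_r(\theta)\bigr)^{N_r-S_r},
\]
with $0^0=1$, and $h=\prod_t\pi_t$. Finally, I will verify sufficiency directly rather than only citing the theorem. For any $h_0$ with $P_\theta(H_T=h_0)>0$,
\[
P_\theta(X=x\mid H_T=h_0)=\frac{h(x)\,\Ind{H_T(x)=h_0}}{\sum_{x'}h(x')\,\Ind{H_T(x')=h_0}}.
\]
This is free of $\theta$ because $g_\theta(h_0)$ cancels.
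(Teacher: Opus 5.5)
Your overall route is the paper's: it derives \eqref{eq:factorization} from the chain rule and the definition of $\pi_t$, groups the $\theta$-dependent factor by phase into \eqref{eq:phase_loglikelihood_intro}, and invokes Neyman--Fisher. Your phase grouping and your direct check that $g_\theta(h_0)$ cancels are fine. The paper treats the factorization as immediate. The step you correctly flag as delicate, however, you plan to justify through an intermediate claim that is false under the stated hypothesis.

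You propose to show inductively that $(Z_{0:t-1},U_0,\dots,U_t)$ is jointly independent of $Z_t$. Condition \eqref{eq:causality} makes each $U_t$ independent of $(\{Z_s\}_{s\ge0},X_{0:t-1})$. It says nothing about $U_t$ jointly with the earlier $U_s$, and that is not enough for your claim. Here is a counterexample:
\begin{itemize}
\item Take free-running gating, so $X_0=(1,Z_0)$ is a function of $Z\coloneq(Z_s)_{s\ge0}$.
\item Let $U_0\sim\Unif([0,1])$ be independent of $Z$.
\item Set $U_1\coloneq U_0$ on $\{Z_5=0\}$ and $U_1\coloneq 1-U_0$ on $\{Z_5=1\}$.
\item Take fresh independent $U_t$ for $t\ge2$.
\end{itemize}
Given $Z$, $U_1$ is uniform, so $U_1$ is independent of $(Z,X_0)$ and \eqref{eq:causality} holds. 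Yet $\{U_0+U_1=1\}=\{Z_5=1\}$ a.s., so for $0<p_5<1$, $\sigma(U_0,U_1)$ is not independent of $Z_5$. Your claimed statement therefore fails at $t=5$, even though the conclusion you actually need still holds here: $Z_t$ is independent of $(X_{0:t-1},G_t)$.

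The repair is to bypass the $U$'s and condition on all of $Z$. Because $U_t$ is independent of $(Z,X_{0:t-1})$ jointly (your reading of the assumption, which is the one needed), we have $P(G_t=g\mid Z,X_{0:t-1})=\pi_t(g\mid X_{0:t-1})$. Since $Y_t=G_tZ_t$, induction gives
\[
P\bigl(X_{0:t}=x_{0:t}\mid Z\bigr)=\prod_{s\le t}\pi_s(g_s\mid x_{0:s-1})\,\Ind{y_s=g_sZ_s}.
\]
Hence $P(X_{0:t-1}=x_{0:t-1},G_t=g\mid Z)$ equals $\pi_t(g\mid x_{0:t-1})$ times the product over $s<t$. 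This depends on $Z$ only through $Z_{0:t-1}$, which gives the independence from $Z_t$. Taking expectations over the independent $Z_s$ then yields \eqref{eq:factorization} directly. The factor is $0$ on infeasible paths, and no conditioning on possibly null events is needed.
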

Importantly, this implies that any estimator $\hat\theta_T$ can be replaced by the conditional expectation $\E[\hat\theta_T \mid H_T]$,  which can be computed without knowing $\theta$ by sufficiency, without increasing the mean squared error at any $\theta\in\Theta$.   Therefore, without loss of optimality we may restrict attention to estimators that depend on the data only through $H_T$.  This principle is also known more generally as the Rao--Blackwell theorem~\cite{blackwell1947conditional}.   

This sufficiency result has direct practical implications for data compression, which is an important problem in modern detection systems such as high-refresh-rate lidar~\cite{ingle2023count}. \Cref{prop:suff_active}  tells us that, rather than the full binary stream $\{(G_t, Y_t)\}_{t=0}^{T-1}$,  only the $2K$ phasewise counts $(N_r(T), S_r(T))_{r=0}^{K-1}$ need to be stored or transmitted from detector hardware to the inference device. This explicit identification of a sufficient statistic provides a principled basis for ongoing work on hardware data compression in bandwidth-limited systems.
Classical histogramming retains only $S_r(T)$~\cite{rapp2019dead, gupta2019photon,xu2017signal, po2022adaptive, isbaner2016dead}. Depending on the gating scheme used, this \textit{can} be a sufficient statistic. For example, in the free-running case with time horizon $T = L K$, since $G_t = 1 - \sum_{s=1}^{D} Y_{t-s}$, one has
\begin{equation}
\label{eq:Nr_from_Sr_freerunning}
N_r(T) =  L -  \sum_{s=1}^{D} S_{(r-s)\bmod K}(T) + O(1)
\end{equation}
where the $O(1)$-term  is due to boundary effects and negligible as $L$ becomes large. 
For more general gating schemes where no closed-form expression of the type \eqref{eq:Nr_from_Sr_freerunning}  exists (e.g.~adaptive schemes as in~\citeasnoun{po2022adaptive}),  the counts $N_r(T)$ carry additional statistical information that is lost in $S_r(T)$-histogramming. To our knowledge, this point has not been made explicit in the existing literature. Recent work has empirically proposed retaining $N_r(T)$ alongside $S_r(T)$ for use in a computational correction~\cite{rech2023toward, fratta2024near}. Under the free-running scheme used in that work, $N_r(T)$ is determined by $S_r(T)$ up to the boundary correction in \eqref{eq:Nr_from_Sr_freerunning}, so $S_r(T)$ alone would have sufficed asymptotically.

\subsection{Remarks on our modeling assumptions}
\label{subsubsection:implicit_modeling_assumptions}
We make explicit a few of the assumptions underlying the model formulated above and discuss the resulting limitations. 
First, we assume that the dead time \(D\) is known and not part of the parameter \(\theta\) to be estimated. In practical settings where it is unknown, there is an extensive literature on dead-time estimation developed in nuclear physics, optics, and neuroscience (where the analogous quantity appears as the refractory period in spike trains)~\cite{knoll2010radiation, meeks2008dead, usman2018radiation, hampel2008estimation, akyurek2021new}. 

Second, our exposition adopts the \emph{nonparalyzable} (Type-I) dead-time model, which  accurately describes a commonly used class of single-photon detectors, modern single-photon avalanche diodes (SPAD)~\cite{qian2023modeling}. A \emph{paralyzable} (Type-II) variant, in which arrivals during an active
dead-time interval extend that interval, also appears in the literature and is
particularly important for modeling dead time in Geiger--M\"uller
tubes~\cite{knoll2010radiation,muller1973dead}.   We leave the analysis of the Type-II model to future work. Our model also accounts only for detector-induced dead time $D$. In TCSPC systems, an additional timing-electronics dead time can prevent the electronics from registering a new event even after the detector has recovered~\cite{rapp2021high}. This effect is often negligible, for example when the electronics dead time is shorter than the detector dead time~\cite{rapp2021high,qian2023modeling,kitichotkul2025free}.

Finally, we briefly comment on our choice of a discrete-time model.  While all digitally acquired data are ultimately discrete, the discretization scale can in principle be made arbitrarily fine. In that regime, the large-sample asymptotics developed in the remainder of this paper may fail to describe the effective statistical behavior, and one could instead consider a joint limit $T,K\to\infty$ or a continuous-time formulation. We deliberately fix $K$ and let only $T$ grow, which is consistent with the convention in simulations in the existing literature, e.g.~for single-photon detection and their associated inference problems~\cite{rapp2019dead, gupta2019photon, po2022adaptive, isbaner2016dead}.   More fundamentally, this has a practical systems-level motivation. In high-flux scenarios with large numbers of events per second such as  array-based SPAD architectures~\cite{zhang201830, hutchings2019reconfigurable, zhang2021240, kumagai20217, kitichotkul2025free}, storing and transmitting high-resolution timestamps imposes severe memory and bandwidth constraints. Many modern systems therefore perform relatively \emph{coarse} histogramming (in the sense $K \ll T$) directly at the sensing stage prior to data transmission.

\section{Asymptotic estimation bounds and Fisher information}
\label{section:asymptotic_theory}

We develop a statistical theory of estimation from DED observations $\{(G_t, Y_t)\}_{t=0}^{T-1}$ as defined in  \Cref{def:DED} in the limit $T \rightarrow \infty$. This asymptotic  regime affords tractable analysis and reflects the practical operating conditions in which the available signal is large relative to the number of parameters~\cite{kirmani2014first, becker2012fluorescence, tuo2024revisiting} (see also our previous discussion in \Cref{subsubsection:implicit_modeling_assumptions}). In \Cref{section:lan} we verify that the DED model is locally asymptotically normal (LAN)~\cite{le2000asymptotics, van2000asymptotic, bickel1993efficient} under mild regularity on the rate function, and identify the corresponding Fisher information rate $\mathcal I(\theta;\gamma)$. In \Cref{section:lower_bounds} we use this expansion to derive sharp lower bounds on the asymptotic distribution of estimation errors $\sqrt{T}(\hat\theta_T - \theta)$ via the H\'ajek convolution and local asymptotic minimax theorems.

\subsection{Regularity assumptions and local asymptotic normality of DED processes}
\label{section:lan}
Consider a DED process $\{(G_t, Y_t)\}_{t\geq 0}$ with $K$-periodic event probabilities $p_{t}(\theta)$ and gating-frequency vector $\gamma(\theta)\in[0,1]^K$. With $X_t$ and $P_{\theta,T}$ as in \Cref{subsection:likelihood_suff}, we study estimation in the limit $T \to \infty$. As in \eqref{eq:reparameterization}, we define the rate function $\lambda_t(\theta) \in [0, +\infty]$ by
$$
p_t(\theta) = 1 - \exp(-\lambda_t(\theta)),
$$
with the convention $\exp(-\infty) = 0$. This reparameterization is fully general and is adopted purely for convenience as in physical applications, $p_t(\theta)$ typically arises from a counting model in which $\lambda_t(\theta)$ is the more natural quantity to work with.

By $K$-periodicity of $\lambda_t(\theta)$, we write \(\lambda_r(\theta)\), \(r=0,\dots,K-1\), for the phasewise rates, so that
$$
\lambda_t(\theta) = \lambda_{t\bmod K}(\theta) \, .
$$
The phasewise rate vector \((\lambda_r(\theta))_{r=0}^{K-1}\) encodes all of the application-specific physical modeling and will be the central object on which our regularity assumptions are placed.  We make the following assumption on our DED process, which will be used to establish LAN, estimation lower bounds, and the efficiency of several estimators at our true data-generating parameter $\theta_0\in\Theta$.

\begin{assumption}
    \label{ass:lan}
    Assume that \(\theta_0\in\operatorname{int}(\Theta)\) and that all of the following hold.
\begin{enumerate}
    \item There exist constants \(0<\lambda_{\min}\le \lambda_{\max}<\infty\) such that \(\lambda_{\min}\leq \lambda_r(\theta) \leq \lambda_{\max}\) for all \(r=0,\dots,K-1\) and all \(\theta \in \Theta\).
    \item For each $r=0,\dots,K-1$, the map $\lambda_r:\Theta\to\mathbb R$ extends to a $C^3$ map on an open set containing \(\Theta\).
    \item If $\lambda_r(\theta) = \lambda_r(\theta_0)$ for all $r$ such that $\gamma_r(\theta_0)>0$, then $\theta = \theta_0$.
\item $\Theta$ is compact. 
    \item Define the phase-wise Fisher information matrices
\begin{equation}\label{eq:Ir_def}
\mathcal I_r(\theta)
\coloneq \frac{1-p_r(\theta)}{p_r(\theta)}\nabla \lambda_r(\theta)\nabla \lambda_r(\theta)^\top,
\end{equation}
For any $\alpha\in[0,1]^K$, define the information \emph{rate}
\begin{equation}\label{eq:Irate_def}
\mathcal I(\theta;\alpha)
\coloneq \frac{1}{K}\sum_{r=0}^{K-1}\alpha_r\mathcal I_r(\theta).
\end{equation}
We write the information matrix at the true parameter as
\begin{equation}\label{eq:I0_def}
\mathcal I_0 \coloneq  \mathcal I(\theta_0;\gamma(\theta_0)).
\end{equation}
Assume that the matrix $\mathcal I_0$ in \eqref{eq:I0_def} is positive definite.
\end{enumerate}
\end{assumption}
Importantly, the form of the Fisher information \(\mathcal I_0 \) first derived here separates the statistical information in the Bernoulli experiments from the effect of the acquisition rule in \eqref{eq:causality}. The phasewise matrices \(\mathcal I_r(\theta)\) quantify the information available when phase \(r\) is sampled, while the decision rule enters only through the limiting sampling frequencies \(\gamma_r\). Consequently, for the purpose of asymptotic information calculation, all that matters about the gating rule is how often each phase bin is observed. In many acquisition schemes, these frequencies can be obtained from the stationary law of the induced finite-state process or estimated by Monte-Carlo simulation \cite{rapp2019dead}. 

Conditions (1), (3), and (5) all serve to ensure that the estimation problem is well-posed. Specifically, Condition (1) rules out non-informative phases in which the detection probability is identically $0$ or $1$, and gives the global boundedness of the log-likelihood terms used in the MLE consistency proof. Both extremes are physically excluded because an arbitrarily small but nonzero rate is guaranteed by background radiation and intrinsic detector noise (e.g.~dark counts in SPADs), and an infinite rate is unphysical~\cite{taguchi2011modeling, qian2023modeling, buller2009single, highfluxspl_repo}. Conditions (3) and (5) are identifiability requirements at two different scales. Condition (3) is a global identifiability statement on the phases that the gating scheme actually visits ($\gamma_r > 0$): if two parameter values produce the same rate on every asymptotically visited phase, the $T$-scale limiting likelihood cannot distinguish them. Condition (5) ensures that the visited phases carry information along every parameter direction near $\theta_0$. 

Conditions (2) and (4) are technical. The $C^3$ smoothness in (2) is used to control the remainder in the LAN expansion and could be replaced by weaker differentiability notions (e.g.~differentiability in quadratic mean~\cite{van2000asymptotic}) at the cost of more involved proofs, but we have not pursued this because every rate model we are aware of in the applications of interest is smooth. Compactness of $\Theta$ in (4) is invoked in the consistency arguments of \Cref{section:efficiency} and is similarly stronger than strictly necessary. In practice, however,  parameters of interest such as depths, lifetimes, periods, and amplitudes are always known to lie in bounded physical ranges, and we are not aware of a DED application in which estimation over an unbounded parameter space would be meaningful.

Relying on \Cref{ass:lan}, our strategy is to establish local asymptotic normality (LAN) of DED processes.  LAN refers to statistical models whose log-likelihood, rescaled around the true parameter $\theta_0$ at scale $T^{-1/2}$, converges to a quadratic form so that it locally behaves like a Gaussian shift experiment. The curvature of this quadratic form is the Fisher information $\mathcal I_0$ and approximate Gaussianity yields sharp asymptotic lower bounds on estimation error~\cite{le2000asymptotics, van2000asymptotic}.
\begin{proposition}[LAN]
    \label{prop:lan} Suppose \Cref{ass:lan} holds at parameter $\theta_0\in \Theta$.  Define the normalized score
\begin{equation}
\label{eq:score_def_corrected}
\Delta_t(\theta_0)
\coloneq 
\frac1{\sqrt t}\sum_{r=0}^{t-1}
G_r \frac{Y_r-p_r(\theta_0)}{p_r(\theta_0)} \nabla \lambda_r(\theta_0).
\end{equation}
Then:
\begin{enumerate}
    \item For every bounded sequence $h_t$, the local alternatives $P_{\theta_0+h_t/\sqrt t,t}$ and $P_{\theta_0,t}$ are contiguous with respect to each other.
\item The following local asymptotic normality (LAN) holds:
 \begin{subequations}
    \label{eq:lan_expansion}
    \begin{align}
\Lambda_t(&\theta_0 + h_t/\sqrt{t}, \theta_0) \coloneq      \log\frac{dP_{\theta_0+h_t/\sqrt t,t}}{dP_{\theta_0,t}}\\
   & =
    h_t^\top \Delta_t(\theta_0)
    -\frac12 h_t^\top \mathcal I(\theta_0;\gamma) h_t
    +o_{P_{\theta_0,t}}(1).
    \end{align}
    \end{subequations}
    \item     Under $P_{\theta_0,t}$, $\Delta_t(\theta_0)\Rightarrow \Normal(0,\mathcal I_0)$.
\end{enumerate}
\end{proposition}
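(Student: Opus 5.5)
We will prove the three claims in the order (2)–(3)–(1): first establish the quadratic expansion, then the martingale central limit theorem for $\Delta_t(\theta_0)$, and finally deduce contiguity from Le Cam's first lemma.

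\textbf{Expansion.} By the factorization \eqref{eq:factorization}, the gating kernels $\pi_t$ cancel in the likelihood ratio. Hence
\[
\Lambda_t(\theta_0+h_t/\sqrt t,\theta_0)=\sum_{s=0}^{t-1} G_s\,\ell_s(\theta_0+h_t/\sqrt t)-G_s\,\ell_s(\theta_0),
\]
where $\ell_s(\theta)=Y_s\log p_s(\theta)+(1-Y_s)\log(1-p_s(\theta))$. Using $p=1-e^{-\lambda}$ we get $\partial_\lambda \log p=(1-p)/p$ and $\partial_\lambda\log(1-p)=-1$. The $\theta$-gradient of $\ell_s$ is therefore
\[
\Bigl[Y_s\tfrac{1-p_s}{p_s}-(1-Y_s)\Bigr]\nabla\lambda_s=\tfrac{Y_s-p_s}{p_s}\nabla\lambda_s,
\]
which is exactly the summand in \eqref{eq:score_def_corrected}. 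We then Taylor expand each $\ell_s$ to second order with a third-order remainder. By Assumption~\ref{ass:lan}(1),(2),(4), $p_s$ is bounded away from $0$ and $1$ and $\lambda_s$ is $C^3$ on a neighbourhood of the compact $\Theta$. So the third derivatives of $\ell_s$ are uniformly bounded, and the remainder is $O(t\cdot\|h_t\|^3 t^{-3/2})=o(1)$ deterministically. The second-order term is $-\frac{1}{2t}\sum_s G_s\, h_t^\top(-\nabla^2\ell_s(\theta_0))h_t$. We write $-\nabla^2\ell_s=\mathcal I_{s\bmod K}(\theta_0)+M_s$, where $M_s$ has conditional mean zero given $(G_s,\mathcal F_{s-1})$. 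This holds because $\E[-\nabla^2\ell_s\mid G_s=1,\text{past}]$ equals the Bernoulli information, since $Y_s$ is independent of the past given $G_s=1$ (as $Z_s$ is independent of $U_s$ and the past). The term $\frac1t\sum G_sM_s$ is a bounded-increment martingale average, so it is $o_P(1)$ by an $L^2$ bound. The remaining term is $\frac1t\sum_s G_s\mathcal I_{s\bmod K}=\frac1K\sum_r \frac{N_r(t)}{t/K}\mathcal I_r$, which converges in probability to $\mathcal I(\theta_0;\gamma)$ by gating-frequency convergence \eqref{eq:phase_convergent}. Boundedness of $h_t$ lets us replace $\frac1t\sum G_s\mathcal I_s$ by its limit inside the quadratic form uniformly.

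\textbf{CLT.} Let $\xi_s=G_s\frac{Y_s-p_s}{p_s}\nabla\lambda_s$ with respect to the filtration $\mathcal F_s=\sigma(X_{0:s},U_{s+1})$, so that $G_s$ is $\mathcal F_{s-1}$-measurable. Then $\E[\xi_s\mid\mathcal F_{s-1}]=0$, because $Z_s$ is independent of $\mathcal F_{s-1}$. The conditional covariance is $G_s\frac{1-p_s}{p_s}\nabla\lambda_s\nabla\lambda_s^\top=G_s\mathcal I_{s\bmod K}$, and by the same counting argument $\frac1t\sum_s G_s\mathcal I_{s\bmod K}\to\mathcal I_0$ in probability. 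The increments are uniformly bounded, so the conditional Lindeberg condition holds trivially. The martingale CLT (e.g.\ via Cramér–Wold and the Brown/Hall–Heyde theorem) then gives $\Delta_t(\theta_0)\Rightarrow\Normal(0,\mathcal I_0)$.

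\textbf{Contiguity.} Combining the expansion and the CLT, and passing to subsequences along which $h_t\to h$, we obtain $\Lambda_t\Rightarrow\Normal(-\frac12h^\top\mathcal I_0h,\;h^\top\mathcal I_0h)$ under $P_{\theta_0,t}$. This limit satisfies $\E e^{\Lambda}=1$, so Le Cam's first lemma gives mutual contiguity along every subsequence, hence for the full sequence.

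The main obstacle is the dependence structure. The key is to check carefully that the conditional law of $Y_s$ given $G_s=1$ and the past is $\Ber(p_s)$ under the causality condition \eqref{eq:causality}, with $U_s$ independent of the whole $Z$ sequence. This is what makes both the score and the Hessian fluctuation martingale differences. The only place where the gating rule enters is the convergence of the random normalized information to the deterministic $\mathcal I_0$, and that is supplied directly by \eqref{eq:phase_convergent}.
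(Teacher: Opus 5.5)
Your proposal is correct and follows the paper's proof essentially step for step. Both arguments use cancellation of the gating kernels, a third-order Taylor expansion with a deterministic $o(1)$ remainder, an $L^2$ martingale bound on the Hessian fluctuations combined with gating-frequency convergence, Cram\'er--Wold plus a martingale CLT with a trivially satisfied Lindeberg condition, and a subsequence argument with Le Cam's first lemma. Two small fixes are needed. First, $\sigma(X_{0:s},U_{s+1})$ is not increasing in $s$, since $U_s$ is generally not recoverable from $X_{0:s}$ and $U_{s+1}$, so it is not a filtration and the martingale CLT does not formally apply. Take $\mathcal F_s=\sigma(X_{0:s},G_{s+1})$ as the paper does; all your conditional-expectation computations go through unchanged. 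Second, for \emph{mutual} rather than one-sided contiguity, also note that the Gaussian limit of $\Lambda_t$ is a.s.\ finite, not only that $\E e^{\Lambda}=1$.
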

Note that for bounded sequences $h_t$, we have $\theta_0+h_t/\sqrt t\in \Theta$ for all sufficiently large $t$, such that the statements above are well-defined.    See, e.g.,~Sec.~3 of \citeasnoun{le2000asymptotics} for a formal definition of contiguity and the likelihood ratio employed above. The proof of \Cref{prop:lan} is conceptually similar to the i.i.d.\ case without dead time~\cite[Theorem~7.2]{van2000asymptotic} where $\gamma_r \equiv 1$ and is presented in \Cref{section:proofs}.

\subsection{Lower bounds on estimation}
\label{section:lower_bounds}
Using standard LAN theory and the expansion in \Cref{prop:lan}, we can translate the Fisher information rate
\(\mathcal I(\theta;\gamma)\) into sharp lower bounds for estimation in dead-time--constrained experiments. These lower bounds are stated through the H\'ajek convolution theorem for regular estimators in \Cref{thm:hajek_convolution} and the local asymptotic minimax bound for arbitrary estimators in \Cref{thm:local_asymptotic_minimax}.

\paragraph{H\'ajek convolution theorem}
We first state a H\'ajek convolution theorem for DED processes. This theorem applies to \emph{regular} estimators, meaning estimators whose local asymptotic distribution is stable under perturbations of the true parameter at the \(T^{-1/2}\) scale. As a reminder, we give the formal definition following \citeasnoun{le2000asymptotics}, Section~6, Theorem~3. 
\begin{definition}
\label{def:regular_estimator}
A sequence of estimators $(\hat\theta_T)_{T\ge0}$ is called regular at $\theta_0$ with limit law $\mathcal L_{\theta_0}$ on $\R^d$ if, for every fixed sufficiently small $h\in\R^d$,
$$
\sqrt{T}\Bigl(\hat\theta_T - \theta_0 - \frac{h}{\sqrt{T}}\Bigr)
 \Rightarrow
\mathcal L_{\theta_0}
$$
under $\theta_0 + h/\sqrt{T}$ as $T\rightarrow \infty$.
\end{definition}
The regularity condition excludes pathological superefficient behavior at a Lebesgue-null set of parameters $\theta\in\Theta$. For example,  a classical theorem of Le Cam implies that for any fixed sequence of  estimators $\hat \theta_T$ such that $\sqrt{T}(\hat\theta_T-\theta)$ has a limit law for each $\theta \in \Theta$, the set of parameter values at which regularity fails is a Lebesgue-null set~\cite[Chapter~6, Proposition~7]{le2000asymptotics}.  Thus lower bounds proved for regular estimators apply to arbitrary convergent estimators at Lebesgue-almost every parameter value.
\begin{theorem}[H\'ajek convolution theorem]
\label{thm:hajek_convolution}
Assume the hypotheses of \Cref{prop:lan} hold, and let $(\hat\theta_T)_{T\ge0}$ be a regular estimator at $\theta_0$ with limit law $\mathcal L_{\theta_0}$ as in \Cref{def:regular_estimator}.  Then there exists a probability measure $\nu_{\theta_0}$ on $\R^d$ such that
$$
\mathcal L_{\theta_0}
=
\Normal\left(0,\mathcal I_0^{-1}\right) * \nu_{\theta_0} \, .
$$
\end{theorem}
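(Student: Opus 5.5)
The plan is to reduce the statement to the abstract H\'ajek--Le~Cam convolution theorem for LAN experiments, e.g.\ \citeasnoun{le2000asymptotics}, Section~6, or \cite[Theorem~8.8]{van2000asymptotic}. Those results are stated for a sequence of experiments $\{P_{\theta_0+h/\sqrt T,T}:h\in\R^d\}$ that satisfies a LAN expansion with nonsingular information. They require nothing about independence of the observations. So the only work specific to DED processes is checking the hypotheses, and \Cref{prop:lan} supplies all of them. In our setting the log-likelihood ratio expands as $h^\top\Delta_T(\theta_0)-\tfrac12 h^\top\mathcal I_0 h+o_{P_{\theta_0,T}}(1)$, with $\Delta_T(\theta_0)\Rightarrow\Normal(0,\mathcal I_0)$, and $\mathcal I_0$ is positive definite by \Cref{ass:lan}(5).

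For a self-contained argument I would carry out the following steps. First, fix $h$. Since $\sqrt T(\hat\theta_T-\theta_0)$ converges in law under $P_{\theta_0,T}$ (regularity with $h=0$), and $\Delta_T$ converges in law, the pair is tight. Pass to a subsequence along which
$$\bigl(\sqrt T(\hat\theta_T-\theta_0),\Delta_T(\theta_0)\bigr)\Rightarrow (Z,\Delta)$$
under $P_{\theta_0,T}$. By the LAN expansion, the log-likelihood ratio then converges jointly as well, to $h^\top\Delta-\tfrac12 h^\top\mathcal I_0 h$.

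Second, apply Le~Cam's third lemma, which is available because of the contiguity in \Cref{prop:lan}(1). It identifies the limit law of $\sqrt T(\hat\theta_T-\theta_0)$ under $P_{\theta_0+h/\sqrt T,T}$ as the tilted law
$$\E\bigl[\mathbf 1_B(Z)\,e^{h^\top\Delta-\frac12 h^\top\mathcal I_0 h}\bigr].$$
Regularity says the same limit equals the law of $Z$ shifted by $h$. Taking characteristic functions gives, for all sufficiently small $h$,
$$\E\bigl[e^{iu^\top Z+h^\top\Delta-\frac12h^\top\mathcal I_0h}\bigr]=e^{iu^\top h}\,\E\bigl[e^{iu^\top Z}\bigr].$$
Third, both sides are analytic in $h$, since $\Delta$ is Gaussian and therefore has exponential moments. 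This extends the identity to all complex $h$, and in particular to $h=-i\,\mathcal I_0^{-1}u$. Evaluating there shows that $Z-\mathcal I_0^{-1}\Delta$ is independent of $\mathcal I_0^{-1}\Delta\sim\Normal(0,\mathcal I_0^{-1})$. Hence $\mathcal L_{\theta_0}=\Law(Z)=\Normal(0,\mathcal I_0^{-1})*\nu_{\theta_0}$ with $\nu_{\theta_0}=\Law(Z-\mathcal I_0^{-1}\Delta)$. Because $\mathcal L_{\theta_0}$ does not depend on the subsequence, the factorization holds for the whole sequence.

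The main obstacle is the analytic-continuation step. \Cref{def:regular_estimator} only asserts regularity for sufficiently small $h$, so I must make sure a neighborhood of $0$ suffices. I would handle this through the identity theorem for functions of several complex variables: the left-hand side is entire in $h$, and the right-hand side is entire as well, so agreement on an open real set forces agreement everywhere. I would also check that the $h_t$-uniform form of LAN in \Cref{prop:lan} covers the fixed sequences $h_t\equiv h$ used here, which it does trivially. Everything else is a direct citation of the abstract theorem.
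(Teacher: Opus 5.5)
Your proposal is correct and takes the same route as the paper, whose proof just notes that \Cref{prop:lan} supplies LAN with nonsingular $\mathcal I_0$ and then invokes the Le~Cam--Yang convolution theorem (Section~6, Theorem~3); your self-contained part simply reproduces the standard characteristic-function proof of that abstract theorem. Two small precision points are worth fixing. First, choose the subsequence before fixing $h$ (the joint tightness of $(\sqrt T(\hat\theta_T-\theta_0),\Delta_T(\theta_0))$ under $P_{\theta_0,T}$ does not involve $h$), so that a single joint law of $(Z,\Delta)$ serves every $h$ in the analytic-continuation step. Second, evaluating at $h=-i\mathcal I_0^{-1}u$ alone yields $\varphi_Z(u)=e^{-u^\top\mathcal I_0^{-1}u/2}\,\varphi_{Z-\mathcal I_0^{-1}\Delta}(u)$; this already is the convolution identity, whereas independence of $Z-\mathcal I_0^{-1}\Delta$ and $\Delta$ would require evaluating at $h=i(v-\mathcal I_0^{-1}u)$.
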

A short proof by applying LAN and the result of \citeasnoun{le2000asymptotics}, Section~6, Theorem~3, is given in \Cref{section:proofs}. 
The theorem says that no regular estimator can have smaller asymptotic spread than \(\Normal(0,\mathcal I_0^{-1})\).
The following risk bound is an immediate consequence that follows from weak convergence, Portmanteau's Theorem, and the monotone convergence theorem.
\begin{corollary}
\label{cor:hajek_lower_bound}
Assume the hypotheses of \Cref{prop:lan} hold. Let $(\hat\theta_T)$ be a regular estimator at $\theta_0$. Then,
\begin{equation}
\label{eq:hajek_risk_lower_bound}
\lim\limits_{b\rightarrow \infty}\liminf_{T\to\infty}
\E_{\theta_0}\left[b\wedge T\|\hat\theta_T - \theta_0\|_2^2\right]
 \ge
\Tr \bigl(\mathcal I_0^{-1}\bigr)
\end{equation}
where $x\wedge y\coloneq \min\{x,y\}$. This lower bound is achieved if the limit law in  \Cref{def:regular_estimator} is $\Normal\left(0,\mathcal I_0^{-1}\right)$.
\end{corollary}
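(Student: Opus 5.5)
We derive the corollary from the H\'ajek convolution theorem (\Cref{thm:hajek_convolution}) using weak convergence and monotone convergence. Let $W_T\coloneq\sqrt T(\hat\theta_T-\theta_0)$. Regularity with $h=0$ gives $W_T\Rightarrow\mathcal L_{\theta_0}$ under $P_{\theta_0}$. By \Cref{thm:hajek_convolution}, $\mathcal L_{\theta_0}=\Law(A+B)$ with $A\sim\Normal(0,\mathcal I_0^{-1})$ and $B\sim\nu_{\theta_0}$ independent. Note that $b\wedge T\|\hat\theta_T-\theta_0\|_2^2=f_b(W_T)$ with $f_b(w)\coloneq b\wedge\|w\|_2^2$.

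For fixed $b$, the function $f_b$ is bounded and continuous. Weak convergence (Portmanteau) therefore gives
\[
\lim_{T\to\infty}\E_{\theta_0}[f_b(W_T)]=\E[f_b(A+B)],
\]
so the $\liminf$ is in fact a limit. Letting $b\to\infty$, monotone convergence gives $\lim_b\E[f_b(A+B)]=\E\|A+B\|_2^2\in[0,\infty]$.

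It remains to show $\E\|A+B\|_2^2\ge\Tr(\mathcal I_0^{-1})$. If $\E\|B\|_2^2=\infty$, we argue that $\E\|A+B\|_2^2=\infty$. Indeed, $\|B\|_2^2\le 2\|A+B\|_2^2+2\|A\|_2^2$ and $\E\|A\|_2^2<\infty$, so the left side would be finite otherwise. If $\E\|B\|_2^2<\infty$, we expand the square. Independence together with $\E A=0$ kills the cross term, since $\E[A^\top B]=\E[A]^\top\E[B]=0$. Hence
\[
\E\|A+B\|_2^2=\Tr(\mathcal I_0^{-1})+\E\|B\|_2^2\ge\Tr(\mathcal I_0^{-1}).
\]

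For attainment, suppose $\mathcal L_{\theta_0}=\Normal(0,\mathcal I_0^{-1})$. Then the same chain of limits yields exactly $\E\|A\|_2^2=\Tr(\mathcal I_0^{-1})$, so equality holds in \eqref{eq:hajek_risk_lower_bound}.

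The only delicate point is the possibly infinite second moment of $\nu_{\theta_0}$, which is handled by the case split above. Everything else is routine. The truncation at $b$ is precisely what avoids needing uniform integrability of $W_T$.
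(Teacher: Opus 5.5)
Your proposal is correct and takes essentially the same route the paper indicates: apply the H\'ajek convolution theorem, pass to the limit in $T$ for the bounded continuous loss $b\wedge\|w\|_2^2$ via Portmanteau, then let $b\to\infty$ by monotone convergence. Your case split on $\E\|B\|_2^2$ and the vanishing cross term spell out the convolution step $\E\|A+B\|_2^2\ge\Tr(\mathcal I_0^{-1})$, which the paper uses without writing it out.
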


\paragraph{Local minimax bound}
We next give a complementary characterization of the minimal asymptotic estimation error through a local asymptotic minimax result.  Unlike the H\'ajek convolution theorem above, which is stated for regular estimators, the local asymptotic minimax theorem applies to \emph{arbitrary} estimator sequences.  It shows that no estimator can uniformly improve on the same Gaussian Fisher information lower bound  over shrinking neighborhoods of the true parameter at scale  \(T^{-1/2}\).
The theorem is stated for bowl-shaped loss functions~\cite{le2000asymptotics}:
\begin{definition}
A function $W:\R^d\to[0,\infty)$ is called \emph{bowl-shaped}  if for every $a\ge 0$, the sublevel set
$$
\{u\in\R^d: W(u)\le a\}
$$
is closed, convex, and symmetric about the origin.
\end{definition}
Examples include \(W(u)=\|u\|_p^q\) for \(p\ge 1\) and \(q>0\), in particular the squared Euclidean loss \(W(u)=\|u\|_2^2\).
\begin{theorem} \label{thm:local_asymptotic_minimax}
Assume the hypotheses of \Cref{prop:lan} hold.  Let $W:\R^d\to[0,\infty)$ be bowl-shaped. 
Then for every sequence of estimators $(\hat\theta_T)$, the following local minimax bound holds
\begin{subequations}
\label{eq:laminimax_general}
\begin{align}
\lim_{b\to\infty} \lim_{c\to\infty} \liminf_{T\to\infty}
&\sup_{\|\theta-\theta_0\|_2\le \frac{c}{\sqrt{T}}}
\E_{\theta}
 \left[
b\wedge
W \left(
\sqrt T \Bigl(\hat\theta_T-\theta\Bigr)
\right)
\right]
\\
&\ge
\E\left[ W \left( Z\right)\right],
\end{align}
\end{subequations}
where $Z\sim \Normal(0,\mathcal{I}_0^{-1} )$.  In particular,
\begin{align*}
\lim_{b\to\infty}\ \lim_{c\to\infty}\ \liminf_{T\to\infty}\
&\sup_{\|\theta-\theta_0\|_2\le c/\sqrt{T}}
\E_{\theta}
 \left[
b\wedge
\left\|
\sqrt T\Bigl(\hat\theta_T-\theta\Bigr)
\right\|^2_2
\right] \\
&\ge \Tr(\mathcal{I}_0^{-1}).
\end{align*}
\end{theorem}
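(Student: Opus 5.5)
The plan is to reduce \Cref{thm:local_asymptotic_minimax} to the general local asymptotic minimax theorem for LAN experiments, e.g.\ \citeasnoun{le2000asymptotics}, Section~6, or \cite[Theorem~8.11]{van2000asymptotic}. Those results are stated abstractly: for any sequence of experiments whose log-likelihood ratios admit a LAN expansion with a nonsingular information matrix. They therefore do not use the i.i.d.\ structure, and the dead-time dependence enters only through \Cref{prop:lan}.

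\textbf{Step 1: verify the hypotheses.} Fix $\theta_0$ satisfying \Cref{ass:lan}. \Cref{prop:lan} supplies the local experiments $\{P_{\theta_0+h/\sqrt T,T}:h\in\R^d\}$ for all $h$ with $\theta_0+h/\sqrt T\in\Theta$, which holds for large $T$ since $\theta_0\in\operatorname{int}(\Theta)$. It also supplies the expansion \eqref{eq:lan_expansion} with central sequence $\Delta_T(\theta_0)\Rightarrow\Normal(0,\mathcal I_0)$, and $\mathcal I_0$ is positive definite by Condition~(5). These are exactly the hypotheses of the abstract theorem. If the abstract version requires the expansion along bounded sequences $h_T\to h$ rather than fixed $h$, \Cref{prop:lan} already gives it for bounded $h_T$.

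\textbf{Step 2: Gaussian-shift limit and lower bound.} Using Le Cam's third lemma, which contiguity from \Cref{prop:lan}(1) permits, the local experiments converge weakly to the Gaussian shift $\{\Normal(h,\mathcal I_0^{-1}):h\in\R^d\}$. For the localized estimators $\sqrt T(\hat\theta_T-\theta_0)$ and the truncated loss $b\wedge W$, which is bounded, bowl-shaped and lower semicontinuous, the asymptotic representation theorem produces a (randomized) limit estimator $\tilde h$ in the Gaussian shift. It then gives
$$\liminf_{T}\sup_{\|h\|\le c}\E_{\theta_0+h/\sqrt T}\bigl[b\wedge W(\sqrt T(\hat\theta_T-\theta_0)-h)\bigr]\ge \inf_{\tilde h}\sup_{\|h\|\le c}\E_h\bigl[b\wedge W(\tilde h-h)\bigr].$$
Substituting $\theta=\theta_0+h/\sqrt T$ shows the left side matches \eqref{eq:laminimax_general} before the limits in $c$ and $b$.

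\textbf{Step 3: pass to the limits.} Anderson's lemma together with a Bayes argument with $\Normal(0,\sigma^2 I)$ priors, $\sigma\to\infty$, shows that the minimax risk of the Gaussian shift over $\|h\|\le c$ tends, as $c\to\infty$, to $\E[b\wedge W(Z)]$ with $Z\sim\Normal(0,\mathcal I_0^{-1})$. Monotone convergence in $b$ then yields $\E[W(Z)]$. The squared-norm statement is the special case $W(u)=\|u\|_2^2$, since $\E\|Z\|_2^2=\Tr(\mathcal I_0^{-1})$.

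\textbf{Main obstacle.} The only nonroutine point is the bookkeeping that lets one cite the abstract theorem verbatim. The supremum must be over the Euclidean ball $\|\theta-\theta_0\|_2\le c/\sqrt T$, which lies in $\Theta$ for large $T$. The truncation by $b$ handles possibly unbounded $W$. I would simply check that the cited theorem's hypotheses, namely LAN with nonsingular information and bowl-shaped lower semicontinuous loss, are met, relying on the closed sublevel sets in the bowl-shaped definition. No new probabilistic estimate beyond \Cref{prop:lan} is needed.
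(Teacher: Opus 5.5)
Your proposal is correct and follows the paper's proof: both reduce the statement to the abstract local asymptotic minimax theorem of Le Cam and Yang (Section~6.6, Theorem~1), with \Cref{prop:lan} supplying the LAN hypotheses and the substitution $h=\sqrt T(\theta-\theta_0)$ doing the rest, and your Steps~2--3 only unpack what that theorem does internally. One caveat: van der Vaart's Theorem~8.11 is stated for i.i.d.\ models differentiable in quadratic mean, with a supremum over finite sets of $h$, so it is the Le Cam--Yang version, which is abstract and already in the truncated $\lim_b\lim_c$ form, that you can cite verbatim here.
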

The proof can be found in \Cref{section:proofs}.  The truncation parameter \(b\) in \Cref{cor:hajek_lower_bound} and \Cref{thm:local_asymptotic_minimax} can be removed if one cares only about the lower bound and not about its attainment. The Portmanteau Theorem implies that, under the hypotheses of \Cref{prop:lan}, every sequence of estimators \((\hat\theta_T)_{T\geq 0}\) satisfies
$$
\lim_{c\to\infty}\ \liminf_{T\to\infty}\
\sup_{\|\theta-\theta_0\|_2 \le c/\sqrt{T}}
\E_{\theta}\!\left\|
\sqrt{T}\bigl(\hat\theta_T - \theta\bigr)
\right\|_2^2
\;\ge\; \Tr(\mathcal{I}_0^{-1}) \, ,
$$
together with the analogous statement for \Cref{cor:hajek_lower_bound}. As is standard in asymptotic statistics, we retain the truncation in the stated versions because it ensures the bound is \emph{attained} by any estimator with limit law \(\sqrt{T}(\hat\theta_T - \theta_0) \Rightarrow \Normal(0, \mathcal I_0^{-1})\). Without truncation, attainment would additionally require passing from convergence in distribution to convergence of second moments, for example via the Vitali convergence theorem  by showing uniform integrability of \(\left(\|\sqrt{T}(\hat\theta_T - \theta_0)\|_2^2\right)_{T\geq0}\).

This justifies restricting the error analysis of estimators $(\hat\theta_T )_{T\geq 0}$  in the rest of the paper to the limit law of the error \(\sqrt{T}(\hat\theta_T - \theta_0)\). An estimator whose error distribution converges to the minimum-spread law \(\Normal(0,\mathcal I_0^{-1})\) is asymptotically optimal in the sense of \Cref{thm:hajek_convolution} and \Cref{thm:local_asymptotic_minimax} and is called \emph{asymptotically efficient}, or simply \emph{efficient}, in statistics~\cite{van2000asymptotic,le2000asymptotics}. In the next section, we give two practical examples of efficient estimators for DED processes.

\subsection{On the assumption of gating-frequency convergence}
\label{subsubsection:circumventing_gating_frequency_convergence}
We briefly comment on the role of gating-frequency convergence in \Cref{def:DED}. The assumption ensures that the acquisition rule has a well-defined asymptotic information rate, which is the quantity entering the lower bounds in \Cref{thm:hajek_convolution} and \Cref{thm:local_asymptotic_minimax}. One \textit{could} prove these results without this assumption by passing to subsequences along which the empirical gating frequencies converge to some limiting vector \(\gamma\) achieving the liminf risk, and apply the same LAN and lower-bound arguments along this subsequence. We nonetheless impose gating-frequency convergence directly, for three reasons. First, there is no additional statistical insight when not imposing this assumption. Second, it enables simulation of \(\gamma\) and the empirical computation of asymptotic quantities (as for example in \Cref{subsection:one_step}), which would otherwise be practically infeasible. Finally, it is the practically relevant case as practically relevant schemes such as the free-running and synchronous schemes satisfy it (\Cref{app:gating_frequency_convergence_acquisition}).  This is natural from an engineering standpoint, since a rule that has not settled asymptotically has no stable answer to the question of which bins are most informative to sample. If the empirical gating frequencies oscillate indefinitely between two limits, for instance, one should restrict to the gating-frequency convergent subsequence achieving the lower asymptotic risk.
\section{Efficient estimators}
\label{section:efficiency}

The lower bounds in \Cref{cor:hajek_lower_bound} and \Cref{thm:local_asymptotic_minimax} characterize the best asymptotic performance any estimator can hope to achieve by a limiting mean-zero normal law with covariance $\mathcal I_0^{-1}$. In this section, we will present two concrete implementable estimators that attain these bounds and are hence asymptotically efficient.

In \Cref{subsection:mle} we show asymptotic efficiency of  the maximum likelihood estimator (MLE), variants of which are widely used in practical DED applications~\cite{coates1968correction,  pediredla2018signal, rapp2019dead, kitichotkul2025free, maus2001experimental, chessel2013maximum, bajzer1991maximum, wang2026synchronous}. In \Cref{subsection:one_step} we introduce one-step estimators for DED processes as a more flexible alternative. Originally popularized by Le Cam, a one-step estimator applies a single local Newton correction to a sufficiently good initial pilot estimator,  attaining the same asymptotic efficiency as the MLE while avoiding iterative optimization.

\subsection{Maximum likelihood estimation}
\label{subsection:mle}
Given data $X_0, X_1, \ldots, X_{T-1}$, a maximum likelihood estimator (MLE) is defined as a sequence of estimators $\hat \theta_T\in \Theta$ satisfying
\begin{equation}
    \label{eq:mle}
\begin{aligned}
\hat\theta_T
&\in \arg\max_{\theta\in\Theta}\ell_T(\theta),\\
\ell_T(\theta)
&\coloneq  \log P_{\theta,T}(X_0, X_1, \ldots, X_{T-1})
\end{aligned}
\end{equation}
for each $T$, where the explicit form of the likelihood is given in \Cref{eq:phase_loglikelihood_intro}. We refer to any such sequence as ``the'' MLE, although it need not be unique. 
The MLE is a well-known inference procedure that is widely used in practical applications involving DED processes, though prior work has only addressed specific gating rules~\cite{coates1968correction,rapp2019dead,kitichotkul2025free,maus2001experimental,chessel2013maximum,bajzer1991maximum,wang2026synchronous}. In contrast, the likelihood formula in \Cref{eq:phase_loglikelihood_intro} enables computation of the MLE for any $\theta$-independent gating rule.

One of the reasons for its popularity is that it is often asymptotically efficient in the i.i.d.-case~\cite{van2000asymptotic}. The following theorem establishes  asymptotic efficiency of the MLE in the non-i.i.d.\ DED-process case under \Cref{ass:lan}.
\begin{theorem}
\label{thm:mle_efficient} Consider the hypotheses of \Cref{prop:lan}.  Write  \(\mathcal I_0\coloneq \mathcal I(\theta_0;\gamma(\theta_0))\). Then for any MLE $\hat\theta_T$ the following are true.
\begin{enumerate}
\item $\hat\theta_T$ is consistent, meaning that $\hat\theta_T\xrightarrow{P_{\theta_0,T}}\theta_0$.
\item It is efficient and, in particular, asymptotically normal $\sqrt{T} (\hat\theta_T-\theta_0) \Rightarrow
\Normal(0,\mathcal I_0^{-1})$.
\item $\hat\theta_T$ is regular meaning that for every fixed $h\in\R^d$,
$$
\sqrt{T}\left(\hat\theta_T-\theta_0-\frac{h}{\sqrt{T}}\right)
\Rightarrow
\Normal(0,\mathcal I_0^{-1})
$$
under $P_{\theta_0+h/\sqrt{T},T}$.
\end{enumerate}
\end{theorem}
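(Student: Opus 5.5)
The plan has three parts: consistency via a uniform law of large numbers on the phasewise counts, efficiency via a Taylor expansion of the score, and regularity via Le Cam's third lemma.

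\textbf{Consistency.} By \eqref{eq:phase_loglikelihood_intro}, the normalized log-likelihood is
$$\tfrac1T\ell_T(\theta)=\sum_{r}\Bigl[\tfrac{S_r(T)}{T}\log p_r(\theta)+\tfrac{N_r(T)-S_r(T)}{T}\log(1-p_r(\theta))\Bigr]+\tfrac{C}{T}.$$
The $\theta$-free term $C/T$ can be ignored.

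1. Gating-frequency convergence \eqref{eq:phase_convergent} gives $N_r(T)/T\to\gamma_r/K$.
2. The sum $\sum_{t\equiv r}(Y_t-G_tp_r(\theta_0))$ is a martingale with bounded increments: conditionally on the past and on $G_t$, $Y_t$ is $\Ber(p_r)$ when $G_t=1$, by causality \eqref{eq:causality}. Its $L^2$ norm is $O(\sqrt T)$, so $S_r(T)/T\to\gamma_rp_r(\theta_0)/K$ in probability.
3. Condition (1) of \Cref{ass:lan} keeps $\log p_r$ and $\log(1-p_r)$ uniformly bounded and continuous on the compact $\Theta$. Since $\theta$ enters only through these finitely many functions, with random coefficients that converge, the convergence is uniform in $\theta$:
$$\tfrac1T\ell_T(\theta)-\tfrac{C}{T}\to M(\theta)=\tfrac1K\sum_r\gamma_r\bigl[p_r^0\log p_r(\theta)+(1-p_r^0)\log(1-p_r(\theta))\bigr].$$
4. $M(\theta_0)-M(\theta)=\frac1K\sum_r\gamma_r D_{\mathrm{KL}}(\Ber(p_r^0)\|\Ber(p_r(\theta)))\ge0$. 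Equality holds only if $p_r(\theta)=p_r^0$ on every visited phase, which forces $\theta=\theta_0$ by condition (3). Compactness makes the maximizer well separated.
5. The standard argmax theorem (e.g.\ \cite[Thm.~5.7]{van2000asymptotic}) then yields $\hat\theta_T\to\theta_0$ in $P_{\theta_0,T}$-probability.

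\textbf{Asymptotic normality and efficiency.} Since $\theta_0$ is interior, consistency implies that with probability tending to one $\hat\theta_T$ is interior and satisfies $\nabla\ell_T(\hat\theta_T)=0$.

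1. Differentiating, the score is $\nabla\ell_T(\theta)=\sum_t G_t\frac{Y_t-p_t(\theta)}{p_t(\theta)}\nabla\lambda_t(\theta)$. At $\theta_0$ this is exactly $\sqrt T\,\Delta_T(\theta_0)$ from \Cref{prop:lan}.
2. Taylor-expand the score around $\theta_0$ with a mean-value Hessian at an intermediate point $\tilde\theta_T$.
3. The Hessian $-\frac1T\nabla^2\ell_T(\theta)$ is a finite combination of $N_r(T)/T$ and $S_r(T)/T$ with coefficients that are continuous in $\theta$ and have bounded derivatives, by the $C^3$ assumption (2) and condition (1).
4. The same LLN therefore gives $-\frac1T\nabla^2\ell_T(\theta)\to J(\theta)$ uniformly on a neighborhood of $\theta_0$. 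At $\theta_0$ the terms multiplying $Y_t-p_t$ vanish in the limit, so $J(\theta_0)=\frac1K\sum_r\gamma_r\frac{1-p_r}{p_r}\nabla\lambda_r\nabla\lambda_r^\top=\mathcal I_0$.
5. With $\tilde\theta_T\to\theta_0$, this gives
$$\sqrt T(\hat\theta_T-\theta_0)=\mathcal I_0^{-1}\Delta_T(\theta_0)+o_P(1),$$
using positive definiteness from condition (5).
6. \Cref{prop:lan}(3) then yields the limit $\Normal(0,\mathcal I_0^{-1})$.

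\textbf{Regularity.} Use Le Cam's third lemma.

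1. By \Cref{prop:lan}(2), the pair $(\Delta_T,\Lambda_T)$ is jointly asymptotically normal under $\theta_0$. The covariance between $\Delta_T$ and $\Lambda_T$ is $\mathcal I_0h$, which follows from the martingale CLT underlying \Cref{prop:lan}(3) applied to the joint vector.
2. Under $\theta_0+h/\sqrt T$, the third lemma gives $\Delta_T\Rightarrow\Normal(\mathcal I_0h,\mathcal I_0)$.
3. By contiguity (\Cref{prop:lan}(1)), the $o_P(1)$ remainder in the linearization stays $o_P(1)$ under the local alternatives.
4. Hence $\sqrt T(\hat\theta_T-\theta_0)\Rightarrow\Normal(h,\mathcal I_0^{-1})$ under $\theta_0+h/\sqrt T$, which is the claimed regularity after subtracting $h$.

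\textbf{Main obstacle.} I expect the uniform LLN for the non-i.i.d.\ counts to be the hardest step. The key is to reduce everything to the $2K$ sufficient statistics of \Cref{prop:suff_active}, so that uniformity in $\theta$ follows from continuity of finitely many deterministic functions rather than from an empirical-process argument. The martingale bound in consistency step 2 is what replaces the i.i.d.\ LLN.
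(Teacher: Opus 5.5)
Your proposal is correct and follows the paper's proof essentially step for step. It uses the martingale bound on $S_r(T)-p_r^0N_r(T)$ with gating-frequency convergence, the uniform convergence of $\ell_T/T$ with KL identifiability and the argmax theorem, the score expansion with the first-order condition, and Le Cam's third lemma with contiguity. The only difference is that you control the Hessian near $\theta_0$ through the uniform law of large numbers for the $2K$ counts. The paper instead uses a Lipschitz bound plus the martingale argument from the LAN proof, and both are valid. You should, however, replace the single mean-value point by the integral form $\bar J_T=-\frac1T\int_0^1\nabla^2\ell_T\bigl(\theta_0+u(\hat\theta_T-\theta_0)\bigr)\,du$ (or by row-wise intermediate points), because a single intermediate point need not exist for the vector-valued score.
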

In particular, the MLE achieves the lower bound \eqref{eq:hajek_risk_lower_bound} in the Hájek convolution theorem among regular estimators, and the lower minimax bound \eqref{eq:laminimax_general} among all estimators.

A proof of \Cref{thm:mle_efficient} can be found in \Cref{section:proofs}.  It is worth understanding the intuition behind this result, as it will inform the following discussion. The key is the typical local quadratic structure of the log-likelihood due to LAN (\Cref{prop:lan}), meaning that in a $T^{-1/2}$-neighborhood of $\theta_0$ we have informally that
\begin{equation}
\label{eq:likelihood_equation}
\ell_T(\theta) \approx \ell_T(\theta_0) + (\theta-\theta_0)^\top \nabla\ell_T(\theta_0) - \tfrac{T}{2}(\theta-\theta_0)^\top \mathcal I_0 (\theta-\theta_0),
\end{equation}
so that maximizing the log-likelihood is asymptotically equivalent to maximizing this quadratic. Its maximizer is therefore, approximately,
\begin{equation}
\label{eq:max_likelihood}
\theta_0 + \mathcal I_0^{-1}\nabla\ell_T(\theta_0)/T,
\end{equation}
which, after rescaling, has the efficient limit law $\Normal(0, \mathcal I_0^{-1})$.

\subsection{One-step estimators}
\label{subsection:one_step}
The argument above shows that the MLE achieves efficiency by exploiting the local quadratic LAN-structure of the log-likelihood around $\theta_0$. Once this structure is recognized, however, the full and typically nonconvex  optimization required to compute the MLE is no longer needed.

Replacing $\theta_0$ in \eqref{eq:likelihood_equation} by any sufficiently accurate \(\Theta\)-valued pilot estimator $\tilde\theta_T$ such that the quadratic approximation remains valid, the unique maximizer of the resulting quadratic in $\theta$ is the explicit Newton update
\begin{equation}
\label{eq:one_step_formula}
\hat\theta_T^{\mathrm{os}}
=
\tilde\theta_T
+
\bigl[\mathcal I(\tilde\theta_T;\hat\gamma_T)\bigr]^{-1}
\frac{U_T(\tilde\theta_T)}{T}
\end{equation}
if this lies in $\Theta$\footnote{Under the assumptions of \Cref{thm:onestep_efficient}, this constraint is satisfied with probability tending to one since \(\theta_0\in\operatorname{int}(\Theta)\) and \(\tilde\theta_T\xrightarrow{P}\theta_0\) in the setting of \Cref{thm:onestep_efficient}. } where $\hat\gamma_T$ is an estimator of the limiting gating frequencies $\gamma$ and $U_T(\tilde\theta_T)\coloneq \nabla \ell_T(\tilde\theta_T)$ is called the \textit{score} in statistics. Intuitively, this estimator should inherit the same efficiency as the MLE since it solves the same local quadratic problem. This general observation in the i.i.d.-setting is due to Le Cam, who introduced the  \emph{one-step estimator} as a tool for converting any $\sqrt T$-consistent pilot into an asymptotically efficient procedure~\cite{kraft1956remark, le1956asymptotic, le2000asymptotics, van2000asymptotic}. We apply this construction to the DED setting. In particular, we describe how to estimate $\gamma$ from the gating sequence, how to use this to compute the one-step update, and prove that the resulting estimator inherits the efficient $\Normal(0, \mathcal I_0^{-1})$ limit.

\paragraph{Computing the one-step update (OSE)}
In order to compute the one-step update in \eqref{eq:one_step_formula}, we need an estimate of the limiting gating frequencies \(\gamma(\theta_0)\) of our DED process.  These frequencies can be empirically estimated directly from the observed open-gate indicators.  For each phase \(r=0,\ldots,K-1\), define
\begin{equation}
\label{eq:empirical_gating_fractions}
\hat\gamma_{r,T}
=
\frac1L\sum_{\ell=0}^{L-1}G_{r+\ell K} \, ,
\end{equation}
where \(L = \lfloor T/K\rfloor\), and write \(\hat\gamma_T=(\hat\gamma_{0,T},\ldots,\hat\gamma_{K-1,T})\). By gating-frequency convergence, \(\hat\gamma_{r,T}\xrightarrow{P_{\theta_0,T}} \gamma_r(\theta_0)\). Thus the limiting information-rate matrix \(\mathcal I(\theta_0;\gamma(\theta_0))\) is naturally estimated by \(\mathcal I(\tilde\theta_T;\hat\gamma_T)\), as we show more formally in the proof later. The one-step estimator can then be computed through the single Newton step in \eqref{eq:one_step_formula}. Evaluating \(\mathcal I(\tilde\theta_T;\hat\gamma_T)\) uses the phasewise sums in \eqref{eq:Ir_def}--\eqref{eq:Irate_def} and costs \(O(Kd^2)\), while evaluating the score \(U_T(\tilde\theta_T)=\nabla\ell_T(\tilde\theta_T)\) by differentiating the phasewise likelihood \eqref{eq:phase_loglikelihood_intro} costs \(O(Kd)\). The final \(d\)-dimensional linear solve in \eqref{eq:one_step_formula} costs \(O(d^3)\), so the one-step correction has total cost \(O(Kd^2+d^3)\), or simply \(O(K)\) for fixed parameter dimension ($d=3$ for our lidar example, below).

\paragraph{Asymptotic efficiency}
The following theorem, proved in \Cref{section:proofs}, establishes efficiency of the estimator in \eqref{eq:one_step_formula}. 
\begin{theorem}
\label{thm:onestep_efficient}
Assume the hypotheses of \Cref{prop:lan} hold at $\theta_0$.
Let $\tilde\theta_T$ be any \(\Theta\)-valued preliminary estimator such that $\sqrt T (\tilde\theta_T-\theta_0)=O_{P_{\theta_0,T}}(1)$ and define $\hat\theta_T^{\mathrm{os}} $ as in \eqref{eq:one_step_formula} on the event that $\mathcal I \bigl(\tilde\theta_T;\hat\gamma_T\bigr)$ is invertible, and as $\tilde\theta_T$ otherwise.
Then we have asymptotic efficiency
$$
\sqrt T (\hat\theta_T^{\mathrm{os}}-\theta_0)
\Rightarrow \Normal(0,\mathcal I_0^{-1}) \, .
$$
Moreover, for every fixed $h\in\R^d$,
$$
\sqrt T\left(\hat\theta_T^{\mathrm{os}}-\theta_0-\frac{h}{\sqrt T}\right)
\Rightarrow \Normal(0,\mathcal I_0^{-1})
$$
under $P_{\theta_0+h/\sqrt T,T}$. Therefore $(\hat\theta_T^{\mathrm{os}})$ is regular at $\theta_0$ with efficient limit law.
\end{theorem}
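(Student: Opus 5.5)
The plan is the classical Le Cam argument, adapted to the DED score. I would show that the score process $U_T(\theta)=\nabla\ell_T(\theta)$ satisfies the uniform local expansion
\begin{equation*}
\frac{1}{\sqrt T}U_T(\theta_0+h/\sqrt T)=\frac{1}{\sqrt T}U_T(\theta_0)-\mathcal I_0 h+o_{P_{\theta_0,T}}(1)
\end{equation*}
for every bounded (possibly random) $h$. Then I would show that $\mathcal I(\tilde\theta_T;\hat\gamma_T)\xrightarrow{P}\mathcal I_0$, and substitute both facts into \eqref{eq:one_step_formula}.

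For the score, differentiating \eqref{eq:phase_loglikelihood_intro} and using $p_r=1-e^{-\lambda_r}$ gives
\begin{equation*}
U_T(\theta)=\sum_{r}\frac{S_r-N_rp_r(\theta)}{p_r(\theta)}\nabla\lambda_r(\theta).
\end{equation*}
This is a finite sum over the $K$ phases, with smooth coefficients $(S_r,N_r)\mapsto$ affine in the counts. A Taylor expansion in $\theta$ around $\theta_0$, using the $C^3$ regularity and the bounds $\lambda_{\min}\le\lambda_r\le\lambda_{\max}$ of \Cref{ass:lan}, gives
\begin{equation*}
U_T(\theta)-U_T(\theta_0)=\sum_r\bigl[(S_r-N_rp_r(\theta_0))\,A_r+N_r\,B_r\bigr](\theta-\theta_0)+R_T.
\end{equation*}
Here $A_r$ and $B_r$ are derivative matrices evaluated at $\theta_0$, and $|R_T|\lesssim (S_r+N_r)\|\theta-\theta_0\|^2=O(T)\cdot O_P(1/T)$, uniformly on $\|\theta-\theta_0\|\le M/\sqrt T$. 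The term $S_r-N_rp_r(\theta_0)$ is a martingale sum (the increments $G_t(Y_t-p_t)$ are martingale differences by causality \eqref{eq:causality}), so it is $O_P(\sqrt T)$, and its product with $\theta-\theta_0$ is $O_P(1)=o_P(\sqrt T)$. Computing $B_r=-\frac{1-p_r}{p_r}\nabla\lambda_r\nabla\lambda_r^\top$ and using $N_r/T\to\gamma_r/K$ from gating-frequency convergence \eqref{eq:phase_convergent} yields $-T\mathcal I_0(\theta-\theta_0)+o_P(\sqrt T)$. Because the bounds are uniform over the shrinking ball and $\sqrt T(\tilde\theta_T-\theta_0)=O_P(1)$, I may plug in the random point $\tilde\theta_T$ directly; no discretization of the pilot is needed, since the expansion is uniform over the ball.

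For the information, $\hat\gamma_T\to\gamma(\theta_0)$ in probability by \eqref{eq:phase_convergent}, and $\tilde\theta_T\to\theta_0$. Continuity of $\theta\mapsto\mathcal I_r(\theta)$ then gives $\mathcal I(\tilde\theta_T;\hat\gamma_T)\to\mathcal I_0$. Since $\mathcal I_0$ is positive definite, the inversion event has probability tending to one, and the same holds for $\hat\theta_T^{\mathrm{os}}\in\Theta$ because $\theta_0\in\operatorname{int}\Theta$. Substituting,
\begin{equation*}
\sqrt T(\hat\theta^{\mathrm{os}}_T-\theta_0)=\sqrt T(\tilde\theta_T-\theta_0)+(\mathcal I_0^{-1}+o_P(1))\bigl(\Delta_T-\mathcal I_0\sqrt T(\tilde\theta_T-\theta_0)+o_P(1)\bigr)=\mathcal I_0^{-1}\Delta_T+o_P(1),
\end{equation*}
noting that $U_T(\theta_0)/\sqrt T$ equals $\Delta_T(\theta_0)$ of \Cref{prop:lan}. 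Part~3 of \Cref{prop:lan} then gives the $\Normal(0,\mathcal I_0^{-1})$ limit.

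For regularity, contiguity (part~1 of \Cref{prop:lan}) transfers all $o_{P_{\theta_0}}(1)$ terms to $o_{P_{\theta_0+h/\sqrt T}}(1)$. The pilot stays $\sqrt T$-tight under the alternatives, since contiguity preserves $O_P(1)$. The LAN expansion gives the joint limit of $(\mathcal I_0^{-1}\Delta_T,\Lambda_T)$ as Gaussian with covariance $\mathcal I_0^{-1}\mathcal I_0h=h$. Le Cam's third lemma then shows that $\mathcal I_0^{-1}\Delta_T\Rightarrow\Normal(h,\mathcal I_0^{-1})$ under $P_{\theta_0+h/\sqrt T}$, which is the claimed regular efficient limit.

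I expect the main obstacle to be making the uniform score expansion rigorous under the non-i.i.d.\ dependence. The two points needing care are that the martingale CLT and the LLN for $N_r/T$ hold in the required joint form, and that the remainder control holds uniformly over the data-dependent point $\tilde\theta_T$. I would handle both through the explicit finite-dimensional phasewise form of $U_T$, which reduces everything to the $2K$ counts and deterministic smooth functions of $\theta$.
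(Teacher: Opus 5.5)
Your proposal is correct and follows essentially the same route as the paper. You linearize the score around $\theta_0$ with curvature $\mathcal I_0+o_P(1)$, show $\mathcal I(\tilde\theta_T;\hat\gamma_T)\to\mathcal I_0$ from continuity of $\mathcal I_r$ and gating-frequency convergence, reduce to $\mathcal I_0^{-1}\Delta_T(\theta_0)+o_P(1)$, and obtain regularity from contiguity and Le Cam's third lemma. The only cosmetic difference is that the paper gets the score linearization from the integral-form mean-value expansion with a Lipschitz Hessian (reusing the MLE proof), whereas your explicit second-order Taylor expansion of the phasewise score is an equivalent way of doing the same step.
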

This efficiency result shows that the one-step construction has the same asymptotically efficient estimation rate as the MLE. However, there are two other important reasons that make the one-step estimator attractive compared to the MLE.   First, computing a one-step update from a closed-form pilot can be cheaper than running a full nonconvex MLE optimization (which in practice often has to be done approximately for speedup~\cite{kitichotkul2025free}). Second, it gives us the freedom to choose the pilot for stability rather than for accuracy, and have the one-step update recover efficiency. This decoupling is particularly valuable because the MLE itself is known to be sensitive to model misspecification~\cite{kraft1956remark, le1956asymptotic, le1990maximum, le2000asymptotics}. We will illustrate these advantages empirically in the next section for a DED process arising from lidar, comparing the one-step estimator and MLE more closely.

\section{Application to lidar}
\label{section:lidar}
We illustrate the theory developed in the previous sections on the important application of single-photon lidar and demonstrate its validity empirically. In lidar, a scene is illuminated by a periodic train of short laser pulses, and the returning photons are detected by a single-photon detector subject to dead time. The arrival-time statistics of these photons encode the round-trip time-of-flight to the illuminated scene point and the local reflectivity. Importantly, the time-of-flight  can be converted to a range  if the speed of light $c$ is known in the optical medium.
Inferring the time-of-flight, local reflectivity, and sometimes background radiation from photon detection times is the central task in single-photon lidar imaging~\cite{kirmani2014first, buller2009single, rapp2019dead, kitichotkul2025free}, and the resulting estimates form the basis of three-dimensional reconstruction as used in downstream sensing applications.

\subsection{Statistical model}
Lidar is modeled by a DED process with the rate function $\lambda_r(\theta)$ typically taking the form
\begin{equation}
\label{eq:lidar_lambda_model}
\lambda_r(\theta)  =  a f_{\tau}(r) + b,
\qquad r = 0, \dots, K-1. 
\end{equation}
The parameter to be inferred is $\theta = (a, \tau, b) \in [0,\infty)\times [0,K)\times [0,\infty)$  and the pulse shape is modeled as
\begin{equation}
\label{eq:binned_template}
f_{\tau}(r)\coloneq \int_{\tau-r}^{\tau-r+1} f(x) dx,
\end{equation}
where $f:\mathbb{R}\to[0,\infty)$ is a \(K\)-periodic time-continuous pulse shape with normalization  $\int_0^K f(x) dx=1$. Here \(K\in\N\) is the number of discrete phase bins per laser period, whose physical duration is denoted by \(t_r\). The delay $\tau\in[0,K)$ is the round-trip time-of-flight in phase-bin units, corresponding to physical time \(\tau t_r/K\). $a\ge 0$ is the reflectivity, and $b\ge 0$ is a constant background rate due to ambient light and detector dark counts. In practice, the pulse template  $f$ can either be determined by measuring the pulse shape of the laser empirically in a calibration step or through approximation with a parametric model~\cite{gupta2019asynchronous, gupta2019photon,kitichotkul2025free,po2022adaptive, kitichotkul2023role, shin2015photon, heide2018sub,wang2026synchronous,mccarthy2013kilometer}.

Importantly, although the underlying physical process is continuous in time, the discrete index $r$ in \eqref{eq:lidar_lambda_model} reflects the histogramming that is typically performed in practice with single-photon detectors for bandwidth reasons~\cite{zhang201830, hutchings2019reconfigurable, zhang2021240, kumagai20217, kitichotkul2025free,wang2026synchronous}, see also \Cref{subsubsection:implicit_modeling_assumptions}.  The statistical problem of estimating $\theta$ from DED process data $\{(G_t, Y_t)\}_{t = 0}^{T-1}$, therefore, naturally fits into our theory established in  \Cref{section:asymptotic_theory} and \Cref{section:efficiency}.

\subsection{Dead-Time-Free and Unbiased Bounds for the Lidar Estimation Problem}
We briefly summarize how existing classical theory that neglects dead-time effects~\cite{berk1972consistency, efron2022exponential, van2000asymptotic} has been applied to lower bounds for estimation in the lidar literature. Most of these results make the approximation that dead-time effects are negligible. Concretely, such bounds have been derived for unbiased estimators of range in a continuous-time model~\cite{kitichotkul2023role} and for unbiased estimators of a constant photon rate in discrete-time models~\cite{pediredla2018signal, gupta2019asynchronous, lu2013adaptive}. Besides not accounting for dead time, these results do not characterize fundamental lower bounds for possibly biased estimators as would be typical in practice. 

To our knowledge, only two prior works address estimation error under dead time: Daniel and Fessler~\cite{daniel2000mean}, who derive the asymptotic mean and variance of the detected photon count under a constant photon rate and free-running acquisition, and Wu et al.~\cite{wu2025performance}, who derive the same Fisher information as ours in the lidar-setting but only for the free-running gating scheme and only as a lower bound for unbiased estimators.

\subsection{Theory for asymptotic estimation in lidar}
We establish a fundamental estimation lower bound for the lidar problem \eqref{eq:lidar_lambda_model} that accounts for dead time and handles non-constant photon rates. We then verify the conditions of \Cref{ass:lan}, allowing us to apply the theory from  \Cref{section:asymptotic_theory,section:efficiency}.   We make the following assumptions in addition to the model definition in \eqref{eq:binned_template} on the true underlying data-generating parameter $\theta_0$ in our experiment.
\begin{assumption}
    \label{ass:lidar}
    \begin{enumerate}
        \item The pulse \(f:\R\rightarrow [0, \infty)\) is \(C^3\).
        \item Our parameter candidate set $\Theta\subset\R^3$ satisfies \(\theta_0\in\operatorname{int}(\Theta)\) and takes the form
$$
\Theta
=[a_-,a_+]\times [\tau_-,\tau_+]\times [b_-,b_+] \, ,
$$
for $0<a_-<a_+<\infty$, $0\leq\tau_-<\tau_+<K$, and $0<b_-<b_+<\infty.$
\item The single-photon gating rule $\Phi_t$ of the DED process as in \eqref{eq:causality} is causal, dead-time--constrained, and gating-frequency convergent with limiting gating frequencies \(\gamma(\theta_0) \in [0,1]^K\).
\item  For every $\theta \in \Theta$, the active phases identify the true lidar parameter as stated in \Cref{ass:lan} (3).
\item Writing $\theta_0=(a_0,\tau_0,b_0)$ and defining the gradient of $\lambda_r(\theta)$ as $v_r(\theta_0)\coloneq  \begin{pmatrix}
f_{\tau_0}(r)\\
a_0\bigl(f(\tau_0-r+1)-f(\tau_0-r)\bigr)\\
1
\end{pmatrix}$, the  set of vectors $\{v_r(\theta_0): \gamma_r(\theta_0)>0\}$ spans $\R^3$.
    \end{enumerate}
\end{assumption}
We briefly discuss this assumption. As pointed out in \Cref{section:asymptotic_theory}, the $C^3$ assumption is not strictly necessary, but is easily satisfied by practically used Gaussian pulse models. The upper bounds on \(a\) and \(b\) reflect finite signal and background levels, and the lower bounds \(a_->0\) and \(b_->0\) exclude degenerate regimes where $\tau$ is not identifiable and where there is no background radiation. The compactness assumption on \(\tau\) is purely technical. Since \(\tau\) is periodic, the natural parameter space is the torus \(\mathbb{R}/K\mathbb{Z}\), but we use a compact Euclidean interval to avoid formulating the LAN theory of \Cref{section:asymptotic_theory} on manifolds. The proofs use compactness only for uniform bounds on continuous functions, which hold equally on the torus, so all results apply with the torus \(\R/K\Z\)   in place of \([\tau_-,\tau_+]\). For the free-running and synchronous acquisition rules considered in \Cref{subsection:observation_model}, condition (3) follows from \Cref{app:gating_frequency_convergence_acquisition}. Identifiability in (4) and (5) is necessary for the statistical problem to be well-posed and must be checked for the pulse model used. One can easily verify the following proposition:
\begin{proposition}
A lidar model satisfying \Cref{ass:lidar} is a DED process satisfying \Cref{ass:lan}.
\end{proposition}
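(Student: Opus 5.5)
We verify the five conditions of \Cref{ass:lan} one at a time for the rate model \eqref{eq:lidar_lambda_model}. The DED structure itself (causality, dead-time constraint, gating-frequency convergence) is exactly \Cref{ass:lidar}~(3) together with the observation relations. Moreover, \(\theta_0\in\operatorname{int}(\Theta)\) and compactness of \(\Theta\) (condition (4)) follow from \Cref{ass:lidar}~(2), since \(\Theta\) is a product of closed bounded intervals.

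\emph{Smoothness, condition (2).} Since \(f\) is \(C^3\), the fundamental theorem of calculus gives \(\partial_\tau f_\tau(r)=f(\tau-r+1)-f(\tau-r)\). Hence \((\tau,r)\mapsto f_\tau(r)\) is \(C^3\) in \(\tau\) on all of \(\R\); in fact it is \(C^4\). The map \(\theta=(a,\tau,b)\mapsto a f_\tau(r)+b\) is a polynomial in \(a,b\) times a \(C^3\) function of \(\tau\). It is therefore \(C^3\) on all of \(\R^3\), which is an open set containing \(\Theta\). The same computation gives \(\nabla\lambda_r(\theta_0)=v_r(\theta_0)\) as written in \Cref{ass:lidar}~(5).

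\emph{Bounds, condition (1).} Since \(f\ge0\), we have \(f_\tau(r)\ge0\), so \(\lambda_r(\theta)\ge b\ge b_->0\); we take \(\lambda_{\min}=b_-\). For the upper bound, \(f\) is continuous and \(K\)-periodic, hence bounded by some \(\|f\|_\infty<\infty\). Then \(f_\tau(r)\le\|f\|_\infty\); alternatively \(f_\tau(r)\le\int_0^K f=1\) when \(K\ge1\), using nonnegativity and periodicity. This gives \(\lambda_r(\theta)\le a_+\|f\|_\infty+b_+\eqqcolon\lambda_{\max}\). Identifiability, condition (3), is assumed verbatim in \Cref{ass:lidar}~(4).

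\emph{Positive definiteness, condition (5).} This is the only step with content. By condition (1), \(p_r(\theta_0)\in(0,1)\) strictly, so each weight \(w_r\coloneq\gamma_r(1-p_r)/(Kp_r)\) is strictly positive whenever \(\gamma_r>0\). Then \(\mathcal I_0=\sum_r w_r v_rv_r^\top\) is positive semidefinite. For any \(u\ne0\),
\[
u^\top\mathcal I_0u=\sum_{r:\gamma_r>0}w_r(v_r^\top u)^2 .
\]
This quadratic form vanishes iff \(u\perp v_r\) for every \(r\) with \(\gamma_r>0\). That is impossible because these \(v_r\) span \(\R^3\) by \Cref{ass:lidar}~(5), so \(\mathcal I_0\) is positive definite.

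I expect no real obstacle. The only care points are:
\begin{itemize}
\item justifying the derivative of the binned template by differentiating under the integral limits;
\item the boundedness of \(f\) via continuity and periodicity;
\item noting that strict positivity of \(p_r\) and \(1-p_r\) is needed so that zero-information phases cannot spoil the spanning argument.
\end{itemize}
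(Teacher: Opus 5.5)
Your proposal is correct and is the direct condition-by-condition check the paper has in mind when it says the proposition ``can easily be verified''; the paper gives no separate written proof. You correctly cover the DED structure from \Cref{ass:lidar}~(3), compactness and interiority from the box $\Theta$, smoothness via $\partial_\tau f_\tau(r)=f(\tau-r+1)-f(\tau-r)$, the bounds $b_-\le\lambda_r(\theta)\le a_++b_+$, identifiability by assumption, and positive definiteness of $\mathcal I_0$ from the strictly positive weights on the phases with $\gamma_r>0$ combined with the spanning condition.
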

In particular, the correct notion of Fisher information is given by
$$
\mathcal I(\theta_0;\gamma)
=
\frac1{K}
\sum_{r=0}^{K-1}
\gamma_r
\frac{\exp(-\lambda_r(\theta_0))}
{1-\exp(-\lambda_r(\theta_0))}
 v_r(\theta_0)v_r(\theta_0)^\top
$$
evaluated at the correct gating frequencies, i.e.~$\mathcal I_0 =\mathcal I(\theta_0;\gamma(\theta_0)).$ We can directly apply \Cref{thm:hajek_convolution}, \Cref{thm:local_asymptotic_minimax}, \Cref{thm:mle_efficient}, and \Cref{thm:onestep_efficient} in order to conclude the following result.
\begin{corollary}
\label{cor:lidar_bounds_and_efficiency}
Suppose \Cref{ass:lidar} holds. Then the statements \Cref{thm:hajek_convolution}, \Cref{thm:local_asymptotic_minimax}, \Cref{thm:mle_efficient}, and \Cref{thm:onestep_efficient} hold for the lidar estimation problem, giving the efficient limiting distribution \(\mathcal N(0,\mathcal I_0^{-1})\) and the corresponding Fisher-information lower bounds for regular and local-minimax estimation.
\end{corollary}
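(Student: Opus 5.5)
The corollary follows once we know the lidar model is a DED process satisfying \Cref{ass:lan} at $\theta_0$. Then \Cref{prop:lan} applies, and with it \Cref{thm:hajek_convolution}, \Cref{thm:local_asymptotic_minimax}, \Cref{thm:mle_efficient} and \Cref{thm:onestep_efficient}, since each of these theorems assumes only the hypotheses of \Cref{prop:lan}. So the plan is to check conditions (1)--(5) of \Cref{ass:lan} from \Cref{ass:lidar} and the model \eqref{eq:lidar_lambda_model}--\eqref{eq:binned_template}. Once that is done, the claimed Fisher information formula is just \eqref{eq:Ir_def}--\eqref{eq:Irate_def} with $p_r=1-e^{-\lambda_r}$ and $\nabla\lambda_r=v_r$.

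\textbf{DED structure and conditions (3), (4).} Causality, the dead-time constraint and gating-frequency convergence are assumed directly in \Cref{ass:lidar}(3). The observation relations hold by construction. Hence the process is DED in the sense of \Cref{def:DED}. Condition (4) holds because $\Theta$ is a product of compact intervals. Condition (3) is exactly \Cref{ass:lidar}(4).

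\textbf{Conditions (1) and (2).} For (1), note that $f\ge 0$, $f$ is $K$-periodic with unit mass, and $f$ is continuous. Hence $0\le f_\tau(r)\le \max_x f(x)<\infty$. It follows that
\[
b_-\le \lambda_r(\theta)\le a_+\max f + b_+ .
\]
So $\lambda_{\min}=b_->0$ works, together with a finite $\lambda_{\max}$.

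For (2), write $f_\tau(r)=F(\tau-r+1)-F(\tau-r)$, where $F$ is an antiderivative of $f$. Since $f\in C^3$, $F\in C^4$, so $\tau\mapsto f_\tau(r)$ is $C^4$. The map $(a,\tau,b)\mapsto a f_\tau(r)+b$ is then $C^3$ on all of $\R^3$, which is an open set containing $\Theta$. Its gradient is $\bigl(f_\tau(r),\,a(f(\tau-r+1)-f(\tau-r)),\,1\bigr)^\top=v_r(\theta)$, matching \Cref{ass:lidar}(5).

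\textbf{Condition (5).} This is the only step with any content. We have
\[
\mathcal I_0=\frac1K\sum_r \gamma_r c_r v_rv_r^\top, \qquad c_r=\frac{e^{-\lambda_r}}{1-e^{-\lambda_r}}.
\]
By (1), each $c_r$ lies in $[c_{\min},\infty)$ with $c_{\min}>0$. For any $u\in\R^3$,
\[
u^\top\mathcal I_0u=\frac1K\sum_r \gamma_r c_r (u^\top v_r)^2\ge 0 .
\]
Equality forces $u^\top v_r=0$ for every $r$ with $\gamma_r>0$. By the spanning hypothesis \Cref{ass:lidar}(5), this gives $u=0$, so $\mathcal I_0$ is positive definite. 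Also $\theta_0\in\operatorname{int}(\Theta)$ is assumed.

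With all five conditions verified, the four cited theorems apply verbatim and give the corollary. I expect no real obstacle here. The one point needing care is the smoothness of $f_\tau$ on an open neighbourhood of $\Theta$, and the representation of $f_\tau$ via the antiderivative $F$ settles it.
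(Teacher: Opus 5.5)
Your proposal is correct and follows the same route as the paper. The paper simply asserts that a lidar model satisfying \Cref{ass:lidar} is a DED process satisfying \Cref{ass:lan}, then applies the four theorems directly. Your verification of conditions (1)--(5) supplies exactly the details the paper leaves as ``easily verified,'' in particular the antiderivative argument for the $C^3$ extension and the spanning argument for positive definiteness of $\mathcal I_0$.
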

Having proved asymptotic lower bounds and the efficiency of the MLE and one-step estimators, we now discuss practical estimators for lidar in greater detail, both to show how they fit into this picture and to empirically validate the asymptotics at practically meaningful sample sizes.

\subsection{Maximum likelihood and one-step estimators for lidar}
\label{subsection:lidar_estimators}
In practice, most of the lidar estimation literature is based, either explicitly or implicitly, on maximum likelihood estimation~\cite{rapp2017few, rapp2021high, shin2015photon, kitichotkul2023role,
kitichotkul2025free, gupta2019asynchronous, coates1968correction,
wang2026synchronous, liu2019pile}.   In our notation, the likelihood is the
phasewise likelihood in \Cref{eq:mle}, with
$$
p_r(\theta)=1-\exp(-\lambda_r(\theta)) \, ,
\qquad
\lambda_r(\theta)=a f_{\tau}(r)+b \, .
$$
The corresponding maximum likelihood estimator is
\begin{equation}
\label{eq:lidar_mle_definition}
\hat\theta_T^{\mathrm{MLE}}
\in
\arg\max_{\theta\in\Theta}
\ell_T(\theta).
\end{equation}
Computing this estimator exactly can be expensive, since the likelihood is generally nonconvex. While some practical methods propose direct computation~\cite{kirchhoff2025development, wang2026synchronous}, many practical methods in the literature replace the full joint optimization by faster approximations. One common strategy is to decouple the optimization over \(\tau\) and \((a,b)\), for example by using a bilevel optimization procedure~\cite{shin2015photon, kitichotkul2023role, kitichotkul2025free}.  Another strategy is to first estimate the underlying rate profile \(g\) by applying the closed-form maximum-likelihood estimator for the underlying intensity, commonly called the Coates-type correction, and then estimate the physical parameters \(\theta\) from this corrected profile rather than by directly maximizing the likelihood over \(\theta\) itself~\cite{coates1968correction, gupta2019asynchronous, liu2019pile, wang2026synchronous}.

In the experiments below, ``MLE initialized from a pilot'' refers to the output of a local numerical likelihood optimizer initialized at that pilot, not necessarily to the exact global maximizer in \eqref{eq:mle}. Strictly speaking,  the theoretical efficiency result only applies to the latter.

Apart from likelihood-based and likelihood-approximation methods, the only alternative we are aware of that is effective in settings where dead time plays a significant role is based on the method of generalized estimating equations~\cite{rapp2019dead}. As this approach has been reported to be more computationally demanding than maximum-likelihood-based  methods without yielding a clear performance advantage
\cite{kitichotkul2025free}, we focus here only on maximum likelihood and on new estimators motivated by the one-step construction.

A key observation in \Cref{section:efficiency} was that the asymptotic efficiency of the MLE comes from the local quadratic structure of the log-likelihood, not from global likelihood maximization itself. The one-step estimator exploits this structure directly by starting from a pilot estimator that is close enough to the true parameter such that a single Newton correction step produces an estimator with the same asymptotic efficiency as the MLE.

Based on this, our theory suggests a different estimator-design principle for the lidar example. Rather than constructing increasingly accurate approximations to the global MLE through iterative optimization, one can construct simple pilot estimators whose only task is to reach the correct local region of the parameter space, and then convert them into efficient estimators by a one-step update. This separates the global localization problem from the local efficiency problem and allows the pilot estimator to be designed for speed, numerical stability, or robustness, rather than as a direct approximation of the MLE. We illustrate this principle by introducing two  new pilot estimators.

Both pilot estimators first convert the observed phase counts \((N_r(T),S_r(T))\) as defined in \Cref{section:model_formulation} into estimated photon rates through
\begin{equation}
\label{eq:pr_est}
\hat p_r
=
\left(
\frac{S_r(T)+1/2}{N_r(T)+1}
\right), \qquad \hat\lambda_r
=
-\log(1-\hat p_r).
\end{equation}
The \(1/2\)- and \(1\)-offsets regularize the empirical detection frequency and avoid the boundary cases \(S_r(T)=0\) and \(S_r(T)=N_r(T)\).\footnote{Equivalently, this is the posterior mean under the Jeffreys prior \(p_r\sim\BetaDist(1/2,1/2)\) for the binomial detection probability.}
\paragraph{Fourier inversion estimator}
As a first example, we use a closed-form method-of-moments pilot based on the zeroth and first Fourier modes of the log-transformed rate profile. Let
\[
\hat M_k
=
\frac1K
\sum_{r=0}^{K-1}
\hat\lambda_r \exp(2\pi i k r/K),
\qquad k=0,1.
\]
With \(d_1\) denoting the first binned Fourier coefficient of the pulse template, as derived in \Cref{app:lidar_pilot_derivations}, the Fourier inversion (FI) estimator is \(\hat\theta_T^{\mathrm{FI}}\coloneq  (\hat a_T^{\mathrm{FI}}, \hat\tau_T^{\mathrm{FI}},\hat b_T^{\mathrm{FI}})\), where
\begin{subequations}
\label{eq:lidar_fourier_estimator}
\begin{align*}
\hat a_T^{\mathrm{FI}}
&=
\frac{|\hat M_1|}{|d_1|},
\qquad
\hat\tau_T^{\mathrm{FI}}
=
\frac{K}{2\pi}
\arg\left(\frac{\hat M_1}{d_1}\right),\\
\hat b_T^{\mathrm{FI}}
&=
\hat M_0-\frac{\hat a_T^{\mathrm{FI}}}{K}.
\end{align*}
\end{subequations}
Here \(\arg\) denotes the argument with values in \([0,2\pi)\), so that \(\hat\tau_T^{\mathrm{FI}}\in[0,K)\).
This estimator is  fully explicit and requires no optimization, but it relies on a single harmonic and is therefore sensitive to localized fluctuations or model deviations.

\paragraph{Robust matched estimator}
The robust matched estimator uses the same rate estimates but replaces single-harmonic inversion by median-centered template matching. First estimate the background by the phasewise median
\[
\hat b_T^{(0)}
=
\operatorname{median}_{r}\hat\lambda_r,
\qquad
J_r=\hat\lambda_r-\hat b_T^{(0)}.
\]
Then estimate the delay by maximizing the circular matched-filter score
\begin{equation}
\label{eq:robust_range}
\hat\tau_T^{\mathrm{rob}} \in \arg\max_{j=0,\ldots,K-1} C_j \, , \; \text{where }
C_j=\sum_{r=0}^{K-1}J_r f_j(r).
\end{equation}
Given this delay estimate, project the centered profile onto the shifted template and re-estimate the background from the median residual:
\[
\begin{aligned}
\hat a_T^{\mathrm{rob}}
&=
\left(
\frac{
\sum_{r=0}^{K-1} J_r f_{\hat\tau_T^{\mathrm{rob}}}(r)
}{
\sum_{r=0}^{K-1} f_{\hat\tau_T^{\mathrm{rob}}}(r)^2
}
\right)_+,
\\
\hat b_T^{\mathrm{rob}}
&=
\operatorname{median}_{r}
\left(
\hat\lambda_r-\hat a_T^{\mathrm{rob}}
f_{\hat\tau_T^{\mathrm{rob}}}(r)
\right),
\end{aligned}
\]
where $(x)_+\coloneq \max(x,0).$ The full estimator is
\[
\hat\theta_T^{\mathrm{rob}}
=
\left(
\hat a_T^{\mathrm{rob}},
\hat\tau_T^{\mathrm{rob}},
\hat b_T^{\mathrm{rob}}
\right).
\]
Its main feature is that it uses the full pulse shape rather than only the first Fourier mode, while the two median steps make it less sensitive to localized finite-sample fluctuations and mild background mismatch. A short derivation from robust least squares and matched filtering is given in \Cref{app:lidar_pilot_derivations}. Similar estimators based on photon arrival quantiles, designed to perform reliably in small-sample regimes and suppress noise-induced detections, have also been proposed in~\citeasnoun{rapp2017few}. The correlation-based estimation of \(\tau\) used here is likewise used often in lidar and frequently incorporated into iterative (approximate) ML algorithms~\cite{shin2015photon, kitichotkul2023role, kitichotkul2025free}.  Moreover, the estimator described here can be computed in \(O(K\log K)\) time.  The estimates \(\hat p\), \(\hat\lambda\), \(\hat a_T^{\mathrm{rob}}\), and \(\hat b_T^{\mathrm{rob}}\) require only phasewise operations or sums, and hence are \(O(K)\), the median steps are also \(O(K)\)~\cite{cormen2022introduction}, and the  correlation \(C_j\)  can be computed with an FFT-based algorithm in \(O(K\log K)\) time~\cite{cormen2022introduction}.  As explained in~\Cref{subsection:one_step}, the subsequent one-step estimate is also $O(K)$.

\subsection{Simulation study}
\label{section:simulation}
We show how our asymptotic theory manifests in finite samples. In particular, we verify that the MLE and the one-step estimators approach the Fisher-information lower bound at practical horizon sizes, compare their numerical behavior at practically relevant finite horizons, and test their performance against model misspecification.

The pulse shape we will use is a wrapped Gaussian, letting $\phi_\sigma(z)
=
\frac{1}{\sqrt{2\pi}\sigma}
\exp \left(-\frac{z^2}{2\sigma^2}\right),$ and defining
\begin{equation}
\label{eq:wrapped_gaussian}
f(x)=\sum_{\ell\in\mathbb Z}\phi_\sigma(x+\ell K) \, .
\end{equation}
This wrapped-Gaussian pulse model is standard in single-photon lidar simulations~\cite{rapp2017few,kitichotkul2023role,kitichotkul2025free}. The binned template \(f_\tau(r)\) is then defined as in \eqref{eq:binned_template}.

We use a physical laser period of \(t_r=100 \mathrm{ns}\), divided into \(K=1000\) bins, giving bin width \(\Delta=t_r/K=0.1 \mathrm{ns}\). The pulse width is \(\sigma=10\) bins, i.e.~\(\sigma\Delta=1 \mathrm{ns}\). The true parameter to recover is $\theta_0=(a_0,\tau_0,b_0)=(1, 370.4, 0.003) ,$ so the expected signal return is \(a_0=1\) photon per period and the expected background is \(K b_0=3\) photons per period. The dead time is \(D=500\) bins, corresponding to a physical dead time of  \(t_d = D\Delta=50 \mathrm{ns}\). Under the free-running acquisition scheme, the detector advances by one bin after a miss and by \(D+1\) bins after a detection. These parameter values were chosen as a representative case of the simulation settings commonly used in the single-photon lidar literature~\cite{arvani2018direct, gupta2019asynchronous,rapp2019dead, incoronato2021statistical, kitichotkul2025free}.

We first validate the asymptotic efficiency of the MLE, in particular subject to different initializations. 
Since our theory characterizes the efficiency of the MLE, we test it by directly optimizing the joint likelihood \eqref{eq:lidar_mle_definition}, as in \citeasnoun{wang2026synchronous} and \citeasnoun{kirchhoff2025development}. The optimization is performed in latent variables \(\eta=(\eta_a,\eta_{\tau},\eta_b)\) using $a=\exp(\eta_a)$, $\tau=\frac{K}{1+\exp(-\eta_{\tau})}$, and  $b=\exp(\eta_b).$ This parametrization enforces positivity of \(a\) and \(b\), and keeps \(\tau\in(0,K)\). Since the true value \(\tau_0=370.4\) is far from the period boundary, the use of this Euclidean representative of the circular delay does not create a boundary issue in the present experiment. We use the NLopt implementation of L-BFGS~\cite{NLopt, LBFGS}, with maximum \(1000\) objective evaluations, relative parameter tolerance \(10^{-8}\), and relative objective tolerance \(10^{-10}\). The optimization is initialized either from the Fourier inversion estimator \(\hat\theta_T^{\mathrm{FI}}\) or from the robust matched-filter  estimator \(\hat\theta_T^{\mathrm{rob}}\). For an estimator \(\hat\theta_T=(\hat a_T,\hat\tau_T,\hat b_T)\), we report the relative mean squared error (MSE) through
\begin{align*}
\mathrm{Relative \ MSE}(a,\tau,b)
&=
\left(\frac{\hat a_T-a_0}{a_0}\right)^2\\
 &+ \left(\frac{d_K(\hat\tau_T,\tau_0)}{K}\right)^2
 + \left(\frac{\hat b_T-b_0}{b_0}\right)^2,
\end{align*}
where \(d_K\) denotes circular distance on the period-\(K\) circle. The plotted relative MSE is the Monte-Carlo average of this quantity. In \Cref{fig:nominal}, we plot \(T\Delta\) times this relative MSE, together with the Fisher-information lower bound
\begin{equation}
\label{eq:fisher_bound_compute}
\Tr \left(W\mathcal I_0^{-1}\right) \, ,
\qquad
W=\Delta \Diag(a_0^{-2},K^{-2},b_0^{-2}) \, .
\end{equation}
For comparison, we also plot our Fisher information evaluated at $t_d= 0$ ($\alpha_r \equiv 1$ in \eqref{eq:Irate_def}) in \Cref{fig:nominal}. This recovers the dead-time-free Fisher bound used in prior work~\cite{kitichotkul2023role, pediredla2018signal, gupta2019asynchronous, lu2013adaptive}, where Fisher-information expressions and lower bounds for unbiased estimators were derived only under the assumption of negligible dead time. For the present high-flux parameters, ignoring dead time lowers the bound by about \(75.22\%\), substantially underestimating the minimum attainable error. 

This discrepancy becomes small only in a much lower dead time or lower-flux regime. For example, keeping all other parameters fixed, we find numerically that making the dead-time-aware bound exceed the dead-time-free bound by less than \(10\%\) requires a dead time of less than approximately \(0.46\,\mathrm{ns}\) at the nominal flux. Alternatively, if the dead time is kept at its original value of \(50\,\mathrm{ns}\), the entire incident intensity must be scaled by about \(0.05\), replacing the nominal pulse and background parameters \((a_0,b_0)\) by \((0.05 a_0, 0.05 b_0)\), to obtain the same \(10\%\) gap.

\begin{figure}[!tbp]
\centering
\includegraphics[width=\mywidth]{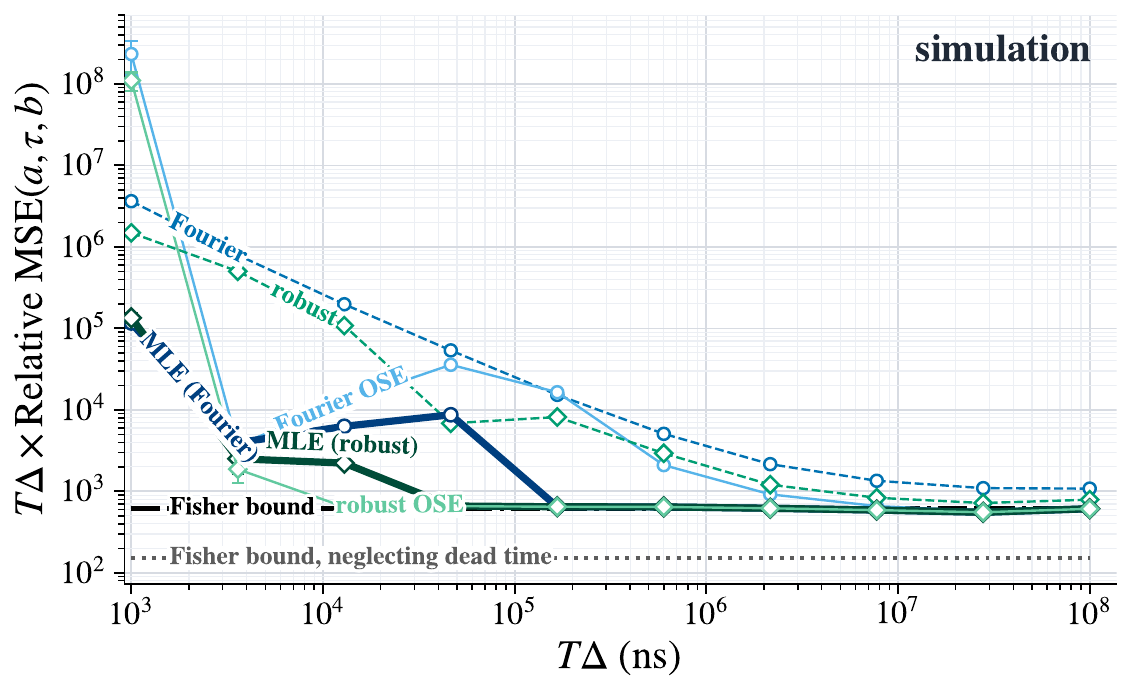}
\caption{
Mean-square error (MSE) for pulse parameters $(a,\tau,b)$  in single-photon lidar simulation scaled by observation time $T\Delta$, comparing various estimators: Fourier and robust pilot estimators (dashed lines), maximum-likelihood estimators (MLE) starting from pilots (thick solid lines), and one-step estimators (OSE) starting from pilots (thin solid lines).  Both MLE and OSE approach the Fisher-information lower bound (black line).  If dead time is neglected, the Fisher bound is lower (gray line).  Error bars (mostly too small to see) are standard error across $10^3$ repetitions.}
\label{fig:nominal}
\end{figure}

\begin{figure}[!tb]
\centering
\includegraphics[width=\mywidth]{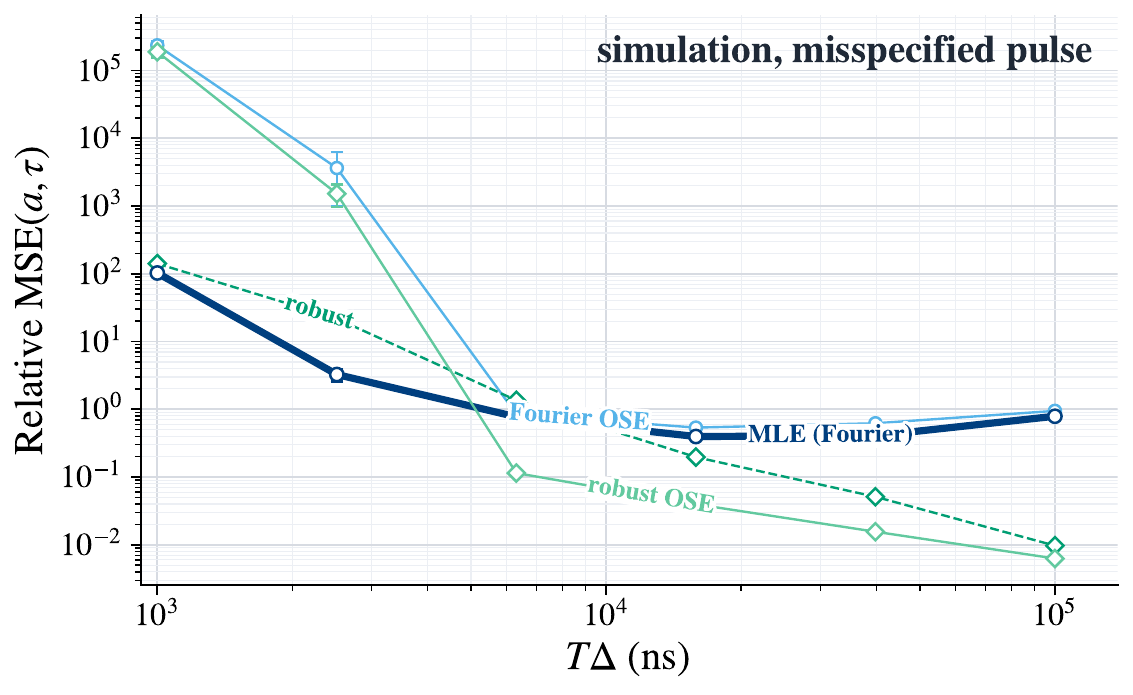}
\caption{Relative MSE for pulse parameters $(a,\tau)$ in the misspecified single-photon lidar simulation with an added localized background bump in \eqref{eq:lambda_mis}, while the fitted estimators still assume the constant-background model from \eqref{eq:lidar_lambda_model}. The Fourier-based estimators are strongly affected by the unmodeled bump, and the corresponding MLE/OSE can converge to poor local optima. By contrast, the robust pilot remains stable, and its one-step update gives the lowest error at moderate and large observation times. Error bars are standard error across \(10^3\) repetitions.}
\label{fig:robust}
\end{figure}

\begin{figure*}[!t]
\centering
\includegraphics[width=\textwidth]{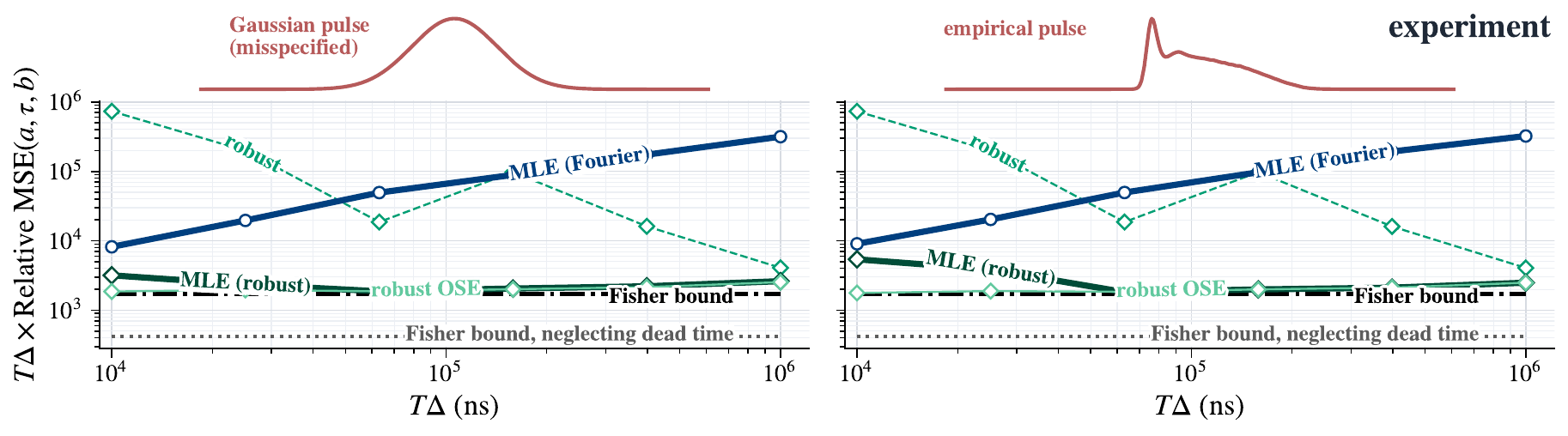}
\caption{Real single-photon lidar experiment~\cite{rapp2021high} comparing estimation under a misspecified wrapped-Gaussian pulse model (left) and an experimentally measured pulse model (right). The plotted quantity is \(T\Delta\)-scaled relative MSE for \((a,\tau,b)\). The likelihood optimizer initialized from the Fourier estimator fails to reach the efficient regime, whereas the robust-initialized likelihood optimizer and robust one-step estimator yield essentially the same result and remain close to the Fisher-information bound. If dead time is neglected, the Fisher bound is lower. The empirical pulse gives only modest improvement for the robust one-step estimator, indicating limited sensitivity to pulse-shape mismatch. The remaining gap to the bound therefore likely reflects residual experimental nonidealities (e.g., inhomogeneous background, timing jitter, detector effects) not captured by the idealized DED model. The MLE actually performs better under the misspecified wrapped-Gaussian pulse model at short acquisition times, but this difference disappears at longer horizons. Error bars are standard error across \(999\) blocks.}

\label{fig:real_data_highfluxspl}
\end{figure*}

The large-horizon behavior in \Cref{fig:nominal} agrees with the theory at practical time scales: by about \(T\Delta=10^5 \mathrm{ns}\), both MLE initializations approach the Fisher-information lower bound.  At intermediate horizons, however, the initialization matters substantially. For example, around \(T\Delta=5\times 10^4 \mathrm{ns}\) (which is a very relevant timescale from an applications perspective), the likelihood optimizer initialized from the Fourier-moment pilot is still far from the lower bound while  the robust initialization is already very close. This illustrates a practical weakness of the MLE: although it is asymptotically efficient as a statistical procedure, computing it requires a nonconvex numerical optimization, and poor initialization can lead to a bad local optimum, in particular at small sample sizes.

As explained in \Cref{subsection:one_step}, the one-step estimator avoids a global nonconvex likelihood optimization while retaining asymptotic efficiency. We now test this idea empirically in the same simulation setting as above, using the Fourier-inversion estimator and the robust matched-filter estimator from \Cref{subsection:lidar_estimators} as pilots $\tilde\theta_T$, followed by the one-step update  
$$
\hat\theta_T^{\mathrm{os}}
=
\tilde\theta_T
+
\bigl[\mathcal I(\tilde\theta_T;\hat\gamma_T)\bigr]^{-1}
\frac{U_T(\tilde\theta_T)}{T}
$$
from ~\eqref{eq:one_step_formula} with score $U_T(\tilde\theta_T)= \nabla \ell_T(\tilde\theta_T)$, log-likelihood $\ell_T$ as defined in \eqref{eq:mle} and \eqref{eq:phase_loglikelihood_intro}, Fisher information $\mathcal I(\tilde\theta_T;\hat\gamma_T)$ as in \eqref{eq:Irate_def}, and empirical gating frequencies $\hat \gamma_T$ as in \eqref{eq:empirical_gating_fractions}.

In \Cref{fig:nominal}, the empirical behavior is consistent with the $\sqrt{T}$-consistent pilot assumption of~\Cref{thm:onestep_efficient}, which we did not prove mathematically for either pilot. More importantly, the one-step updates recover the Fisher-information bound as predicted by the theory, yielding a substantial improvement
over the uncorrected pilot. The Fourier-inversion one-step becomes efficient very late around \(T\Delta=10^7 \mathrm{ns}\).  The robust matched-filter one-step, on the other hand, reaches the efficient plateau very quickly at about \(T\Delta=10^4 \mathrm{ns}\), which is faster than \textit{any} of the MLEs in \Cref{fig:nominal} without requiring the nonconvex optimization. This supports the practical role of the one-step estimators with good pilot estimators as an alternative to the MLE.

We next test robustness of these estimators  by  introducing a narrow localized background bump to the data-generating photon rate of the previous experiment
\begin{equation}
\label{eq:lambda_mis}
\lambda_r^{\mathrm{mis}}
=
a_0 f_{\tau_0}(r)+b_0+h u_{\tau_b}(r),
\end{equation}
where \(u_{\tau_b}(r)\) is the binned wrapped-Gaussian bump constructed from the Gaussian pulse in \eqref{eq:wrapped_gaussian} analogously to \(f_\tau(r)\), with parameters \(\sigma_b=1\) and \(\tau_b=70\), normalized to have unit peak, and we use $h = 0.15$.  Physically, this could correspond to an unmodeled reflection or inhomogeneous background counts. The fitted estimators, however, still use the previous model from equation \eqref{eq:lidar_lambda_model}. In \Cref{fig:robust}, we report the relative MSE only for the \((a,\tau)\)-coordinates, since under this misspecified data-generating model it is ambiguous what the correct constant-background parameter $b_0$ would be. 
The MLE initialized from the Fourier-inversion one-step estimate is strongly affected by the unmodeled bump as it converges to bad local minima in the misspecified setting.  By contrast, the robust matched-filter pilot remains stable because its median background estimate and template-matching step are less sensitive to localized deviations from the constant-background model. The one-step correction improves the robust pilot. This demonstrates our proposed  principle where estimators can be designed with considerations such as robustness in mind, without approximating the MLE, and then refined via a one-step update to convert a good global estimate into one with higher local accuracy and efficiency in the well-specified setting. We provide an additional validation using a real-world dataset to further support these conclusions empirically.

\subsection{Application to real data }
\label{sec:real_data_experiment}
We test the discussed estimators on real single-photon lidar data used in
\citeasnoun{rapp2021high} and posted online~\cite{highfluxspl_repo}. We refer the reader to that publication for a complete description of the experimental setup and summarize here the details important for our analysis. The physical laser period is approximately \(t_r=100 \mathrm{ns}\), which we aggregate to
\(K=625\) phase bins of width \(\Delta = 0.16 \mathrm{ns}\). The acquisition contains \(9{,}999{,}925\) laser periods. We split it into \(999\) non-overlapping DED trajectories that are \(10{,}000\) periods long. We use their single-pixel
ranging measurement \(\texttt{2019\_01\_28/FGS\_td198\_2019\_01\_28\_acq14.mat}\)
\cite[Fig.~5]{rapp2021high}, which contains the detection times from repeated laser returns in a high-flux free-running acquisition with physical dead time \(t_d = 198 \mathrm{ns}\), corresponding to approximately \(1238\) aggregated phase bins. For each block and horizon we compute the sufficient statistics \((N_r(T),S_r(T))_{r=0}^{K-1}\) exactly from the event stream and the known dead time. We estimate the proxy ground-truth parameter \(\theta_0=(a_0,\tau_0,b_0)\) directly from the full acq14 high-flux event stream by maximizing the empirical-pulse likelihood over \(a\), \(\tau\), and \(b\), using the experimentally measured laser pulse shape included in the data set. This gives the proxy $\theta_0=(a_0,\tau_0,b_0)
=
(0.2620,\ 17.933,\ 0.002469).$ As a diagnostic for the profile fit, we also compare the fitted intensity \(a_0 f_{\tau_0}(r)+b_0\) with the empirical intensity from \eqref{eq:pr_est} weighted by the bin-wise inverse variances, giving an excellent weighted fit with \(R^2=0.9915\). 

We now compare the MLE, the robust estimator, and the robust one-step estimator. To assess robustness, we compute each estimator under two fitted models: the empirically calibrated laser pulse for \(f\) in \eqref{eq:lidar_lambda_model}, and a misspecified wrapped-Gaussian pulse model as in \eqref{eq:wrapped_gaussian} whose second moment matches that of the calibrated pulse. This is a meaningful experiment because exact calibration data may be inaccessible in applications where environmental changes may affect the pulse shape.  Because the delay \(\tau\) is a continuous parameter, while the calibrated laser pulse is available only as a discrete histogram, we have to interpolate the calibrated pulse before evaluating likelihoods, scores, and Fisher information. After normalizing the pulse samples and forming their empirical CDF, we interpolate with monotone piecewise-cubic Hermite interpolating polynomials (PCHIP), which yield a continuous nonnegative density.\footnote{Since PCHIP interpolation is generally only \(C^1\), the calibrated-pulse experiment is an empirical diagnostic rather than a literal instance of \Cref{ass:lidar}. Also, we expect the \(C^3\) condition to be stronger than necessary, with differentiability in quadratic mean likely sufficient.}

The results in \Cref{fig:real_data_highfluxspl} are consistent with both our theory and the simulation results in \Cref{section:simulation}. We compute the Fisher-information lower bound using \eqref{eq:fisher_bound_compute}. For comparison, we also show the bound obtained under the approximation $t_d=0$, used in previous work \cite{kitichotkul2023role,pediredla2018signal,gupta2019asynchronous,lu2013adaptive}, which underestimates the minimum attainable error by \(75.58\%\). The robust one-step estimator and the MLE initialized from it closely track the Fisher-information prediction over practically relevant horizons, indicating that our asymptotic theory remains predictive in a realistic acquisition setting, despite unavoidable model mismatch and experimental noise. The MLE initialized from the Fourier-inversion estimator often converges to a poor local optimum and therefore fails to reach the efficient regime.

The data also illustrate the robustness gains of our proposed strategy of pairing the one-step estimator with a good pilot. On the one hand, when the MLE is initialized from the robust estimator, the calibrated-pulse likelihood incurs substantially larger risk at the two smallest horizons, by about \(70\%\) and \(64\%\), respectively. This appears to be a finite-sample overfitting effect, against which the misspecified Gaussian pulse acts as a smoother regularizing template. On the other hand, the one-step estimator does not exhibit this pathology. Across all six horizons, its calibrated-pulse risk is \(0.6\%\) to \(4.8\%\) lower than the Gaussian-pulse version and lower than the MLE, which is notable given how visibly the two pulses differ (\Cref{fig:real_data_highfluxspl}).

\begin{figure}[!tb]
\centering
\includegraphics[width=\mywidth]{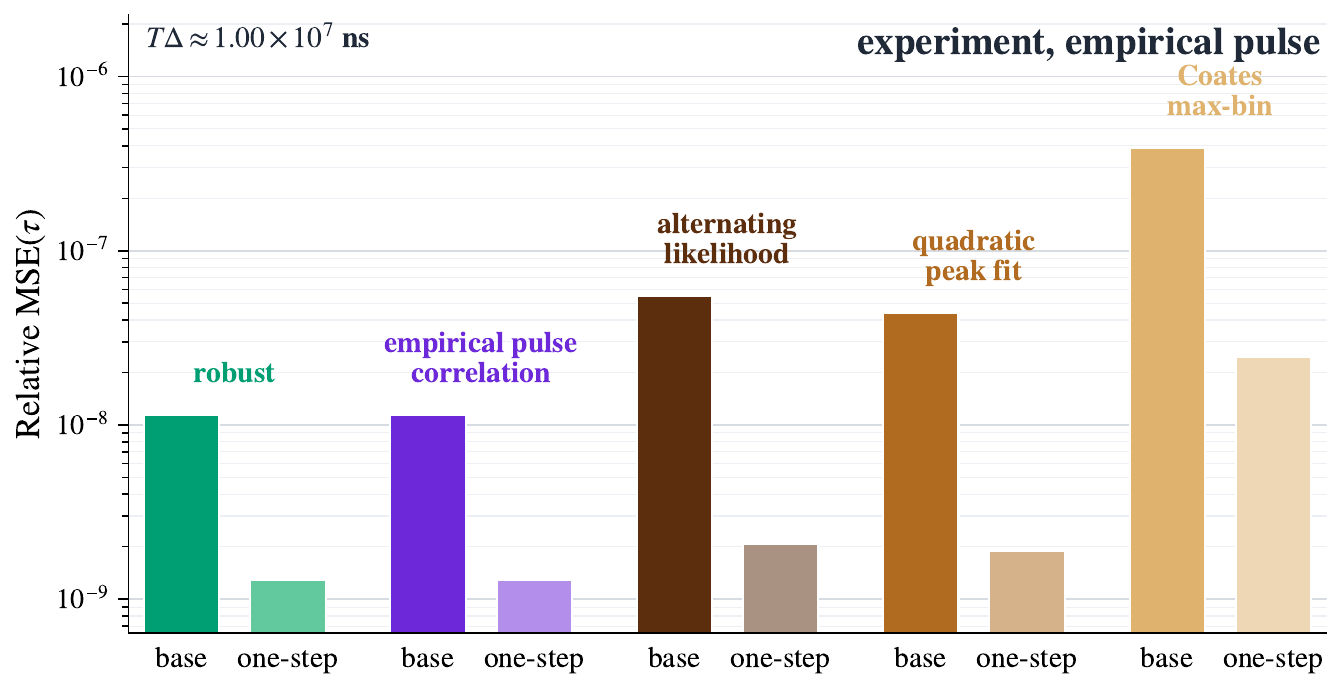}
\caption{
Relative mean-square error for the delay parameter \(\tau\) in a real single-photon lidar experiment~\cite{rapp2021high} at the rounded horizon \(T\Delta \approx 10^7\)~ns using the calibrated empirical pulse. Bars compare each range estimator before correction (``base'') and after one Fisher-scoring one-step update (``one-step'') for our robust pilot, empirical pulse correlation~\cite{mccarthy2013kilometer}, alternating likelihood~\cite{kitichotkul2025free}, quadratic peak fitting~\cite{pawlikowska2017single}, and Coates max-bin~\cite{gupta2019photon}. The one-step update substantially reduces the residual bias of the range-only estimators.}
\label{fig:ose_comparisons}
\end{figure}
Finally, we apply the one-step estimation idea to several estimators from the single-photon lidar imaging literature and compare them with the robust OSE proposed here. We use our implementations of the alternating-likelihood estimator of \citeasnoun{kitichotkul2025free}, empirical pulse correlation of \citeasnoun{mccarthy2013kilometer}, quadratic peak fitting of \citeasnoun{pawlikowska2017single}, and Coates max-bin of \citeasnoun{gupta2019photon}. Since the latter three methods are range-only estimators and do not estimate \((a,b)\) as required for one-step estimation \eqref{eq:one_step_formula}, we fill in these entries by fixing \(\tau\) at the reported range estimate and maximizing the likelihood in \eqref{eq:phase_loglikelihood_intro} over \((a,b)\) using L-BFGS before applying the one-step update. In contrast, our robust matched estimator provides a direct, optimization-free pilot for all three parameters. Although \citeasnoun{kitichotkul2025free} use a ``censoring-based'' estimator to initialize the optimization over \((a,\tau,b)\), our experiments show that the robust pilot is consistently one or more orders of magnitude more accurate than this initializer. We test the one-step updates for these estimators from the literature in the representative calibrated-pulse setting (right side of \Cref{fig:real_data_highfluxspl}), noting that we observed the same qualitative behavior described in this paragraph in the other scenarios (\Cref{fig:nominal}, \Cref{fig:robust}) as well. In \Cref{fig:ose_comparisons}, we evaluate the relative MSE of the prediction at $T\Delta = 10^7$~ns for the respective base estimator and our proposed one-step update, averaging over 100 non-overlapping trajectories. We observe a significant improvement in prediction performance in all of these cases. We briefly point out why the estimators considered from the literature exhibit such large errors when uncorrected. The empirical pulse correlation and quadratic peak fitting ignore the dead-time-induced histogram distortion, leading to some bias~\cite{pawlikowska2017single, mccarthy2013kilometer}.\footnote{Strictly speaking, these two effects prevent the pilots presented in the exposition from satisfying exact \(\sqrt{T}\)-consistency as stated in the asymptotic theory.}
All of these estimators incur bias because they estimate the peak on the discrete histogramming grid, meaning that they are resolution-limited by the grid. They typically compensate by retaining a high-resolution histogram (at the memory cost discussed in \Cref{subsection:likelihood_suff}) or by taking a few MLE update steps from this as a starting point~\cite{gupta2019photon, kitichotkul2025free, pawlikowska2017single, mccarthy2013kilometer}. Our one-step update is therefore an alternative to iterating MLE steps and complementary to grid refinement, while the robust matched pilot provides an optimization-free way to initialize this correction for the full parameter vector \((a,b,\tau)\).

\section{Conclusion}
\label{section:conclusion}
We introduced dead-time event detection (DED) processes to formalize periodic event detection under nonparalyzable dead time, a setting arising in several scientific domains, and developed the corresponding asymptotic theory. In particular, we established local asymptotic normality (LAN) for these models, from which the relevant Fisher-information lower bound and the asymptotic efficiency of both the MLE and one-step estimators follow. Empirically, our lidar experiments, including the real-data example, indicate that the theory is predictive at realistic sample sizes and remains useful under mild model misspecification. The experiments also show the practical appeal of one-step updates as an alternative to MLE optimization for DED processes. A one-step correction substantially improves a range of preliminary estimators, including both the  $O(K \log K)$ robust pilot proposed here and several commonly used estimators from the single-photon lidar literature, while requiring only $O(K)$ additional cost and no iterative nonlinear optimization.

Beyond lidar, it would be interesting to apply our theory and proposed one-step correction to other concrete examples,  such as fluorescence lifetime imaging~\cite{becker2004fluorescence, becker2012fluorescence} or photon rate reconstruction in  non-line-of-sight imaging~\cite{drost2015dead, rapp2020seeing, faccio2020non} and quantum vibration sensing~\cite{lualdi2026quantum, lualdi2025fast}. 

The theory developed here is also useful beyond estimation.  The derivation of  a sufficient statistic in \Cref{prop:suff_active} and the Fisher information  $\mathcal I_0$ in \eqref{eq:I0_def} may inform the design and evaluation of lossy data-compression schemes. Moreover, criteria such as minimizing \(\Tr(\mathcal I_0^{-1})\), as suggested by \Cref{thm:hajek_convolution,thm:local_asymptotic_minimax}, can guide the design of gating schemes by optimizing the limiting gating frequencies \(\gamma\) subject to dead-time constraints. We are currently preparing a manuscript addressing this topic. 

There are several directions for extending our theoretical results.  One is to treat paralyzable (Type II) dead time, where arrivals during the inactive period extend the dead time.  Another is statistical testing, which should follow naturally from the LAN formulation and is important in applications such as astrophysics and high-energy physics. Extending the model to multiple spatially coupled parameters or detector arrays would also provide theory for global multipixel reconstruction methods that exploit spatial structure~\cite{kirmani2014first,shin2015photon,liu2019pile, vicente2013improved, taguchi2011modeling, lualdi2026quantum, lualdi2025fast}. Other natural extensions would be to allow model memory beyond dead time (as in the modeling of neurons), as well as stochastic rate functions (which arise in both neuroscience and astrophysical applications) \cite{pillow2008spatio,keat2001predicting,mcintosh2016deep,berry1997refractoriness, huppenkothen2022accurate}.  It would also be useful to quantify the effect of model misspecification by deriving the limiting error rates of the MLE and one-step estimators under misspecified DED models, extending existing statistical theory for the misspecified setting~\cite{white1982maximum}. In particular, this would provide a theoretical understanding of how sources of error not explicitly accounted for in our model such as timing jitter and pulse shape misspecification \cite{buller2009single, incoronato2021statistical} affect estimation performance.  Finally, a continuous-time version of our  DED framework could connect the present theory to the statistical literature on Poisson processes~\cite{kutoyants2012statistical}, while extending it to the dead-time--censored single-photon setting considered here.

\section*{Acknowledgments}
The authors thank J.~Rapp, R.~Kitichotkul, and V.~K.~Goyal for helpful discussions and various pointers into the single-photon detection literature, and B.~Horwitz for discussion of neural analogues. ChatGPT was used for proofreading and minor language editing. OpenAI Codex was used to assist software development and figure generation. The authors reviewed all AI-assisted text and code and take full responsibility for the manuscript, computations, and figures. This work was supported in part by the Singapore--MIT Alliance for Research and Technology (SMART) Wafer-scale Integrated Sensing Devices based on Optoelectronic Metasurfaces (WISDOM) interdisciplinary research group, and by a grant from the Simons Foundation. 

\section*{Code availability}
The analysis and figure-generation code used to produce the plots in this paper is available  at \url{https://github.com/FredericJorgensen/dead-time-event-estimation}.

\appendices

\section{Gating-frequency convergence of acquisition schemes}
\label[appendix]{app:gating_frequency_convergence_acquisition}
We verify gating-frequency convergence for the free-running and synchronous acquisition schemes discussed in \Cref{subsection:observation_model}.
For a direct empirical comparison of these schemes we refer the reader to the single-photon detection literature~\cite{gupta2019photon, gupta2019asynchronous,antolovic2015nonuniformity, kitichotkul2025free}.
The argument is pointwise in $\theta$, so throughout this appendix we suppress $\theta$ and write $p_r\coloneq p_r(\theta)$. We assume $0<p_r<1$ for every phase $r$, as in \Cref{ass:lan}.

\begin{proposition}
\label{prop:gating_frequency_convergence_fr_syn}
The free-running and synchronous acquisition schemes in \Cref{subsection:observation_model} are gating-frequency convergent.
\end{proposition}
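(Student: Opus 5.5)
We reduce both schemes to a finite-state Markov chain and apply the ergodic theorem. The key observation is that, for each rule, the gate $G_t$ is a deterministic function of a finite-dimensional state $W_t$, and $(W_t)$ is a time-homogeneous Markov chain on a finite set. For the free-running scheme we take
\[
W_t = \bigl(t \bmod K,\; R_t\bigr),
\]
where $R_t\in\{0,1,\dots,D\}$ is the number of remaining dead-time bins; $R_t=0$ means the detector is active. The transitions are as follows. If $R_t>0$, then $R_{t+1}=R_t-1$ deterministically. If $R_t=0$, then $R_{t+1}=D$ with probability $p_{t\bmod K}$ and $R_{t+1}=0$ otherwise. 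For the synchronous scheme we add one binary flag recording whether a detection has already occurred in the current period; the flag resets when $t\bmod K=0$. In both cases $G_t=\Ind{R_t=0}$ (times the flag complement in the synchronous case), and $Z_t$ is independent of the past, so the Markov property follows directly from \eqref{eq:bernoulli_experiments}.

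Next we pass to the period-skeleton chain $V_\ell \coloneq W_{\ell K}$. This chain lives on the finite set of residual states at phase $0$, namely $\{0,\dots,D\}$ or its product with the flag. It is time-homogeneous because the $p_r$ are $K$-periodic. We show that $(V_\ell)$ has a single recurrent class that is aperiodic, using that $0<p_r<1$ for all $r$. Consider a period with no detection, which has probability $\prod_r(1-p_r)>0$. From any state, after at most $\lceil D/K\rceil+1$ periods of no detections the chain reaches the state ``active, no residual dead time'' at phase $0$. This state is therefore accessible from everywhere, and it has a self-loop, so the chain is aperiodic. Because a finite chain with one accessible state has a unique stationary law $\mu$, the ergodic theorem for finite Markov chains applies from any initial law.

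To obtain the phasewise frequency, we write $G_{r+\ell K} = g_r(V_\ell,\xi_\ell)$. Here $\xi_\ell = (Z_{\ell K},\dots,Z_{\ell K+K-1})$ is i.i.d.\ across $\ell$ and independent of $V_\ell$. The pair $(V_\ell,\xi_\ell)$ is again a finite ergodic Markov chain, since $V_{\ell+1}$ is a function of $(V_\ell,\xi_\ell)$ and $\xi_{\ell+1}$ is fresh. Applying the law of large numbers for finite ergodic chains to the bounded function $g_r$ gives
\[
\frac1L\sum_{\ell<L} G_{r+\ell K} \to \gamma_r \coloneq \E_{\mu\otimes\Law(\xi)}\bigl[g_r\bigr]
\]
almost surely, and hence in $P_\theta$-probability. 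This is exactly \eqref{eq:phase_convergent}.

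The main obstacle is the bookkeeping needed to verify that the chosen state is genuinely Markov for the synchronous rule. One must check that $\Phi^{\mathrm{syn}}_t$ depends on the past only through the residual dead time carried over from previous periods and the within-period detection flag; this holds because dead time can extend across the period boundary only through $R$. Once that is verified, the irreducibility argument via the event of a period with no detections is routine.
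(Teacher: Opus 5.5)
Your plan matches the paper's: a period skeleton on $\{0,\dots,D\}$, a detection-free period making state $0$ accessible with a self-loop, and the finite-chain ergodic theorem. For the free-running rule it is correct as written. For the synchronous rule, however, your state does not determine the gate, and this is exactly the bookkeeping step you flagged as the main obstacle. The first factor of $\Phi^{\mathrm{syn}}_t$, namely $1-\sum_{s=m_t+1}^{m_t+D}Y_{t-s}$, looks only at the $D$ bins preceding the period start $q(t)=t-m_t$. The true gate is therefore $G_t=\Ind{R_{q(t)}=0}\prod_{s=1}^{m_t}(1-Y_{t-s})$, so a period that begins in dead time is skipped entirely.

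Your formula $G_t=\Ind{R_t=0}(1-F_t)$, with $F_t$ the within-period detection flag, instead reopens the gate as soon as carried-over dead time expires. Take $1\le D<K$ and a detection in the last bin of period $\ell-1$, which has positive probability. Then $R_{\ell K}=D$, no detection occurs while $R>0$, and at $t=\ell K+D$ your rule gives $G_t=1$ while $\Phi^{\mathrm{syn}}_t=0$. The same example shows that $G_t$ is not even a function of $(t\bmod K,R_t,F_t)$. The state $(m_t,0,0)$ arises both from $R_{q(t)}=0$ (gate open) and from $1\le R_{q(t)}\le m_t$ (gate closed). As written, you prove convergence for a different scheme, with a different $\gamma$.

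The repair is short. One option is to let the flag mean ``closed for the rest of this period'': set it at each period start to $\Ind{R_{q(t)}>0}$, switch it on at a detection, and put $G_t=1-F_t$. Alternatively, bypass the intra-period chain and define $g_r(i,\xi)=\Ind{i=0}\prod_{j<r}(1-\xi_j)$ directly. Then note that $R_{(\ell+1)K}$ is a deterministic function of $(R_{\ell K},\xi_\ell)$. If $R_{\ell K}=i>0$ it equals $(i-K)_+$. If $i=0$ it equals $0$ when the period has no detection, and $(D-(K-1-j^*))_+$ when the first detection is at phase $j^*$.

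With this fix your argument is complete, and it differs from the paper only in the final step. The paper sets $f_r(i)=\E[G_r\mid D_0=i]$ and applies the ergodic theorem to $f_r(\bar D_\ell)$. It then controls $\frac1L\sum_\ell\bigl(G_{r+\ell K}-f_r(\bar D_\ell)\bigr)$ by a martingale-difference variance bound, which gives convergence in probability. Your augmentation by the i.i.d.\ block $\xi_\ell$ makes $(V_\ell,\xi_\ell)$ a finite chain with unique stationary law $\mu\otimes\Law(\xi)$. The ergodic theorem alone then gives almost-sure convergence, with no variance computation.
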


\begin{proof}
Define the dead-time timer $D_t\in\{0,\dots,D\}$ recursively by $D_0 = 0$ and 
$$
D_{t+1}\coloneq 
\begin{cases}
D, & Y_t=1\\
(D_t-1)_+ & Y_t=0 
\end{cases} \; ,
$$
where $D_t=0$ means that the detector is available. For free-running acquisition, $G_t=\Ind{D_t=0}.$
For synchronous acquisition, define the within-period detection indicator variable $W_t$, initialized by $W_0=0$ and updated by
$$
W_{t+1}\coloneq 
\begin{cases}
0, & (t+1)\bmod K=0\\
W_t\vee Y_t & \text{otherwise}
\end{cases} \; .
$$
Then, writing \(q(t)\coloneq t-(t\bmod K)\) for the beginning of the period containing \(t\), 
\[
G_t=\Ind{D_{q(t)}=0}\Ind{W_t=0}.
\]
We have $Y_t=G_t Z_t $ for $(Z_t)_{t\ge0}$ independent with $Z_t\sim\Ber(p_t)$ in either acquisition scheme.

It suffices to give one argument covering both cases. For either rule, the sampled process
$$
\bar D_\ell\coloneq D_{\ell K} \, ,\qquad \ell=0,1,\dots \; ,
$$
is a time-homogeneous Markov chain on the finite state space $\{0,\dots,D\}$, because the dynamics within each period are identical and the variables $(Z_t)$ are independent with $K$-periodic probabilities. 
Since \(p_r<1\) for every phase \(r\), every state can reach \(0\) with positive probability by observing no detections for sufficiently many periods. Hence every closed communicating class contains \(0\). Also, \(0\) has a positive self-loop, since no detection during one full period returns the chain to \(0\). Thus the reachable skeleton chain has a unique closed aperiodic recurrent class containing \(0\), and hence a unique stationary distribution \(\mu\) on that class. The finite-state Markov-chain ergodic theorem gives, for every bounded function $f$,

$$
\frac1L\sum_{\ell=0}^{L-1} f(\bar D_\ell)
\xrightarrow{\mathrm{a.s.}}
\sum_{i=0}^{D}\mu_i f(i) \, .
$$

Fix a phase $r$. For the acquisition rule under consideration, define
$$
f_r(i)\coloneq \E[G_r\mid D_0=i] \, ,
$$
where in the synchronous case we initialize $W_0=0$. Let
$$
\mathcal H_\ell\coloneq \sigma(X_0,\ldots,X_{\ell K-1},D_{\ell K})
$$
be the information available at the start of the period beginning at \(\ell K\).  By \(K\)-periodicity and the Markov property, and since in the synchronous case \(W_{\ell K}=0\) at the beginning of every period,
\[
\E[G_{r+\ell K}\mid \mathcal H_\ell]
=
f_r(\bar D_\ell).
\]
Decompose
$$
\frac1L\sum_{\ell=0}^{L-1} G_{r+\ell K}
=
\frac1L\sum_{\ell=0}^{L-1} f_r(\bar D_\ell)
+\frac1L\sum_{\ell=0}^{L-1}\bigl(G_{r+\ell K}-f_r(\bar D_\ell)\bigr) \, .
$$
The first term converges almost surely to $\sum_i\mu_i f_r(i)$. For the second term, note that 
$$
\E \left[G_{r+\ell K}-f_r(\bar D_\ell)\mid \mathcal H_\ell\right]=0. 
$$
If \(\ell<v\), the \(\ell\)-th summand is \(\mathcal H_v\)-measurable, so the cross terms in the variance computation vanish. The summands are bounded by $1$, and hence
$$
\Var \left(\frac1L\sum_{\ell=0}^{L-1}\bigl(G_{r+\ell K}-f_r(\bar D_\ell)\bigr)\right)
\le \frac1{L^2}\sum_{\ell=0}^{L-1} 1
=O(1/L) \, .
$$
Hence the second term is $o_P(1)$. Thus gating-frequency convergence holds with
$$
\gamma_r\coloneq \sum_{i=0}^{D}\mu_i f_r(i) \, .
$$
Since $r$ was arbitrary, this proves the claim for both acquisition schemes.
\end{proof}

\section{Derivation of the lidar pilot estimators}
\label[appendix]{app:lidar_pilot_derivations}
We give the heuristic derivations behind the two lidar pilot estimators introduced in \Cref{subsection:lidar_estimators}. Both begin from the rate estimates \((\hat\lambda_r)_{r=0}^{K-1}\) in \eqref{eq:pr_est}, for which the log-transform is chosen so that the lidar model becomes additive on the photon-rate scale:
\[
\hat\lambda_r \approx \lambda_r(\theta)=a f_\tau(r)+b.
\]

\subsection{Fourier inversion estimator}
The Fourier inversion estimator is an approximate method-of-moments estimator based on the lowest Fourier modes of \((\hat\lambda_r)_{r=0}^{K-1}\). Define
\[
\hat M_k
=
\frac1K
\sum_{r=0}^{K-1}
\hat\lambda_r \exp(2\pi i k r/K),
\qquad k=0,1.
\]
Using the approximation \(\E(\hat\lambda_r)\approx \lambda_r(\theta)\), the corresponding moment equations are
\[
\hat M_k
=
\frac1K
\sum_{r=0}^{K-1}
\lambda_r(\theta)\exp(2\pi i k r/K),
\qquad k=0,1.
\]
For \(k=0\), the normalization of the pulse gives
\[
\frac1K\sum_{r=0}^{K-1}\lambda_r(\theta)
=
\frac1K\sum_{r=0}^{K-1}
\bigl(a f_{\tau}(r)+b\bigr)
=
\frac{a}{K}+b.
\]

To evaluate the first nonzero Fourier mode, write the Fourier series of the periodic pulse as
\[
\begin{aligned}
f(x)
&=
\sum_{m\in\mathbb Z}
c_m \exp(2\pi i m x/K),
\\
c_m
&=
\frac1K
\int_0^K
f(x)\exp(-2\pi i m x/K) dx.
\end{aligned}
\]
A direct integration over one bin gives
\begin{equation}
\label{eq:app:binned_fourier_coefficients}
\begin{aligned}
f_{\tau}(r)
&=
\sum_{m\in\mathbb Z}
d_m
\exp \left(2\pi i m(\tau-r)/K\right),
\\
d_m
&=
c_m
\exp(\pi i m/K)
\operatorname{sinc} \left(\frac{m}{K}\right).
\end{aligned}
\end{equation}
Therefore
\[
\begin{aligned}
\frac1K
\sum_{r=0}^{K-1}
\lambda_r(\theta)\exp(2\pi i r/K)
&=
a\exp(2\pi i \tau/K)\\
&\quad{}\times
\sum_{\ell\in\mathbb Z}
d_{1+\ell K}\exp(2\pi i \ell \tau),
\end{aligned}
\]
since all Fourier modes except those congruent to \(1\) modulo \(K\) cancel in the discrete sum. We then use the approximation
\[
\sum_{\ell\in\mathbb Z}
d_{1+\ell K}\exp(2\pi i \ell \tau)
\approx d_1,
\]
which is accurate when the higher Fourier coefficients \(d_{1+\ell K}\), \(\ell\neq0\), decay rapidly, for example when \(f\) is smooth relative to the binning resolution. This gives the estimator-defining approximate moment equations
\[
\hat M_0
=
\frac{a}{K}+b,
\qquad
\hat M_1
=
a d_1\exp(2\pi i \tau/K).
\]
Solving these equations yields the Fourier inversion estimator in \eqref{eq:lidar_fourier_estimator}. The coefficient \(d_1\) depends only on the calibrated or modeled pulse template \(f\), and can therefore be precomputed.

\subsection{Robust matched estimator}
The robust matched estimator starts from the same approximate additive model
\[
\hat\lambda_r\approx a f_\tau(r)+b,
\]
but avoids reducing the profile to a single Fourier coefficient. Because the pulses used in lidar are narrow, most phases are background-dominated; this motivates the initial robust background estimate
\[
\hat b_T^{(0)}
=
\operatorname{median}_{r}\hat\lambda_r,
\qquad
J_r=\hat\lambda_r-\hat b_T^{(0)}.
\]
After centering, the approximate signal model is \(J_r\approx a f_\tau(r)\). For a fixed delay \(\tau\), the nonnegative least-squares estimate of \(a\) is
\[
\left(
\frac{
\sum_{r=0}^{K-1} J_r f_{\tau}(r)
}{
\sum_{r=0}^{K-1} f_{\tau}(r)^2
}
\right)_+.
\]
Thus, when the template norm is fixed or varies slowly with \(\tau\), estimating \(\tau\) reduces to finding the shift with largest matched-filter score as defined in \Cref{eq:robust_range} which  can be computed efficiently using an FFT.  The final formulas in \Cref{subsection:lidar_estimators} then project \(J\) onto \(f_{\hat\tau_T^{\mathrm{rob}}}\) to estimate \(a\), and re-estimate \(b\) as the median residual. The second median step makes the final background estimate robust to localized bins that remain poorly described by the shifted pulse template.

\section{Proofs of main results}
\label[appendix]{section:proofs}
\begin{proof}[Proof of \Cref{prop:lan}]
All convergences below are under $P_{\theta_0,t}$. The bound on the phasewise rates in \Cref{ass:lan} implies that there is $\varepsilon>0$ such that
$$
\varepsilon <p_t(\theta) < 1-\varepsilon
$$
for all $\theta \in \Theta$ and indices $t$.  Therefore, the finite-horizon laws $P_{\theta,t}$, $\theta\in \Theta$, have a common support consisting of the feasible paths compatible with the gating rule and the constraint $G_s=0\Rightarrow Y_s=0$ and the likelihood ratio is well-defined. 
Define for $r\in\N$, $y\in\{0,1\}$, and $\theta \in \Theta$
\begin{equation}
\label{eq:single_bin_derivatives}
\begin{aligned}
\varphi_r(y;\theta)&\coloneq p_r(\theta)^y(1-p_r(\theta))^{1-y},\\
m_r(y;\theta)&\coloneq \log \varphi_r(y;\theta),\\
s_r(y;\theta)&\coloneq \nabla_\theta m_r(y;\theta),\\
J_r(y;\theta)&\coloneq \nabla_\theta^2 m_r(y;\theta).
\end{aligned}
\end{equation}
Then,   the $\theta$-independent gating term cancels out in \eqref{eq:factorization} and we obtain,
$$
\log \frac{dP_{\theta,t}}{dP_{\theta_0,t}}
=
\sum_{r=0}^{t-1} G_r\Bigl(m_r(Y_r;\theta)-m_r(Y_r;\theta_0)\Bigr) \, .
$$
Since each $\lambda_r$ extends to a $C^3$ map on an open set containing \(\Theta\) and $p_r(\theta)\in[\varepsilon,1-\varepsilon]$, all first, second, and third derivatives of $m_r(y;\theta)$ are uniformly bounded for $r=0,\dots,K-1$, $y\in\{0,1\}$, and $\theta\in\Theta$. Hence, for bounded $h_t$ and with $\theta_t\coloneq \theta_0+h_t/\sqrt t$, Taylor's theorem
with uniform remainder gives
\begin{mymultline*}
m_r(Y_r;\theta_t)-m_r(Y_r;\theta_0) = \\
\frac1{\sqrt t} h_t^\top s_r(Y_r;\theta_0)
\mathrel{}+\frac1{2t} h_t^\top J_r(Y_r;\theta_0) h_t
+R_{r,t} \, ,
\end{mymultline*}
where differential operators refer to the parameter argument and
$$
|R_{r,t}|
\le C \frac{\|h_t\|^3}{t^{3/2}}
$$
for some constant $C<\infty$ independent of $r,t$.
Summing over $r=0,\dots,t-1$ yields
\begin{equation}
\label{eq:lan_taylor_step}
\Lambda_t(\theta_0+h_t/\sqrt t,\theta_0)
 = 
h_t^\top \Delta_t(\theta_0)
+\frac12 h_t^\top A_t h_t 
+o(1) \, ,
\end{equation}
where
$$
\begin{aligned}
\Delta_t(\theta_0)
&=
\frac1{\sqrt t}\sum_{r=0}^{t-1} G_r s_r(Y_r;\theta_0) \, ,
\\
A_t
&=
\frac1t\sum_{r=0}^{t-1} G_r J_r(Y_r;\theta_0) \, ,
\end{aligned}
$$
and the $o(1)$ is deterministic, hence also $o_{P_{\theta_0,t}}(1)$. We now identify this with the expression of the score in the statement.
Since
$$
\nabla p_r(\theta_0)=(1-p_r(\theta_0))\nabla \lambda_r(\theta_0) \, ,
$$
we have
$$
\begin{aligned}
s_r(y;\theta_0)
&=
\left(\frac{y}{p_r(\theta_0)}-\frac{1-y}{1-p_r(\theta_0)}\right)\nabla p_r(\theta_0)
\\
&=
\frac{y-p_r(\theta_0)}{p_r(\theta_0)} \nabla \lambda_r(\theta_0) \, .
\end{aligned}
$$
Thus, $\Delta_t(\theta_0)$ is exactly \eqref{eq:score_def_corrected}.
We bound the quadratic term in order to conclude the asymptotic formula for the likelihoods. Define the filtration
$$
\mathcal F_r\coloneq \sigma(X_0,\dots,X_r,G_{r+1}),\qquad r\ge 0 \, ,
$$
with $\mathcal F_{-1}\coloneq \sigma(G_0)$. Set
$$
\begin{aligned}
B_t
&\coloneq 
A_t+\frac1t\sum_{r=0}^{t-1}G_r\mathcal I_r(\theta_0)\\
&=
\frac1t\sum_{r=0}^{t-1}
\left(G_rJ_r(Y_r;\theta_0)+G_r\mathcal I_r(\theta_0)\right) \, .
\end{aligned}
$$
We claim that \(B_t\to 0\) in \(L^2\), hence \(B_t\xrightarrow{P_{\theta_0,t}}0\). Note first that for $Y\sim \Ber(p_r(\theta_0))$,
$$
\begin{aligned}
-\E_{\theta_0}[J_r(Y;\theta_0)]
&=
\E_{\theta_0}\!\left[
s_r(Y;\theta_0)s_r(Y;\theta_0)^\top
\right]\\
&=
\mathcal I_r(\theta_0) \, ,
\end{aligned}
$$
by \Cref{lem:second_moment_identity}. Consequently,
$$
\E_{\theta_0} \left[
G_rJ_r(Y_r;\theta_0)+G_r\mathcal I_r(\theta_0)
\mid \mathcal F_{r-1}
\right]
=0 \, .
$$
So the summands $Z_r
\coloneq 
G_rJ_r(Y_r;\theta_0)+G_r\mathcal I_r(\theta_0)$ in $B_t$ are martingale differences. $\|Z_r\|_{\text{F}}^2\leq C'$ is also a.s.~bounded by a constant $C'$ ($C^3$ assumption).
Now
\begin{align*}
\E_{\theta_0}\|B_t\|_{\text{F}}^2
&=
\frac1{t^2}\E_{\theta_0}\left\|\sum_{r=0}^{t-1} Z_r\right\|_{\text{F}}^2 \\
&=
\frac1{t^2}\sum_{r,v=0}^{t-1}\E_{\theta_0} \langle Z_r,Z_v\rangle_{\text{F}} \, .
\end{align*}
If \(r<v\), then \(Z_r\) is \(\mathcal F_{v-1}\)-measurable, and therefore by the tower-property of conditional expectation $\E_{\theta_0} \langle Z_r,Z_v\rangle_{\text{F}} = 0$. Hence all cross terms vanish, and so
\begin{align*}
\E_{\theta_0}\|B_t\|_{\text{F}}^2
&=
\frac1{t^2}\sum_{r=0}^{t-1}\E_{\theta_0}\|Z_r\|_{\text{F}}^2 \leq C'/t \, .
\end{align*}
Thus \(B_t\to 0\) in \(L^2\), hence \(B_t\xrightarrow{P_{\theta_0,t}}0\). Further, by gating-frequency convergence, with \(L=\lfloor t/K \rfloor\),
$$
\frac1t\sum\limits_{r=0}^{t-1}
G_r\mathcal I_r(\theta_0)  =  \mathcal{I}(\theta_0;\gamma) + o_{P_{\theta_0,t}}(1) \, .
$$
In particular, $A_t
=
-\mathcal I(\theta_0;\gamma)+o_{P_{\theta_0,t}}(1).$
Substituting this into \eqref{eq:lan_taylor_step} gives
\begin{mymultline*}
\Lambda_t(\theta_0+h_t/\sqrt t,\theta_0) = \\
h_t^\top\Delta_t(\theta_0)
-\frac12 h_t^\top \mathcal I(\theta_0;\gamma) h_t
+o_{P_{\theta_0,t}}(1) \, ,
\end{mymultline*}
which is exactly \eqref{eq:lan_expansion}.

We now prove that $\Delta_t(\theta_0)\Rightarrow \Normal(0,\mathcal I(\theta_0;\gamma))$. By the   Cram\'er--Wold theorem it is sufficient (in fact, necessary) to prove $u^\top\Delta_t(\theta_0)\Rightarrow \Normal(0,u^\top \mathcal I(\theta_0;\gamma)u)$ for all nonzero $u\in\R^d$; the case $u=0$ is trivial.  Define
$$
D_r(u)\coloneq u^\top G_r s_r(Y_r;\theta_0) \, ,
\qquad r=0,\dots,t-1 \, .
$$
Then
$$
u^\top \Delta_t(\theta_0)=\frac1{\sqrt t}\sum_{r=0}^{t-1}D_r(u) \, .
$$
Set
$$
\sigma_u^2\coloneq u^\top \mathcal I(\theta_0;\gamma)u \, .
$$
By positive definiteness of \(\mathcal I(\theta_0;\gamma)\), \(\sigma_u^2>0\). 
For each $t$, define a scalar martingale-difference array by
$$
\xi_{t,r}\coloneq \frac{D_r(u)}{\sqrt t \sigma_u} \, ,
\qquad
\mathcal G_{t,r}\coloneq \mathcal F_r,
\qquad r=0,\dots,t-1 \, .
$$
Since $G_r$ is $\mathcal F_{r-1}$-measurable and
$$
\E_{\theta_0} \left[s_r(Y_r;\theta_0)\mid \mathcal F_{r-1}\right]=0
\quad\text{on }\{G_r=1\} \, ,
$$
while $D_r(u)=0$ on $\{G_r=0\}$, we have
$$
\E_{\theta_0}[\xi_{t,r}\mid \mathcal G_{t,r-1}]=0 \, .
$$
Thus $\{\xi_{t,r},\mathcal G_{t,r}\}_{r=0}^{t-1}$ is a martingale-difference array. Next, its conditional variance satisfies
\begin{align*}
V_t^2
&\coloneq 
\sum_{r=0}^{t-1}\E_{\theta_0}[\xi_{t,r}^2\mid \mathcal G_{t,r-1}] \\
&=
\frac{1}{t \sigma_u^2}\sum_{r=0}^{t-1}\E_{\theta_0}[D_r(u)^2\mid \mathcal F_{r-1}] \\
&=
\frac{1}{t \sigma_u^2}
u^\top \left(\sum_{r=0}^{t-1}G_r\mathcal I_r(\theta_0)\right) u
\xrightarrow{P_{\theta_0,t}} 1 \, ,
\end{align*}
by our shown convergence of the empirical Fisher information. Finally, since $p_r(\theta_0)\in[\varepsilon,1-\varepsilon]$ and the phase set is finite, there exists $C_u<\infty$ such that
$$
|D_r(u)|\le C_u
\qquad\text{a.s.~for all }r \, .
$$
Hence, for every $\delta>0$,
$$
|\xi_{t,r}|
\le \frac{C_u}{\sqrt t \sigma_u}
<\delta
$$
for all sufficiently large $t$, uniformly in \(r\). Therefore
$$
\sum_{r=0}^{t-1}
\E_{\theta_0} \left[
\xi_{t,r}^2\Ind{|\xi_{t,r}|>\delta}
\mid \mathcal G_{t,r-1}
\right]
=0
$$
for all sufficiently large $t$, so the conditional Lindeberg condition holds.
By the scalar martingale central limit theorem~\cite[Theorem~4]{lalley2014martingale}
$$
\frac{u^\top \Delta_t(\theta_0)}{\sigma_u} = \sum_{r=0}^{t-1}\xi_{t,r}\Rightarrow \Normal(0,1) \, ,
$$
which shows the claimed normality result.

We conclude the proof by proving contiguity. To prove contiguity, take any subsequence.
It has a further subsequence along which $h_t\to h\in\R^d$ (we avoid relabeling).
Along that subsubsequence, by Slutsky's theorem
$$
\begin{aligned}
&\Lambda_t(\theta_0+h_t/\sqrt t,\theta_0)\\
&\quad\Rightarrow
h^\top Z-\frac12 h^\top \mathcal I(\theta_0;\gamma)h \, ,
\\
&\qquad
Z\sim \Normal(0,\mathcal I(\theta_0;\gamma)) \, .
\end{aligned}
$$
By the continuous mapping theorem,
$$
\begin{aligned}
&\exp \Bigl(\Lambda_t(\theta_0+h_t/\sqrt t,\theta_0)\Bigr)\\
&\quad\Rightarrow
\exp \Bigl(h^\top Z-\frac12 h^\top \mathcal I(\theta_0;\gamma)h\Bigr) \, .
\end{aligned}
$$
This limit has expectation $1$ and is almost surely strictly positive. Therefore, by Le Cam's first lemma, $P_{\theta_0+h_t/\sqrt t,t}$ and $P_{\theta_0,t}$ are mutually contiguous. 
\end{proof}
\begin{lemma}
\label{lem:second_moment_identity}
Let $Y\sim \Ber(p_r(\theta))$ in the setting of \Cref{prop:lan}. With the notation from \eqref{eq:single_bin_derivatives},
we have
$$
-\E_\theta[J_r(Y;\theta)]
=
\E_\theta \left[s_r(Y;\theta)s_r(Y;\theta)^\top\right]
=
\mathcal I_r(\theta) \, .
$$
\end{lemma}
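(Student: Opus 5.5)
The identity is a direct computation for a single Bernoulli observation, and I will carry it out in the order: score, its second moment, Hessian, and expected Hessian. Throughout I write $p=p_r(\theta)$, $\lambda=\lambda_r(\theta)$ and $g=\nabla\lambda_r(\theta)$. \Cref{ass:lan}(1) gives $p\in[\varepsilon,1-\varepsilon]$, so all logarithms and derivatives are well defined. Condition (2) lets us differentiate twice.

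\emph{Step 1 (score and its second moment).} From $p=1-e^{-\lambda}$ we have $\nabla p=(1-p)g$. As already computed in the proof of \Cref{prop:lan},
\[
s_r(y;\theta)=\frac{y-p}{p}\,g .
\]
Since $\E_\theta[(Y-p)^2]=p(1-p)$, it follows that
\[
\E_\theta[s_rs_r^\top]=\frac{p(1-p)}{p^2}\,gg^\top=\frac{1-p}{p}\,gg^\top=\mathcal I_r(\theta),
\]
which matches \eqref{eq:Ir_def}.

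\emph{Step 2 (Hessian).} Write $s_r=c(y,\theta)\,g$ with $c=y/p-1$. Differentiating once more gives
\[
J_r(y;\theta)=g\,\nabla c^\top+c\,\nabla^2\lambda_r(\theta).
\]
The gradient of $c$ is
\[
\nabla c=-\frac{y}{p^2}\nabla p=-\frac{y(1-p)}{p^2}\,g.
\]
Hence
\[
J_r=-\frac{y(1-p)}{p^2}\,gg^\top+\Bigl(\frac{y}{p}-1\Bigr)\nabla^2\lambda_r .
\]

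\emph{Step 3 (expected Hessian).} Take expectations using $\E_\theta Y=p$. The coefficient of $\nabla^2\lambda_r$ has mean zero. The first term has mean $-\frac{1-p}{p}\,gg^\top$. Therefore $-\E_\theta[J_r]=\mathcal I_r(\theta)$, which completes the identity.

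Nothing here is a real obstacle. The only point needing care is to retain the $\nabla^2\lambda_r$ term in the Hessian and then observe that it vanishes in expectation because $\E_\theta[Y/p-1]=0$. Alternatively, one could invoke the general information identity obtained by differentiating $\sum_y\varphi_r(y;\theta)=1$ twice. That is legitimate here since the sum is finite. I would nevertheless present the explicit computation above, because it also confirms the specific form \eqref{eq:Ir_def}.
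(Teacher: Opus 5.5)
Your computation is correct and is essentially the paper's proof: a direct calculation of the single-bin Bernoulli score, its second moment, and the expected Hessian. The only difference is cosmetic. The paper works with $\nabla p_r$, obtains $\frac{1}{p_r(1-p_r)}\nabla p_r\nabla p_r^\top$ for both quantities, and substitutes $\nabla p_r=(1-p_r)\nabla\lambda_r$ at the end. You work in the $\lambda_r$-parametrization throughout and explicitly show that the $\nabla^2\lambda_r$ term vanishes in expectation, a step the paper leaves implicit.
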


\begin{proof}
Since
$$
\nabla m_r(y;\theta)
=
\frac{y-p_r(\theta)}{p_r(\theta)(1-p_r(\theta))} \nabla p_r(\theta) \, ,
$$
we obtain
\begin{align*}
\E_\theta&
\left[\nabla m_r(Y;\theta)\nabla m_r(Y;\theta)^\top\right]\\
&=
\frac{\mathrm{Var}_\theta(Y)}{p_r(\theta)^2(1-p_r(\theta))^2}
\nabla p_r(\theta)\nabla p_r(\theta)^\top \\
&=
\frac{1}{p_r(\theta)(1-p_r(\theta))}
\nabla p_r(\theta)\nabla p_r(\theta)^\top \, .
\end{align*}
On the other hand, a direct differentiation of
$$
m_r(y;\theta)=y\log p_r(\theta)+(1-y)\log(1-p_r(\theta))
$$
and taking expectations using $\E_\theta[Y]=p_r(\theta)$ gives
$$
-\E_\theta[\nabla^2 m_r(Y;\theta)]
=
\frac{1}{p_r(\theta)(1-p_r(\theta))}
\nabla p_r(\theta)\nabla p_r(\theta)^\top \, .
$$
Finally, since
$$
\nabla p_r(\theta)=(1-p_r(\theta))\nabla \lambda_r(\theta) \, ,
$$
both expressions equal
$$
\frac{1-p_r(\theta)}{p_r(\theta)}
\nabla \lambda_r(\theta)\nabla \lambda_r(\theta)^\top
=
\mathcal I_r(\theta) \, .
$$
\end{proof}
\begin{proof}[Proof of \Cref{thm:hajek_convolution}]
Define the estimator of the local parameter $\hat h_T \coloneq  \sqrt{T}(\hat\theta_T - \theta_0).$ By regularity, $\hat h_T - h \Rightarrow \mathcal L_{\theta_0}$ under $P_{\theta_0+h/\sqrt T,T}$, independently of $h$.
By \Cref{prop:lan} and the H\'ajek convolution theorem~\cite[Theorem~3]{le2000asymptotics}, we therefore have
$$
\mathcal L_{\theta_0}
=
\Normal\left(0,\mathcal I_0^{-1}\right) * \nu_{\theta_0}
$$
for some probability measure $\nu_{\theta_0}$ on $\R^d$.
\end{proof}

\begin{proof}[Proof of \Cref{thm:local_asymptotic_minimax}]
Given an estimator \(\hat\theta_T\), pass to the local parametrization
$$
h=\sqrt T(\theta-\theta_0),
\qquad
\hat h_T=\sqrt T(\hat\theta_T-\theta_0) \, .
$$
Then
$$
\sqrt T(\hat\theta_T-\theta)=\hat h_T-h \, .
$$
By \Cref{prop:lan}, the local experiments around \(\theta_0\) converge in the LAN sense with information matrix \(\mathcal I_0\).  Therefore the hypotheses of the local asymptotic minimax theorem~\cite[Section~6.6, Theorem~1]{le2000asymptotics} are satisfied.  Applying that theorem to the local estimator \(\hat h_T\) and substituting back \(h=\sqrt T(\theta-\theta_0)\) gives \eqref{eq:laminimax_general}.
\end{proof}

\begin{proof}[Proof of \Cref{thm:mle_efficient}]
Using the phase counts $N_r(T),S_r(T)$ from \Cref{subsection:likelihood_suff} and \eqref{eq:phase_counts}, we have
\begin{mymultline*}
\ell_T(\theta)
=
\sum_{r=0}^{K-1}
\Bigl[
S_r(T)\log p_r(\theta)\\
+
\bigl(N_r(T)-S_r(T)\bigr)\log\bigl(1-p_r(\theta)\bigr)
\Bigr] \, .
\end{mymultline*}
We can ignore the $\theta$-independent part here because it has no impact on the maxima of $\ell_T(\theta)$.
Set $p_r^0\coloneq p_r(\theta_0)$ and define
$$
M_r(T)
\coloneq 
S_r(T)-p_r^0N_r(T)
=
\sum_{\substack{0\le t<T\\ t\bmod K=r}}G_t\bigl(Y_t-p_r^0\bigr) \, .
$$
Because $G_t$ is predictable and, for \(t\bmod K=r\), conditional on $G_t=1$ we have $Y_t\sim\Ber(p_r^0)$,
the increments of $M_r(T)$ are bounded martingale differences with respect to our filtration $\mathcal F_t$ from before. Hence cross-terms vanish and we can expand the diagonal terms as
$$
\E_{\theta_0}M_r(T)^2
=
\sum_{\substack{0\le t<T\\ t\bmod K=r}}\E_{\theta_0} \left[G_t p_r^0(1-p_r^0)\right]
\le T \, ,
$$
so $M_r(T)/T\to 0$ in $L^2$, hence $M_r(T)/T\xrightarrow{P_{\theta_0,T}}0$. Together with gating-frequency convergence,

\begin{gather*}
\frac{S_r(T)}{T}\xrightarrow{P_{\theta_0,T}}\frac{\gamma_r(\theta_0)p_r^0}{K} \, ,
\\
\frac{N_r(T)-S_r(T)}{T}\xrightarrow{P_{\theta_0,T}}\frac{\gamma_r(\theta_0)(1-p_r^0)}{K} \, .
\end{gather*}
Since $\log p_r(\theta)$ and $\log(1-p_r(\theta))$ are uniformly bounded on \(\Theta\) by Condition (1) in \Cref{ass:lan}, we obtain
$$
\sup_{\theta\in\Theta}\left|\frac{\ell_T(\theta)}{T}-m(\theta)\right|
\xrightarrow{P_{\theta_0,T}}0 \, ,
$$
where
$$
m(\theta)
\coloneq 
\frac1K\sum_{r=0}^{K-1}\gamma_r(\theta_0)
\Bigl[
p_r^0\log p_r(\theta)+(1-p_r^0)\log\bigl(1-p_r(\theta)\bigr)
\Bigr] \, .
$$
Moreover, defining $R_+ \coloneq  \{  r : \gamma_r(\theta_0) > 0  \}$, we have
$$
\begin{aligned}
m(\theta)-m(\theta_0)
&=
-\frac1K\sum_{r\in R_+}\gamma_r(\theta_0)\\
&\qquad\qquad
\cdot D_{\mathrm{KL}} \left(\Ber(p_r^0) \middle\| \Ber(p_r(\theta))\right)\\
&\le 0 \, ,
\end{aligned}
$$
with equality if and only if $p_r(\theta)=p_r(\theta_0)$ for all $r\in R_+$ by (3) in \Cref{ass:lan} and injectivity of $\lambda \mapsto 1 - \exp(-\lambda)$.  In particular, $\theta_0$ is the unique maximizer of $m$ and by continuity of $m$ and compactness of $\Theta$, \(\sup\limits_{\theta\in \Theta: \|\theta -  \theta_{0}\|_2 \geq \varepsilon} m(\theta)<m\left(\theta_{0}\right)\) for every $\varepsilon > 0$. Therefore, Theorem~5.7 of \citeasnoun{van2000asymptotic} yields
$$
\hat\theta_T\xrightarrow{P_{\theta_0,T}}\theta_0 \, .
$$
This shows consistency.

Next, to show asymptotic normality, use the notation introduced in \eqref{eq:single_bin_derivatives}. Then
$$
\begin{aligned}
\nabla \ell_T(\theta)
&=\sum_{t=0}^{T-1}G_t s_t(Y_t;\theta) \, ,\\
\nabla^2 \ell_T(\theta)
&=\sum_{t=0}^{T-1}G_t J_t(Y_t;\theta) \, .
\end{aligned}
$$
At $\theta_0$,
$$
T^{-1/2}\nabla \ell_T(\theta_0)=\Delta_T(\theta_0) \, .
$$
By \Cref{prop:lan},
$$
\Delta_T(\theta_0)\Rightarrow \Normal(0,\mathcal I_0) \, .
$$
Work on the event that \(\hat\theta_T\) and the line segment from \(\hat\theta_T\) to \(\theta_0\) lie in the interior of \(\Theta\). 
Because $\theta_0$ is an interior point and $\hat\theta_T$ is consistent, this event has probability tending to one. Next, define
$$
\bar J_T
\coloneq 
-\frac1T\int_0^1
\nabla^2\ell_T \bigl(\theta_0+u(\hat\theta_T-\theta_0)\bigr) du \, .
$$
By the fundamental theorem of calculus,
\begin{equation}
\label{eq:fund_calc}
\nabla \ell_T(\hat\theta_T)
=
\nabla \ell_T(\theta_0)-T\bar J_T(\hat\theta_T-\theta_0) \, .
\end{equation}
Note that by the phasewise rates extending to $C^3$ maps on an open set containing \(\Theta\), $J_t(y;\cdot)$ is Lipschitz on \(\Theta\), uniformly in $t$ and $y\in\{0,1\}$, with a constant $L_\Theta$. Set $\delta_T\coloneq \hat\theta_T-\theta_0$. Applying this fact and the triangle inequality gives
\begin{align*}
\left\|\bar J_T+T^{-1}\nabla^2\ell_T(\theta_0)\right\|_{\text{F}}
&\leq T^{-1} \int_0^1
\Bigl\|\nabla^2\ell_T \bigl(\theta_0+u\delta_T\bigr)\\
&\qquad\qquad\qquad
{} - \nabla^2\ell_T(\theta_0) \Bigr\|_{\text{F}}du\\
&\leq T^{-1}\sum_{t=0}^{T-1}\int_0^1
G_t L_\Theta u\left\|\delta_T\right\|_2du\\
&\leq {L_\Theta} \int_0^1u\left\|\delta_T\right\|_2du\\
&\leq {L_\Theta}\left\|\delta_T\right\|_2 \, .
\end{align*}
Together with \(\delta_T=o_{P_{\theta_0,T}}(1)\), this bound implies
\[
\left\|\bar J_T+T^{-1}\nabla^2\ell_T(\theta_0)\right\|_{\text{F}}=o_{P_{\theta_0,T}}(1) \, .
\]
As shown in the proof of \Cref{prop:lan}, $-\frac{1}{T}\nabla^2\ell_T(\theta_0) = \mathcal I_0 + o_{P_{\theta_0,T}}(1)$, and therefore we can conclude that
$$
\bar J_T= \mathcal I_0 +  o_{P_{\theta_0,T}}(1) \, .
$$
The first-order condition for the interior maximizer gives \(\nabla\ell_T(\hat\theta_T)=0\). Combining this with  \eqref{eq:fund_calc}, we get
$$
0=\nabla\ell_T(\theta_0)-T\bar J_T(\hat\theta_T-\theta_0) \, .
$$
Multiplying by $1/\sqrt{T}$ implies that
\begin{equation}
\label{eq:rescaled_taylor}
o_{P_{\theta_0,T}}(1)=\Delta_T(\theta_0)-\bar J_T \sqrt{T}(\hat\theta_T-\theta_0).
\end{equation}
By $\bar J_T= \mathcal I_0 +  o_{P_{\theta_0,T}}(1)$, $\bar J_T$ is invertible with probability tending to $1$, and we can invert \eqref{eq:rescaled_taylor}  to get
\begin{equation}
\label{eq:linearization_final}
\sqrt{T} (\hat\theta_T-\theta_0)
=\bar J_T^{-1}\Delta_T(\theta_0) +  o_{P_{\theta_0,T}}(1).
\end{equation}
Now, Slutsky's theorem gives
$$
\sqrt{T} (\hat\theta_T-\theta_0)\Rightarrow \Normal(0,\mathcal I_0^{-1}) \, .
$$
Finally, we show regularity using Le Cam's third lemma and \eqref{eq:linearization_final}. Fix $h\in\R^d$.
By \Cref{prop:lan} we also have that
$$
\log\frac{dP_{\theta_0+h/\sqrt T,T}}{dP_{\theta_0,T}}
=
h^\top\Delta_T(\theta_0)
-\frac12 h^\top\mathcal I_0 h
+
o_{P_{\theta_0,T}}(1) \, .
$$
Since $\Delta_T(\theta_0)\Rightarrow \Normal(0,\mathcal I_0)$ under $P_{\theta_0,T}$, by Slutsky we get the joint convergence
\begin{gather*}
\begin{pmatrix}
\Delta_T(\theta_0)\\
\log\frac{dP_{\theta_0+h/\sqrt T,T}}{dP_{\theta_0,T}}
\end{pmatrix} \Rightarrow
\Normal\left(\mu_h,\Sigma_h\right) \, ,\\
\mu_h
\coloneq 
\begin{pmatrix}
0\\[1mm]
-\frac12 h^\top \mathcal I_0 h
\end{pmatrix} \, ,
\\
\Sigma_h
\coloneq 
\begin{pmatrix}
\mathcal I_0 & \mathcal I_0 h\\[1mm]
h^\top \mathcal I_0 & h^\top \mathcal I_0 h
\end{pmatrix}
\end{gather*}
under $P_{\theta_0,T}$. Le Cam's third lemma now yields
$$
\Delta_T(\theta_0)\Rightarrow
\Normal(\mathcal I_0 h,\mathcal I_0)
$$
under $P_{\theta_0+h/\sqrt T,T}$. We want to substitute this into the linearization \eqref{eq:linearization_final}. Note that by contiguity, the $o_{P_{\theta_0,T}}(1)$ becomes $o_{P_{\theta_0 + h/\sqrt{T},T}}(1)$ and therefore \(\bar J_T\) also has the same limit. In equations, this means that
$$
\sqrt T\left(\hat\theta_T-\theta_0-\frac{h}{\sqrt T}\right)
=
\mathcal I_0^{-1}\Delta_T(\theta_0)-h
+
o_{P_{\theta_0+h/\sqrt T,T}}(1) \, ,
$$
where we also  subtracted $h$ on both sides.  Slutsky's theorem concludes the proof
$$
\sqrt T
\left(
\hat\theta_T-\theta_0-\frac{h}{\sqrt T}
\right)
\Rightarrow
\Normal(0,\mathcal I_0^{-1}) \, .
$$
\end{proof}

\begin{proof}[Proof of \Cref{thm:onestep_efficient}]
Since $\sqrt T(\tilde\theta_T-\theta_0)=O_{P_{\theta_0,T}}(1)$, we have $\tilde\theta_T\xrightarrow{P_{\theta_0,T}}\theta_0$.
Because each $\lambda_r$ extends to a $C^3$ map on an open set containing \(\Theta\) and is bounded away from $0$ and $\infty$ on \(\Theta\), the maps
$\theta\mapsto \mathcal I_r(\theta)$ are $L$-Lipschitz on \(\Theta\), with constant $L > 0$ uniformly in $r=0,\dots,K-1$.
First we show that the empirical Fisher at the pilot is consistent.
\begin{mymultline*}
\bigl\|
\mathcal I \bigl(\tilde\theta_T;\hat\gamma_T\bigr)
-
\mathcal I \bigl(\theta_0;\hat\gamma_T\bigr)
\bigr\|
\\
 \le
\frac1K\sum_{r=0}^{K-1}\hat\gamma_{r,T} L\|\tilde\theta_T-\theta_0\|
\le
L\|\tilde\theta_T-\theta_0\| \, ,
\end{mymultline*}
and hence,
$$
\mathcal I \bigl(\tilde\theta_T;\hat\gamma_T\bigr)
=
\mathcal I \bigl(\theta_0;\hat\gamma_T\bigr)+o_{P_{\theta_0,T}}(1) \, .
$$
By gating-frequency convergence, $\mathcal I \bigl(\theta_0;\hat\gamma_T\bigr) =\mathcal I_0+o_{P_{\theta_0,T}}(1)$ and, therefore, 
$$
\mathcal I \bigl(\tilde\theta_T;\hat\gamma_T\bigr)=\mathcal I_0+o_{P_{\theta_0,T}}(1) \, .
$$
Since $\mathcal I_0$ is positive definite, $\mathcal I \bigl(\tilde\theta_T;\hat\gamma_T\bigr)$ is invertible with probability tending to $1$.
Using the exact same argument as the one following \eqref{eq:fund_calc} in the proof of \Cref{thm:mle_efficient} and defining $\bar J_T$ with $\tilde \theta_T$ instead of $\hat \theta_T$, we get the expansion
$$
U_T(\tilde\theta_T)
=
U_T(\theta_0)-T\bar J_T(\tilde\theta_T-\theta_0)
$$
with $\bar J_T = \mathcal I_0 +o_{P_{\theta_0,T}}(1)$.  Since $T^{-1/2}U_T(\theta_0)=\Delta_T(\theta_0)$, this becomes
$$
\frac1{\sqrt T}U_T(\tilde\theta_T)
=
\Delta_T(\theta_0)-\bar J_T\sqrt T(\tilde\theta_T-\theta_0) \, .
$$
Substituting into the definition of $\hat\theta_T^{\mathrm{os}}$,
\begin{align*}
\sqrt T(\hat\theta_T^{\mathrm{os}}-\theta_0)
&=
\sqrt T(\tilde\theta_T-\theta_0)
\\
&\quad+
\bigl[\mathcal I(\tilde\theta_T;\hat\gamma_T)\bigr]^{-1}
\frac1{\sqrt T}U_T(\tilde\theta_T) \\
&=
\bigl[\mathcal I(\tilde\theta_T;\hat\gamma_T)\bigr]^{-1}
\Delta_T(\theta_0)
\\
&\quad {} +
\Bigl[
I_d-
\bigl[\mathcal I(\tilde\theta_T;\hat\gamma_T)\bigr]^{-1} \times \bar J_T
\Bigr]\\
&\qquad\qquad{}\times
\sqrt T(\tilde\theta_T-\theta_0) \, .
\end{align*}
Now \Cref{prop:lan} gives $\Delta_T(\theta_0)=O_{P_{\theta_0,T}}(1)$, the assumption gives
$\sqrt T(\tilde\theta_T-\theta_0)=O_{P_{\theta_0,T}}(1)$, and we already proved
\begin{gather*}
\bigl[\mathcal I(\tilde\theta_T;\hat\gamma_T)\bigr]^{-1}
=
\mathcal I_0^{-1}+o_{P_{\theta_0,T}}(1) \, ,
\\
I_d-\bigl[\mathcal I(\tilde\theta_T;\hat\gamma_T)\bigr]^{-1}\bar J_T
=
o_{P_{\theta_0,T}}(1) \, .
\end{gather*}
Hence
$$
\sqrt T(\hat\theta_T^{\mathrm{os}}-\theta_0)
=
\mathcal I_0^{-1}\Delta_T(\theta_0)+o_{P_{\theta_0,T}}(1) \, .
$$
\Cref{prop:lan} further yields
$$
\Delta_T(\theta_0)\Rightarrow \Normal(0,\mathcal I_0) \, ,
$$
so Slutsky's theorem gives
$$
\sqrt T(\hat\theta_T^{\mathrm{os}}-\theta_0)\Rightarrow \Normal(0,\mathcal I_0^{-1}) \, .
$$
Fix now $h\in\R^d$. By \Cref{prop:lan}, $P_{\theta_0+h/\sqrt{T},T}$ is contiguous with respect to $P_{\theta_0,T}$.
Therefore every $o_{P_{\theta_0,T}}(1)$ term above is also $o_{P_{\theta_0+h/\sqrt{T},T}}(1)$, and so
$$
\sqrt T(\hat\theta_T^{\mathrm{os}}-\theta_0)
=
\mathcal I_0^{-1}\Delta_T(\theta_0)+o_{P_{\theta_0+h/\sqrt{T},T}}(1) \, .
$$
As in the proof of \Cref{thm:mle_efficient}, by \Cref{prop:lan} and Le Cam's third lemma,
$$
\Delta_T(\theta_0)\Rightarrow \Normal(\mathcal I_0 h,\mathcal I_0)
\qquad\text{under }P_{\theta_0+h/\sqrt{T},T} \, .
$$
Therefore,
$$
\begin{aligned}
\sqrt T\left(\hat\theta_T^{\mathrm{os}}-\theta_0-\frac{h}{\sqrt{T}}\right)
&=
\mathcal I_0^{-1}\Delta_T(\theta_0)-h+o_{P_{\theta_0+h/\sqrt{T},T}}(1)
\\
&\Rightarrow \Normal(0,\mathcal I_0^{-1}) \, .
\end{aligned}
$$
This is exactly regularity at $\theta_0$ with limit law $\Normal(0,\mathcal I_0^{-1})$.
\end{proof}

\bibliographystyle{IEEEtran}
\bibliography{references}

\end{document}